\documentclass[a4paper]{article}


\usepackage{a4wide}
\usepackage{amsmath,amssymb,amsthm}
\usepackage{graphicx}

\usepackage[ruled]{algorithm}
\usepackage{algorithmicx}
\usepackage{paralist}
\usepackage[noend]{algpseudocode}
\usepackage{color}
\usepackage{ifthen}


\title{Online Algorithms for Multi-Level Aggregation%
\thanks{Research partially supported by NSF grants CCF-1536026, CCF-1217314 and OISE-1157129,
	Polish NCN grants DEC-2013/09/B/ST6/01538, 2015/18/E/ST6/00456,
	project 14-10003S of GA \v{C}R and GAUK project 548214.
}}

\author{Marcin Bienkowski%
\thanks{Institute of Computer Science, University of Wroc{\l}aw, Poland}
\and
Martin B\"{o}hm%
\thanks{Computer Science Institute, Charles University, Czech Republic}
\and
Jaroslaw Byrka\footnotemark[2]
\and
Marek Chrobak%
\thanks{Department of Computer Science, University of California at
  Riverside, USA}
\and
Christoph D\"{u}rr%
\thanks{Sorbonne Universit\'{e}s, UPMC Univ Paris 06, CNRS, LIP6, Paris, France}
\and
Luk\'{a}\v{s} Folwarczn\'{y}\footnotemark[3]
\and
{\L}ukasz Je\.{z}\footnotemark[2]
\and
Ji\v{r}\'{\i} Sgall\footnotemark[3]
\and
Nguyen Kim Thang%
\thanks{IBISC, Universit\'{e} d'Evry Val d'Essonne, France}
\and
Pavel Vesel\'{y}\footnotemark[3]}


\newtheorem{theorem}{Theorem}[section]

\newtheorem{lemma}[theorem]{Lemma}
\newtheorem{corollary}[theorem]{Corollary}

\newcommand{\etal}{et~al.}



\newcommand{\barO}{\overline{O}}

\newcommand{\barX}{\overline{X}}
\newcommand{\barY}{\overline{Y}}

\newcommand{\calA}{{\cal A}}
\newcommand{\calB}{{\cal B}}

\newcommand{\calJ}{{\cal J}}

\newcommand{\calP}{{\cal P}}
\newcommand{\calR}{{\cal R}}

\newcommand{\calT}{{\cal T}}

\newcommand{\bbI}{\mathbb{I}}

\newcommand{\eps}{{\varepsilon}}

\newcommand{\e}{\mathrm{e}}



\newcommand{\braced}[1]{{ \left\{ #1 \right\} }}
\newcommand{\angled}[1]{{ \left\langle #1 \right\rangle }}




\newcommand{\cost}{\mbox{\rm\textsf{cost}}}

\newcommand{\wcost}{\mbox{\rm\textsf{wcost}}}
\newcommand{\scost}{\mbox{\rm\textsf{scost}}}
\newcommand{\opt}{\mbox{\rm\textsf{opt}}}
\newcommand{\alg}{\mbox{\rm\textsf{alg}}}

\newcommand{\length}{\ell}
\newcommand{\depth}{\textit{depth}}
\newcommand{\parent}{\textit{parent}}
\newcommand{\surplus}{{\delta}}

\newcommand{\reals}{{\mathbb R}}

\newcommand{\half}{{\mbox{$\frac{1}{2}$}}}


\newcommand{\algDoubling}{\mbox{\sc OnlDoubling}}
\newcommand{\algCovSubT}{\mbox{\sc CovSubT}}
\newcommand{\algLBL}{\mbox{\sc OffLByL}}
\newcommand{\OnAlgTreesDeadlines}{{\sc OnlTreeD}}
\newcommand{\OnAlgTreesGeneral}{{\sc OnlTree}}
\newcommand{\DLINE}{\mbox{\textsc{OnlLine}}}


\newcommand{\NP}{{\mathbb{NP}}}
\newcommand{\APX}{{\mathbb{APX}}}
\newcommand{\JRPD}{\mbox{\rm\textsf{JRP-D}}}
\newcommand{\JRP}{\mbox{\rm\textsf{JRP}}}
\newcommand{\TCPAP}{\mbox{\rm\textsf{TCP-AP}}}
\newcommand{\MLAP}{\mbox{\rm\textsf{MLAP}}}
\newcommand{\MLAPL}{\mbox\rm{\textsf{MLAP-L}}}
\newcommand{\MLAPD}{\mbox{\rm\textsf{MLAP-D}}}
\newcommand{\SPMLAP}{\mbox{\rm\textsf{1P-MLAP}}}
\newcommand{\SPMLAPD}{\mbox{\rm\textsf{1P-MLAP-D}}}
\newcommand{\SPMLAPL}{\mbox{\rm\textsf{1P-MLAP-L}}}


\newcommand{\schedS}{\mbox{\rm\textsf{S}}}
\newcommand{\optschedS}{\mbox{\rm\textsf{S}}^\ast}
\newcommand{\pseudoschedS}{\overline{\mbox{\rm\textsf{S}}}}


\newcommand{\expiration}{\theta}

\newcommand{\shift}{{\textit{shift}}}

\newcommand{\trignode}{\sigma}
\newcommand{\advtrans}{\mbox{\rm\textsf{nos}}}
\newcommand{\urgentnodes}{\mbox{\rm\textsf{Urgent}}}
\newcommand{\treematurity}{\mu}
\newcommand{\vertmaturity}{M}

\newcommand{\prv}[2]{\mbox{\rm Prev}^#1(#2)}
\newcommand{\nxt}[3]{\mbox{\rm Next}^#1(#2,#3)}
\newcommand{\matvt}{{t^\ast}}


\newcommand{\ALG}{\textsc{Alg}}



\newcommand{\onefourth}{{\textstyle\frac{1}{4}}}

\newcommand{\onefifth}{{\textstyle\frac{1}{5}}}


\begin{document}

\maketitle

\begin{abstract}
In the \emph{Multi-Level Aggregation Problem} ({\MLAP}), requests arrive at
the nodes of an edge-weighted tree $\calT$, and have to be served 
eventually.  A \emph{service} is defined as a subtree $X$ of $\calT$ that 
contains its root. This subtree~$X$ serves all
requests that are pending in the nodes of $X$, and the cost of this service is
equal to the total weight of $X$. Each request also incurs waiting cost
between its arrival and service times. The objective is to minimize the total
waiting cost of all requests plus the total cost of all service subtrees.
{\MLAP} is a generalization of some well-studied optimization problems; for
example, for trees of depth $1$, {\MLAP} is equivalent to the TCP
Acknowledgment Problem, while for trees of depth $2$, it is equivalent to the
Joint Replenishment Problem. Aggregation problems for trees of arbitrary depth
arise in multicasting, sensor networks, communication in organization
hierarchies, and in supply-chain management. The instances of {\MLAP}
associated with these applications are naturally online, in the sense that
aggregation decisions need to be made without information about future
requests.

Constant-competitive online algorithms are known for {\MLAP} with one or two
levels. However, it has been open whether there exist constant competitive
online algorithms for trees of depth more than $2$.  Addressing this open
problem, we give the first constant competitive online algorithm for trees
of arbitrary (fixed) depth. The competitive ratio is $O(D^42^D)$,
where $D$ is the depth of $\calT$. The algorithm works for arbitrary waiting
cost functions, including the variant with deadlines.
We include several additional results in the paper. We show that a standard
lower-bound technique for {\MLAP}, based on so-called \emph{Single-Phase}
instances, cannot give super-constant lower bounds (as a function of the tree
depth). This result is established by giving an online algorithm with optimal
competitive ratio $4$ for such instances on arbitrary trees. 
We prove that, in the offline case, these instances can be solved to
optimality in polynomial time.
We also study the {\MLAP} variant when the tree is a path, for which we give
a lower bound of $4$ on the competitive ratio, improving the lower bound known 
for general {\MLAP}.  We complement this with a matching upper bound for the 
deadline setting. 
In addition, for arbitrary trees,
we give a simple 2-approximation algorithm for offline {\MLAP} 
with deadlines.
\end{abstract}



\section{Introduction}
\label{sec: introduction}

Certain optimization problems can be formulated as aggregation problems. They
typically arise when expensive resources can be shared by multiple agents, who
incur additional expenses for accessing a resource. 
For example, costs may be associated with
waiting until the resource is accessible, or, if the resource is not in the
desired state, a costly setup or retooling may be required.

\paragraph{1-level aggregation.}

A simple example of an aggregation problem is the \emph{TCP Acknowledgment
Problem ($\TCPAP$)}, where control messages (``agents'') waiting for transmission across a
network link can be aggregated and transmitted in a~single packet (``resource''). 
Such aggregation can
reduce network traffic, but it also results in undesirable delays. A reasonable compromise
is to balance the two costs, namely the number of transmitted
packets and the total delay, by minimizing their weighted sum~\cite{tcp-ack-det-journal}.
Interestingly, $\TCPAP$ is equivalent to the classical Lot Sizing Problem studied in the operations research literature
since the 1950s. (See, for example, \cite{wagner_whitin_58}.) In the offline variant of $\TCPAP$, that is
when all arrival times of control messages are known beforehand, an optimal schedule for
aggregated packets can be computed with dynamic programming in time $O(n\log n)$~\cite{aggarwal_park_93}.
In practice, however, packet aggregation decisions must be done on the fly,
without any information about future message releases. This scenario is captured by the
online variant of $\TCPAP$ that has also been well studied; it is known that the optimal competitive
ratio is $2$ in the deterministic case \cite{tcp-ack-det-journal}
and $\mathrm{e}/(\mathrm{e}-1) \approx 1.582$ in the randomized
case~\cite{tcp-ack,online-primal-dual-book,tcp-ack-lower-bound}.
Online variants of $\TCPAP$ that use different assumptions or objective functions 
were also examined in the literature~\cite{tcp-ack-logp,tcp-ack-long-delays}.


\paragraph{2-level aggregation.}

Another optimization problem involving aggregation is the \emph{Joint Replenishment Problem
($\JRP$)}, well-studied in operations research.
$\JRP$ models tradeoffs that arise in supply-chain management.
One such scenario involves optimizing shipments of goods from a
supplier to  retailers, through a shared warehouse, in response to their
demands.  In $\JRP$, aggregation takes place at two levels: items addressed to
different retailers can be shipped together to the warehouse, at a fixed cost,
and then multiple items destined to the same retailer can be shipped  from the warehouse to
this retailer together, also at a fixed cost, which can be  different for different
retailers. Pending demands accrue waiting cost until they are satisfied by a shipment.
The objective is to minimize the sum of all shipment costs and all waiting costs.

$\JRP$ is known to be $\NP$-hard~\cite{jrp-arkin}, and
even $\APX$-hard~\cite{jrp-deadlines-nonner,bienkowski_jrpd_2013}.
The currently best approximation,
due to Bienkowski~{\etal}~\cite{jrp-soda-2014}, achieves a factor of $1.791$,
improving on earlier work by
Levi~{\etal}~\cite{jrp-owmr-levi-soda,jrp-owmr-levi-journal,jrp-owmr-levi-approx}.
In the deadline variant of $\JRP$, denoted $\JRPD$,
there is no cost for waiting, but each demand needs to be satisfied before its deadline.
As shown in~\cite{bienkowski_jrpd_2013}, $\JRPD$ can be approximated with ratio $1.574$.

For the online variant of $\JRP$,
Buchbinder~{\etal}~\cite{jrp-online-buchbinder} gave a $3$-competitive algorithm using
a primal-dual scheme (improving an earlier bound of $5$ in~\cite{aggregation-bkv})
and proved a lower bound of $2.64$, that was subsequently
improved to $2.754$~\cite{jrp-soda-2014}.
The optimal competitive ratio for $\JRPD$ is $2$~\cite{jrp-soda-2014}.


\paragraph{Multiple-level aggregation.}  

$\TCPAP$ and $\JRP$ can be
thought of as aggregation problems on edge-weighted trees of depth $1$ and $2$,
respectively. In $\TCPAP$, this tree is just a single edge between the sender 
and the recipient. In $\JRP$, this tree consists of the root (supplier),
with one child (warehouse), and any number of grandchildren (retailers).
A shipment can be represented by a subtree of this tree and edge weights represent
shipping costs. These trees capture the general
problem on trees of depth $1$ and $2$, as the children of the root can
be considered separately (see Section~\ref{sec: preliminaries}).

This naturally
extends to trees of any depth $D$, where aggregation is allowed at each level.
Multi-level message aggregation has been, in fact, studied in 
communication networks in several contexts. In multicasting, protocols for aggregating
control messages (see \cite{bortnikov_cohen_infocom_98,badrinath_sudame_gathercast_00}, for example)
can be used to reduce the so-called \emph{ack-implosion}, the
proliferation of control messages routed to the source.
A similar problem arises in energy-efficient data aggregation and fusion in
sensor networks~\cite{hu_cao_may_sensors_05,yuan_et_al_data_fusion_03}.
Outside of networking, tradeoffs between the cost of communication and delay arise in
message aggregation in organizational hierarchies~\cite{Papadimitriou_96}.
In supply-chain management, multi-level variants of lot sizing have been
studied~\cite{wallance_et_al_multistage_assembly_73,kims_multilevel_lot_sizing_97}.
The need to consider more tree-like (in a broad sense) supply hierarchies
has also been advocated in~\cite{Lambert_Cooper_issues_chain_management_00}.

These applications have inspired research on offline and online
approximation algorithms for multi-level aggregation problems.
Becchetti~{\etal}~\cite{packet-aggregation-becchetti} gave a $2$-approximation
algorithm for the deadline case. (See also~\cite{aggregation-bkv}.)
Pedrosa~\cite{lehilton-note} showed, adapting an algorithm of
Levi~{\etal}~\cite{jrp-levi-2-approx} for the multi-stage assembly problem,
that there is a $(2+\eps)$-approximation algorithm for general waiting cost functions, 
where $\eps$ can be made arbitrarily small.

In the online case,
Khanna~{\etal}~\cite{khanna-message-aggregation} gave a rent-or-buy solution
(that serves a group of requests once their waiting cost reaches the cost
of their service) and showed that their algorithm is $O(\log \alpha)$-competitive,
where $\alpha$ is defined as the sum of all edge weights. 
However, they assumed that
each request has to wait at least one time unit. This assumption is crucial for their
proof, as demonstrated by Brito~{\etal}~\cite{aggregation-bkv}, who showed that
the competitive ratio of a rent-or-buy strategy is $\Omega(D)$, even
for paths with $D$ edges. 
The same assumption of a minimal cost for a request and a ratio dependent
on the edge-weights is also essential in the work of
Vaya~\cite{Vaya_delay_deliver_12}, who studies a variant of the
problem with bounded bandwidth (the number of packets that can be
served by a single edge in a single service).

The existence of a primal-dual $(2+\eps)$-approximation algorithm \cite{lehilton-note, jrp-levi-2-approx}
for the offline problem suggests the possibility of constructing an online algorithm along the lines of \cite{online-primal-dual-book}. Nevertheless, despite substantial effort of many researchers,
the online multi-level setting remains wide open.
This is perhaps partly due to impossibility of direct emulation of
the cleanup phase in primal-dual offline algorithms in the online setting, 
as this cleanup is performed in the ``reverse time'' order.

The case when the tree is just a path has also been studied.
An offline polynomial-time algorithm that computes an optimal schedule was given in~\cite{aggregation_wads_2013}.
For the online variant,
Brito {\etal}~\cite{aggregation-bkv} gave an $8$-competitive algorithm.
This result was improved by Bienkowski {\etal}~\cite{aggregation_wads_2013} who showed that
the competitive ratio of this problem is between $2+\phi\approx 3.618$ and $5$.


\subsection{Our Contributions}

We study online competitive algorithms for multi-level aggregation. Minor technical
differences notwithstanding, our model is equivalent to those studied
in~\cite{aggregation-bkv,khanna-message-aggregation}, also extending the
deadline variant in~\cite{packet-aggregation-becchetti} and the assembly
problem in~\cite{jrp-levi-2-approx}. We have decided to choose a more generic
terminology to emphasize general applicability of our model and techniques.

Formally, our model consists of a tree $\calT$ with positive weights assigned to edges,
and a set $\calR$ of requests that arrive in the nodes of $\calT$ over time. These
requests are served by subtrees rooted at the root of $\calT$. Such a
subtree $X$ serves all requests pending at the nodes of $X$ at cost equal to
the total weight of $X$. Each request incurs a waiting cost, defined by 
a~non-negative and non-decreasing function of time, which may be different for
each request. The objective is to minimize the sum of the total
service and waiting costs. We call this the \emph{Multi-Level Aggregation
Problem} ({\MLAP}).

In most earlier papers on aggregation problems, the waiting cost function is
linear, that is, it is
assumed to be simply the delay between the times when a request arrives and
when it is served. We denote this version by  {\MLAPL}.
However, most of the algorithms for this model extend naturally to arbitrary cost
functions.
Another variant is  {\MLAPD}, where each request is given a certain deadline,
has to be served before or at its deadline, and there is no penalty associated
with waiting. This can be modeled by the waiting cost function that is $0$ up to
the deadline and $+\infty$ afterwards.

In this paper, we mostly focus on the online version of {\MLAP}, where an algorithm needs to
produce a schedule in response to requests that arrive over time.
When a request appears, its waiting cost function is
also revealed. At each time $t$, the online algorithm needs to decide whether to
generate a service tree at this time, and if so, which nodes should be included in this tree.


\begin{table}
\begin{center}
\begin{tabular}{r|l|l|l|l|}
	& \multicolumn{2}{c|}{\MLAP\ and \MLAPL} & \multicolumn{2}{c|}{\MLAPD} \\
\hline
 & upper & lower & upper & lower \\
\hline
depth 1 & $2^*$~\cite{tcp-ack-det-journal} & 2~\cite{tcp-ack-det-journal}
& 1 & 1 \\
rand.~alg.~for depth 1 & $1.582^*$~\cite{tcp-ack} &
1.582~\cite{tcp-ack-lower-bound} &
1 & 1 \\
depth 2 & 3~\cite{jrp-online-buchbinder} & 2.754~\cite{jrp-soda-2014} & 2~\cite{jrp-soda-2014} & 2~\cite{jrp-soda-2014} \\
fixed depth $D \geq 2$ & $\mathbf{O(D^42^D)}$ & 2.754 & $\mathbf{D^22^D}$ & 2 \\
paths of arbitrary depth & $5^*$~\cite{aggregation_wads_2013} & 3.618~\cite{aggregation_wads_2013}, {\bf 4} & {\bf 4} & {\bf 4} \\
\hline
\end{tabular}
\end{center}
\caption{Previous and current bounds on the competitive ratios for {\MLAP} for trees of various depths.
Ratios written in bold are shown in this paper. Unreferenced results are either immediate consequences of
other entries in the table or trivial observations.
Asterisked ratios represent results for {\MLAP} with arbitrary waiting cost functions, which, though
not explicitly stated in the respective papers,  are straightforward extensions of the
corresponding results for {\MLAPL}. Some values in the table are approximations:
$1.582$ represents $\mathrm{e}/(\mathrm{e}-1)$ and $3.618$ represents $2+\phi$, where $\phi$ is the golden ratio.
}
\label{tab:results}
\end{table}

The main result of our paper is an $O(D^42^D)$-competitive algorithm
for {\MLAP} for trees of depth $D$, presented in 
Section~\ref{sec: competitive algorithm for mlap}.  A simpler $D^22^D$-competitive
algorithm for {\MLAPD} is presented in Section~\ref{sec: competitive algorithm for mlap-d}. 
No competitive algorithms have been known so far
for online {\MLAP} for arbitrary depth trees, even for the special case of
{\MLAPD} on trees of depth $3$.

For both results we use a reduction, described in Section~\ref{sec: reduction to L-decreasing trees},
of the general problem to the special case of trees with fast decreasing weights described. 
For such trees we then provide an explicit competitive algorithm. 
While our algorithm is compact and elegant, it is not a straightforward
extension of the 2-level algorithm.  (In fact, we have been able to show that na\"{\i}ve 
extensions of the latter algorithm are not competitive.)
It is based on carefully constructing a sufficiently large service tree
whenever it appears that an urgent request must be served. The specific structure of the service 
tree is then heavily exploited in an amortization argument that constructs a
mapping from the algorithm's cost to the cost of the optimal schedule.
We believe that these three new techniques: the reduction to trees with fast decreasing weights,
the construction of the service trees, and our charging scheme, 
will be useful in further studies of online aggregation problems.

In Section~\ref{sec: one-phase MLAP} we study a version of
$\MLAP$, that we refer to as \emph{Single-Phase} {\MLAP} (or
$\SPMLAP$), in which all requests arrive at the beginning, but they
also have a common \emph{expiration time} that we denote by
$\expiration$.  Any request not served by time~$\expiration$ pays
waiting cost at time~$\expiration$ and does not need to be served
anymore. In spite of the expiration-date feature, it can be shown that
$\SPMLAP$ can be represented as a special case of $\MLAP$.  
$\SPMLAP$ is a crucial tool in all the lower bound proofs in the literature
for competitive ratios of {\MLAP}, including those in~\cite{jrp-online-buchbinder,aggregation_wads_2013}, 
as well as in our lower bounds in Section~\ref{sec: mlap on paths}. 
It also has a natural interpretation in the context of $\JRP$ ($2$-level $\MLAP$), if we
allow all orders to be canceled, say, due to changed market
circumstances.  In the online variant of $\SPMLAP$ all requests are
known at the beginning, but the expiration time $\expiration$ is
unknown.  For this version we give an online algorithm with
competitive ratio $4$, matching the lower bound.
Since $\SPMLAP$ can be expressed as a special case of $\MLAP$, our result
implies that the techniques
from~\cite{jrp-online-buchbinder,aggregation_wads_2013} cannot be used
to prove a lower bound larger than $4$ on the competitive ratio for $\MLAP$,
and any study of the dependence of the competitive
ratio on the depth $D$ will require new insights and techniques.

In Section~\ref{sec: mlap on paths} we consider $\MLAP$ on paths.
For this case, we give a $4$-competitive algorithm for {\MLAPD} and we provide
a matching lower bound.  We show that the same lower bound of $4$ applies to 
$\MLAPL$ as well, improving the previous lower bound of $3.618$
from~\cite{aggregation_wads_2013}.

In addition, we provide two results on offline algorithms (for arbitrary trees).  
In Section~\ref{sec: mlap with deadlines} we provide a 2-approximation
algorithm for $\MLAPD$, significantly simpler than the LP-rounding
algorithm in~\cite{packet-aggregation-becchetti} with the same ratio.
In Section~\ref{sec: 1p-mlap with deadlines}, we give a polynomial
time algorithm that computes optimal solutions for $\SPMLAP$.

Finally, in Section~\ref{sec: general waiting costs}, we discuss
several technical issues concerning the use of general functions as
waiting costs in $\MLAP$. In particular, when presenting our algorithms for $\MLAP$
we assume that all waiting cost functions are continuous (which cannot directly capture some
interesting variants of $\MLAP$).
This is done, however, only for technical convenience; as explained in Section~\ref{sec: general waiting costs},
these algorithms can be extended to left-continuous
functions, which allows to model {\MLAPD} as a special case
of {\MLAP}. We also consider two alternative models for {\MLAP}: the
discrete-time model and the model where not all requests need to be
served, showing that our algorithms can be extended to these models as well.

An extended abstract of this work appeared in the proceedings of  
24th Annual European Symposium on Algorithms (ESA'16)~\cite{Bienkowski_etal_multilevel_esa_2016}.


\section{Preliminaries}
\label{sec: preliminaries}

\paragraph{Weighted trees.}

Let $\calT$ be a tree with root $r$.
For any set of nodes $Z\subseteq\calT$ and a node $x$, $Z_x$ denotes
the set of all descendants of $x$ in $Z$; in particular, $\calT_x$ is 
the \emph{induced subtree} of $\calT$ rooted at $x$.
The parent of a node $x$ is denoted $\parent(x)$.
The \emph{depth of $x$}, denoted $\depth(x)$, is
the number of edges on the simple path from $r$ to $x$. In
particular, $r$ is at depth $0$. The depth $D$ of $\calT$ is the
maximum depth of a node of $\calT$.

We will deal with weighted trees in this paper. For $x\neq r$, by
$\length_x$ or $\length(x)$ we denote the weight of the edge
connecting node $x$ to its parent. For the sake of convenience,
we will often refer to $\length_x$ as the weight of $x$.
We assume that all these weights
are positive. We extend this notation to $r$ by setting $\length_r=0$.
If $Z$ is any set of nodes of $\calT$, then the weight of $Z$ is
$\length(Z) = \sum_{x\in Z}\length_x$.


\paragraph{Definition of $\MLAP$.}

A \emph{request} $\rho$ is specified by a triple $\rho =
(\trignode_\rho,a_\rho,\omega_\rho)$, where $\trignode_\rho$ is the
node of $\calT$ in which $\rho$ is issued, $a_\rho$ is the non-negative
\emph{arrival time} of $\rho$, and $\omega_\rho$ is the waiting cost
function of $\rho$. We assume that $\omega_\rho(t) = 0$ for $t\le
a_\rho$ and $\omega_\rho(t)$ is non-decreasing for $t\ge a_\rho$. 
$\MLAPL$ is the variant of $\MLAP$ with linear waiting costs; that is,
for each request $\rho$ we have
$\omega_\rho(t) = t-a_\rho$, for $t\ge a_\rho$.
In $\MLAPD$, the variant with deadlines,
we have $\omega_\rho(t) = 0$ for $t\le d_\rho$ and
$\omega_\rho(t) = \infty$ for $t > d_\rho$, where $d_\rho$ is called
the \emph{deadline} of request $\rho$.

In our algorithms for $\MLAP$ with general costs we will be assuming that 
all waiting cost functions are continuous.
This is only for technical convenience and we discuss more general waiting cost
functions in Section~\ref{sec: general waiting costs};
we also show there that {\MLAPD} can be considered a special case of {\MLAP},
and that our algorithms can be extended to the discrete-time model.

A \emph{service} is a pair $(X,t)$, where $X$ is a subtree of $\calT$
rooted at $r$ and $t$ is the time of this service. We will
occasionally refer to $X$ as the service tree (or just service) at
time $t$, or even omit $t$ altogether if it is understood from
context.

An instance $\calJ = \angled{\calT,\calR}$ of the \emph{Multi-Level
Aggregation Problem} ({\MLAP}) consists of a weighted tree $\calT$
with root $r$ and a set $\calR$ of requests arriving at the nodes of
$\calT$.  A \emph{schedule} is a set $\schedS$ of services. For a
request $\rho$, let $(X,t)$ be the service in $\schedS$ with minimal
$t$ such that $\trignode_\rho\in X$ and $t\geq a_\rho$.  We then say
that $(X,t)$ \emph{serves} $\rho$ and the \emph{waiting cost} of
$\rho$ in $\schedS$ is defined as
$\wcost(\rho,\schedS)=\omega_\rho(t)$. Furthermore, the request $\rho$
is called \emph{pending} at all times in the interval $[a_\rho,t]$.
Schedule $\schedS$ is called \emph{feasible} if all requests in
$\calR$ are served by $\schedS$.

The cost of a feasible schedule $\schedS$, denoted $\cost(\schedS)$, is
defined by
\begin{equation*}
	\cost(\schedS) =  \scost(\schedS)+\wcost(\schedS),
\end{equation*}
where $\scost(\schedS)$ is the total service cost and $\wcost(\schedS)$ is the total waiting cost, that is
\begin{equation*}
	\scost(\schedS) = \sum_{(X,t)\in \schedS} \length(X) 
	\quad \textrm{and} \quad
	\wcost(\schedS) = \sum_{\rho\in \calR} \wcost(\rho,\schedS).
\end{equation*}
The objective of $\MLAP$ is to compute a feasible schedule $\schedS$
for $\calJ$ with minimum $\cost(\schedS)$.


\paragraph{Online algorithms.}

We use the standard and natural definition of online algorithms and the competitive ratio. 
We assume the continuous time model. The computation starts at time $0$ and from then on the
time gradually progresses. At any time $t$ new requests can arrive. If the current time is
$t$, the algorithm has complete information about the requests that arrived up until time $t$,
but has no information about any requests whose arrival times are after time $t$.
The instance includes a time horizon $H$ that is not known to the online algorithm, which
is revealed only at time $t = H$. At time $H$, all requests that are still pending must be served.
(In the offline case, $H$ can be assumed to be equal to the maximum request arrival time.)

If $\calA$ is an online algorithm and $R\ge 1$, we say that $\calA$ is \emph{$R$-competitive}%
\footnote{Definitions of competitiveness in the literature often
allow an additive error term, independent of the request sequence. 
For our algorithms, this additive term is not needed. Our lower bound proofs can be
easily modified (essentially, by iterating the adversary strategy) to remain valid
if an additive term is allowed, even if it is a function of $\calT$.}
if $\cost(\schedS)\le R\cdot \opt(\calJ)$ for any instance $\calJ$ of $\MLAP$,
where $\schedS$ is the schedule computed by $\calA$ on $\calJ$ and $\opt(\calJ)$ is the
optimum cost for $\calJ$.


\paragraph{Quasi-root assumption.}

Throughout the paper we will
assume that $r$, the root of $\calT$, has only one child. This
is without loss of generality, because if we have an algorithm (online
or offline) for $\MLAP$ on such trees, we can apply it independently to each
child of $r$ and its subtree. This will give us an algorithm for $\MLAP$ on
arbitrary trees with the same performance. From now on, let us call
the single child of $r$ the \emph{quasi-root} of $\calT$ and denote it by
$q$.  Note that $q$ is included in every (non-trivial) service.


\paragraph{Urgency functions.}

When choosing nodes for inclusion in a service, our online algorithms
give priority to those that are most ``urgent''.  For {\MLAPD},
naturally, urgency of nodes can be measured by their deadlines, where
a deadline of a node $v$ is the earliest deadline of a request pending
in the subtree $\calT_v$, i.e., the induced subtree rooted at $v$.
But for the arbitrary instances of {\MLAP} we need a more general
definition of urgency, which takes into account the rate of increase
of the waiting cost in the future.  To this end, each of our
algorithms will use some \emph{urgency function} $f:\calT\rightarrow
\reals\cup\{+\infty\}$, which also depends on the set of pending
requests and the current time step, and which assigns some time value
to each node. The earlier this value, the more urgent the node is.

Fix some urgency function $f$. Then,
for any set $A$ of nodes in $\calT$ and a real number $\beta$, let
$\urgentnodes(A,\beta,f)$ be the set of nodes obtained by choosing the
nodes from $A$ in order of their increasing urgency value, until either their total weight exceeds $\beta$
or we run out of nodes.
More precisely, we define $\urgentnodes(A,\beta,f)$ as the
smallest set of nodes in $A$ such that 
(i)  for all $u\in \urgentnodes(A,\beta,f)$, and $v\in
A-\urgentnodes(A,\beta,f)$ we have $f(u)\leq f(v)$, 
and 
(ii) either $\length(\urgentnodes(A,\beta,f))\ge\beta$ or
$\urgentnodes(A,\beta,f)=A$.
In case of ties in the values of $f$ there may be multiple choices for
$A$; we choose among them arbitrarily.


\section{Reduction to $L$-Decreasing Trees}
\label{sec: reduction to L-decreasing trees}

One basic intuition that emerges from earlier works on trees of depth
$2$ (see \cite{jrp-online-buchbinder,aggregation-bkv,jrp-soda-2014})
is that the hardest case of the problem is when $\length_q$, the
weight of the quasi-root, is much larger than the weights of
leaves. For arbitrary depth trees, the hard case is when the weights
of nodes quickly decrease with their depth. We show that this is
indeed the case, by defining the notion of $L$-decreasing trees that
captures this intuition and showing that {\MLAP} reduces to the
special case of {\MLAP} for such $L$-decreasing trees, increasing the
competitive ratio by a factor of at most $DL$.  This is a general
result, not limited only to algorithms in our paper.

Formally, for $L\ge 1$, we say
that $\calT$ is \emph{$L$-decreasing} if for each node $u\neq r$ and
each child $v$ of $u$ we have $\length_u \ge L\cdot \length_v$.
(The value of $L$ used in our algorithms will be fixed later.) 
 
Note that the $L$-decreasing condition corresponds to the usual
definition of hierarchically well-separated trees (HSTs); however, for our
purposes we do not need any balancing condition usually also required from HSTs.


\begin{theorem}\label{thm:reduction}
Assume that there exists an $R$-competitive algorithm $\calA$ for
{\MLAP} (resp.~{\MLAPD}) on $L$-decreasing trees (where $R$ can be a
function of $D$, the tree depth).  Then there exists a
$(DLR)$-competitive algorithm $\calB$ for {\MLAP} (resp.~{\MLAPD}) on
arbitrary trees.
\end{theorem}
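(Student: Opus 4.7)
I reduce to the $L$-decreasing case by decomposing $\calT$ into $L$-decreasing subtrees, running $\calA$ separately on each, and combining. Call a node $v\neq r$ a \emph{cut node} if $v=q$ or $\length_{\parent(v)} < L\cdot\length_v$, and let $C$ denote the set of cut nodes. For each $v\in C$ let $K_v$ be the maximal subtree of $\calT$ rooted at $v$ whose internal edges all satisfy the $L$-decreasing condition. The sets $\{K_v\}_{v\in C}$ partition $\calT\setminus\{r\}$, and each $K_v$ is $L$-decreasing with quasi-root $v$. For each $v \in C$ I form a sub-instance $\calJ_v$ on $K_v$ by (i) assigning $v$ the \emph{effective weight} $\length(\pi(v))$, where $\pi(v)$ is the $r$-to-$v$ path in $\calT$, and (ii) including exactly the requests of $\calR$ issued at nodes of $K_v$. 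The inflation to $\length(\pi(v)) \ge \length_v$ preserves the $L$-decreasing property of $K_v$ (since every child $w$ of $v$ in $K_v$ satisfies $\length_w \le \length_v/L \le \length(\pi(v))/L$) and makes the cost of any service $X$ in $\calJ_v$ equal to the cost in $\calT$ of its extension $X\cup\pi(v)$.

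Algorithm $\calB$ runs $\calA$ on every $\calJ_v$ in parallel, obtaining sub-schedules $\schedS_v$; at every time $t$ it issues the $\calT$-service $\bigcup_{v\in C}(X_v^t\cup\pi(v))$, where $X_v^t$ is what $\schedS_v$ triggers at $t$ (possibly $\emptyset$). Since the weight of a union of subtrees rooted at $r$ is at most the sum of the weights, and $\calB$ serves each request no later than its cluster's sub-schedule does, routine bookkeeping gives
\[
  \cost(\calB,\calJ) \;\le\; \sum_{v\in C}\cost(\calA,\calJ_v) \;\le\; R\cdot\sum_{v\in C}\opt(\calJ_v).
\]
It remains to show $\sum_{v\in C}\opt(\calJ_v) \le DL\cdot\opt(\calJ)$. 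I would prove this by taking an optimal schedule $\schedS^*$ of $\calJ$ and, for each $v \in C$, building a feasible schedule for $\calJ_v$ by restricting every $\schedS^*$-service $X^*$ with $v\in X^*$ to $X^*\cap K_v$. Because the clusters partition $\calT\setminus\{r\}$, the sum of these sub-instance costs simplifies to $\opt(\calJ) + \sum_{X^*}\sum_{v\in X^*\cap C}\length(\pi(v)\setminus\{v\})$, and the task reduces to bounding the extra term by $(DL-1)\cdot\opt(\calJ)$.

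\textbf{The main obstacle} is exactly this last inequality. Naive charging can blow up exponentially, because the weights of cut ancestors along a single chain can scale by $L^{\Theta(D)}$ relative to $\length_v$. To obtain a linear $DL$ factor, an amortization is needed that uses together: (a)~the $\frac{L}{L-1}$ bound on total path weight inside any single cluster, which follows from the $L$-decreasing property; (b)~the bound of $D$ on the number of cut nodes along any root-to-leaf path of $\calT$; and (c)~the inequality $\length_v > \length_{\parent(v)}/L$ holding at every cut. The right framing appears to be to distribute each charge $\length(\pi(\parent(v)))$ across the nodes of $X^*$ lying in the ancestor clusters of $v$, exploiting that each such node belongs to a unique cluster, so each unit of $\length(X^*)$ absorbs only $O(DL)$ units of charge. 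Making this counting tight enough to yield exactly $DL$---rather than an extra power of $L$ or a factor of $D^2$---is the delicate step I expect to consume most of the technical work.
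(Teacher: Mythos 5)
Your decomposition is genuinely different from the paper's construction. The paper builds a \emph{single} modified $L$-decreasing tree $\calT'$ on the same vertex set by re-wiring each node $u$ to its lowest ancestor $w$ with $\length_w\ge L\length_u$, runs one copy of $\calA$ on $\calT'$, and recovers a $\calT$-service from each $\calT'$-service by re-inserting the skipped intermediate nodes (at most $D-1$ per $u$, each of weight $<L\length_u$, hence at most a factor $DL$). You instead partition $\calT$ into maximal $L$-decreasing clusters $K_v$ with inflated root weights $\length(\pi(v))$ and run a \emph{separate} copy of $\calA$ per cluster. Your first chain of inequalities, $\cost(\calB,\calJ)\le\sum_{v\in C}\cost(\calA,\calJ_v)\le R\sum_{v\in C}\opt(\calJ_v)$, is sound.

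The problem is that the remaining inequality you want, $\sum_{v\in C}\opt(\calJ_v)\le DL\cdot\opt(\calJ)$, is simply \emph{false}, so no amortization can close the gap. Take $\calT$ to be a complete $k$-ary tree of depth $D$ rooted at $q$ (with $r$ the parent of $q$), assigning weight $\gamma^{-(d-1)}$ to every node at depth $d$, where $1<\gamma<L$. Since $\gamma<L$, every node at depth $\ge 2$ is a cut node, so every cluster is a singleton $K_v=\{v\}$. Put one deadline-$1$ request at each leaf. Then the optimum for $\calJ$ is a single service of the whole tree:
\[
   \opt(\calJ)=\length(\calT)=\sum_{d=1}^{D}k^{d-1}\gamma^{-(d-1)}=\sum_{d=1}^{D}(k/\gamma)^{d-1},
\]
while for each leaf $v$ the sub-instance $\calJ_v$ must pay its inflated weight once,
$\opt(\calJ_v)=\length(\pi(v))=\sum_{j=1}^{D}\gamma^{-(j-1)}=\Theta\!\left(\tfrac{\gamma}{\gamma-1}\right)$, and there are $k^{D-1}$ leaves, so
\[
   \sum_{v\in C}\opt(\calJ_v)=k^{D-1}\cdot\Theta\!\left(\tfrac{\gamma}{\gamma-1}\right).
\]
For $k>\gamma$ the ratio $\sum_{v}\opt(\calJ_v)/\opt(\calJ)$ is $\Theta(\gamma^{\,D-1})$, which can be driven arbitrarily close to $L^{D-1}$ and in particular is \emph{exponentially} larger than $DL$. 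Concretely, with $k=3$, $\gamma=3/2$, $L=2$, $D=10$ one gets $\opt(\calJ)=1023$, $\sum_v\opt(\calJ_v)\approx 5.8\cdot10^{4}$, ratio $\approx 57$, whereas $DL=20$.

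The structural reason your plan cannot work is the weight inflation itself: setting the root weight of $K_v$ to $\length(\pi(v))$ forces every sub-instance to re-pay the entire $r$-to-$v$ path on each of its services, and when many cut nodes hang under a shared heavy ancestor chain those re-payments multiply, one copy per descendant cut node, even though the original tree contains only one copy of the heavy edges. The paper's construction avoids exactly this: the heavy ancestors remain single shared nodes of $\calT'$, are charged once per service (not once per cut descendant), and the only extra cost is the cheap intermediate nodes re-inserted when translating a $\calT'$-service back to $\calT$.
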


\begin{proof}
Fix the underlying instance $\calJ = (\calT,\calR)$, where $\calT$ is
a tree and $\calR$ is a sequence of requests in $\calT$.  In our
reduction, we convert $\calT$ to an $L$-decreasing tree $\calT'$ on
the same set of nodes. We then show that any service on $\calT$ is
also a service on $\calT'$ of the same cost and, conversely, that any
service on $\calT'$ can be converted to a slightly more expensive
service on $\calT$.

We start by constructing 
an $L$-decreasing tree $\calT'$ on the same set of nodes. For any node
$u\in \calT-\braced{r}$, the parent of $u$ in $\calT'$ will be the
lowest (closest to $u$) ancestor $w$ of $u$ in $\calT$ such that
$\length_w \geq L\cdot \length_u$; if no such $w$ exists, we take
$w=r$. Note that $\calT'$ may violate the quasi-root assumption, which
does not change the validity of the reduction, as we may use
independent instances of the algorithm for each child of $r$ in
$\calT'$.
Since in $\calT'$ each node $u$ is connected to one of its ancestors
from $\calT$, it follows that $\calT'$ is a tree rooted at $r$ with
depth at most $D$. Obviously, $\calT'$ is $L$-decreasing.

The construction implies that if a set of nodes $X$
is a service subtree of $\calT$, then it is also a service subtree for
$\calT'$.  (However, note that the actual topology of
the trees with node set $X$ in $\calT$ and $\calT'$ may be very different.
For example, if $L=5$ and $\calT$ is a path with costs
(starting from the leaf) $1, 2, 2^2, ..., 2^D$, then in $\calT'$ the
node of weight $2^i$ is connected to the node of weight $2^{i+3}$,
except for the last three nodes that are connected to $r$. Thus the
resulting tree consists of three paths ending at $r$ with roughly
the same number of nodes.)
Therefore, any schedule for $\calJ$ is also a schedule for $\calJ' =
(\calT',\calR)$, which gives us that $\opt(\calJ')\leq \opt(\calJ)$.

The algorithm $\calB$ for $\calT$ is defined as follows: On a request
sequence $\calR$, we simulate $\calA$ for $\calR$ in $\calT'$, and
whenever $\calA$ contains a service $X$, $\calB$ issues the service
$X' \supseteq X$, created from $X$ as follows:
Start with $X'=X$. Then,
for each $u\in X-\braced{r}$, if $w$ is the parent of $u$ in $\calT'$, then
add to $X'$ all inner nodes on the path from $u$ to $w$ in
$\calT$. By the construction of $\calT'$, for each $u$ we add at most
$D-1$ nodes, each of weight less than $L\cdot \length_u$. It follows that
$\length(X')\leq ((D-1)L+1)\length(X)\leq DL\cdot \length(X)$.

In total, the service cost of $\calB$ is at most $DL$ times the
service cost of $\calA$. Any request served by $\calA$ is served by
$\calB$ at the same time or earlier, thus the waiting cost of $\calB$
is at most the waiting cost of $\calA$ (resp.~for {\MLAPD}, $\calB$
produces a valid schedule for $\calJ$). 
Since $\calA$ is $R$-competitive, we obtain 
\[
\cost(\calB,\calJ)\leq DL\cdot\cost(\calA,\calJ')
			\leq DLR\cdot \opt(\calJ')
			\leq DLR \cdot \opt(\calJ),
\]
and thus $\calB$ is $DLR$-competitive.
\end{proof}


\section{A Competitive Algorithm for {\MLAPD}}
\label{sec: competitive algorithm for mlap-d}

In this section we present our online algorithm for {\MLAPD} with 
competitive ratio at most $D^22^D$.
To this end, we will give an online algorithm that achieves
competitive ratio $R_L = (2+1/L)^{D-1}$ for $L$-decreasing trees.
Taking $L = D/2$ and using the reduction to $L$-decreasing trees from
Theorem~\ref{thm:reduction}, we obtain a
$D^22^D$-competitive algorithm for arbitrary trees.


\subsection{Intuitions} 

Consider the optimal $2$-competitive algorithm for {\MLAPD} for
trees of depth~$2$~\cite{jrp-soda-2014}. Assume that the tree is $L$-decreasing, 
for some large $L$. (Thus $\length_q \gg \length_v$, for each leaf $v$.)
Whenever a pending request
reaches its deadline, this algorithm serves a subtree $X$ consisting
of $r,q$ and the set of leaves with the earliest deadlines and total
weight of about $\length_q$.  This is a~natural strategy: We
have to pay at least $\length_q$ to serve the expiring request, so
including an additional set of leaves of total weight $\length_q$ can
at most double our overall cost. But, assuming that no new requests arrive,
serving this $X$ can significantly reduce the cost in the future,
since servicing these leaves individually is expensive: it would
cost $\length_v+\length_q$ per each leaf $v$, compared to the
incremental cost of $\length_v$ to include $v$ in $X$.

For $L$-decreasing trees with three levels (that is, for $D=3$), 
we may try to iterate this idea. 
When constructing a service tree $X$, we start by adding to $X$
the set of most urgent children of $q$ whose total weight is roughly
$\length_q$.  Now, when choosing nodes of depth $3$, we have two
possibilities: (1) for each $v\in X-\braced{r,q}$ we can add to $X$
its most urgent children of combined weight $\length_v$ (note that
their total weight will add up to roughly $\length_q$, because of the
$L$-decreasing property), or (2) from the set of \emph{all} children
of the nodes in $X-\braced{r,q}$, add to $X$ the set of total weight
roughly $\length_q$ consisting of (globally) most urgent children. 

It is not hard to show that option (1) does not lead to a
constant-competitive algorithm: The counter-example involves an
instance with one node $w$ of depth $2$ having many children with
requests with early deadlines and all other leaves having requests
with very distant deadlines. Assume that $\length_q=L^2$,
$\length_w=L$, and that each leaf has weight $1$. 
The example forces the
algorithm to serve the children of $w$ in small batches of size $L$
with cost more than $L^2$ per batch or $L$ per each child of $w$, while the
optimum can serve all the requests in the children of $w$ at once with cost $O(1)$ per
request, giving a lower bound $\Omega(L)$ on the competitive ratio.
(The requests at other nodes can be ignored in the optimal
solution, as we can keep repeating the above strategy in a manner
similar to the lower-bound technique for {\SPMLAP}
that will be described in Section~\ref{sec: one-phase MLAP}.
Reissuing requests at the 
nodes other than $w$ will not increase the cost of the optimum.) 
A more intricate example
shows that option (2) by itself is not sufficient to guarantee
constant competitiveness either.

The idea behind our algorithm, for trees of depth $D=3$, is to do 
\emph{both} (1) and (2) to obtain $X$. This increases the cost of each
service by a constant factor, but it protects the algorithm against
both bad instances. The extension of our algorithm to depths 
$D>3$ carefully iterates the
process of constructing the service tree $X$, to ensure that for each
node $v\in X$ and for each level $i$ below $v$ we add to $X$
sufficiently many urgent descendants of $v$ at that level.


\subsection{Notations}

To give a formal description, we need some more notations.
For any set of nodes $Z\subseteq\calT$, let $Z^i$ denote the set of
nodes in $Z$ of depth $i$ in tree $\calT$. (Recall
that $r$ has depth $0$, $q$ has depth $1$, and leaves have depth at
most $D$.) Let also $Z^{<i} =\bigcup_{j=0}^{i-1}Z^j$ and $Z^{\le i}
=Z^{<i}\cup Z^i$. These notations can be combined with the notation
$Z_x$, so, e.g., $Z_x^{<i}$ is the set of all descendants of $x$ that
belong to $Z$ and whose depth in $\calT$ is smaller than $i$.

We assume that all the deadlines in the given instance are
distinct. This may be done without loss of generality, as in case of
ties we can modify the deadlines by infinitesimally small
perturbations and obtain an algorithm for the general case. 

At any given time $t$ during the computation of the algorithm, for
each node $v$, let $d^t(v)$ denote the earliest deadline among all
requests in $\calT_v$ (i.e., among all descendants of $v$) that are
pending for the algorithm; if there is no pending request in
$\calT_v$, we set $d^t(v)=+\infty$.  We will use the function $d^t$ as
the urgency (see Section~\ref{sec: preliminaries})
of nodes at time $t$, i.e., a node $u$ will be
considered more urgent than a node $v$ if $d^t(u)<d^t(v)$.


\subsection{Algorithm~{\OnAlgTreesDeadlines}}

At any time $t$ when some request expires, that is when  $t=d^t(r)$, the
algorithm serves a~subtree $X$ constructed by first initializing
$X = \braced{r,q}$, and then incrementally augmenting $X$
according to the following pseudo-code:

\begin{tabbing}
aaa \= aaa \= aaa \= aaa \= aaa \= aaa \= \kill
\> \textbf{for each} depth $i=2,\ldots,D$\\
\> \> $Z^i\gets$ set of all children of nodes in $X^{i-1}$ \\
\> \> \textbf{for each} $v\in X^{<i}$ \\
\> \> \> $U(v,i,t) \gets \urgentnodes(Z^i_v,\length_v,d^t)$ \\
\> \> \> $X \gets X \cup U(v,i,t)$
\end{tabbing}

In other words, at depth $i$, we restrict our attention to $Z^i$, the
children of all the nodes in $X^{i-1}$, i.e., of the nodes that we
have previously selected to $X$ at level $i-1$. (We start with $i=2$
and $X^1=\braced{q}$.) Then we iterate over all $v\in X^{<i}$ and
we add to $X$ the set $U(v,i,t)$ of nodes from $\calT^i_v$
(descendants of $v$ at depth $i$) whose parents are in $X$, one by
one, in the order of increasing deadlines, stopping when either their
total weight exceeds $\length_v$ or when we run out of such
nodes. Note that these sets do not need to be disjoint.

The constructed set $X$ is a service tree, as we are adding to it only
nodes that are children of the nodes already in $X$.

Let $\rho$ be the request triggering the service at time $t$, i.e.,
satisfying $d_\rho=t$.  (By the assumption about different deadlines,
$\rho$ is unique.)  Naturally, all the nodes $u$ on the path from $r$
to $\trignode_\rho$ have $d^t(u)=t$ and qualify as the most urgent,
thus the node $\trignode_\rho$ is included in $X$. Therefore every
request is served before its deadline.


\subsection{Analysis}

Intuitively, it should be clear that Algorithm~{\OnAlgTreesDeadlines}
cannot have a better competitive ratio than $\length(X)/\length_q$: If all
requests are in $q$, the optimum will serve only $q$, while our
algorithm uses a set $X$ with many nodes that turn out to be
useless. As we will show, via an iterative charging argument, the ratio
$\length(X)/\length_q$ is actually achieved by the algorithm.

Recall that $R_L = (2+1/L)^{D-1}$.
We now prove a bound on the cost of the service tree.

\begin{lemma}\label{l:d:complete} 
Let $X$ be the service tree produced by
Algorithm~{\OnAlgTreesDeadlines} at time $t$.  Then $\length(X)\leq
R_L\cdot\length_q$.
\end{lemma}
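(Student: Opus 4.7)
The plan is to establish a geometric bound on $\length(X)$ by induction on depth, exploiting the $L$-decreasing property of $\calT$. I track the cumulative weight $\length(X^{<i+1}) = \length(X^0) + \length(X^1) + \cdots + \length(X^i)$ and aim to prove the recurrence
\[ \length(X^{<i+1}) \;\le\; (2 + 1/L)\,\length(X^{<i}) \]
for each $i = 2, \ldots, D$. Since $X$ is initialized as $\{r,q\}$ with $\length_r = 0$, the base case gives $\length(X^{<2}) = \length_q$. Iterating the recurrence $D-1$ times then yields $\length(X) = \length(X^{<D+1}) \le (2+1/L)^{D-1}\length_q = R_L\,\length_q$, as required.

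The crucial ingredient is a per-node bound: for every $v \in X^{<i}$, I claim $\length(U(v,i,t)) \le (1+1/L)\length_v$. By the definition of $\urgentnodes$, either $U(v,i,t) = Z^i_v$ (so its total weight is at most $\length_v$, trivially within the bound), or the inner process halts the first time the cumulative weight strictly exceeds $\length_v$. In the latter case the weight of $U(v,i,t)$ is bounded by $\length_v$ plus the weight of the single ``overshoot'' node added last. That overshoot node lies at depth exactly $i$, while $v \in X^{<i}$ has depth at most $i-1$; applying the $L$-decreasing property along the path from $v$ down to this descendant contributes at least one factor of $1/L$, so the overshoot node has weight at most $\length_v/L$. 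Hence $\length(U(v,i,t)) \le \length_v + \length_v/L = (1+1/L)\length_v$.

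To close the recurrence, observe that $X^i = \bigcup_{v \in X^{<i}} U(v,i,t)$, so $\length(X^i) \le \sum_{v \in X^{<i}} \length(U(v,i,t)) \le (1+1/L)\,\length(X^{<i})$; adding $\length(X^{<i})$ to both sides of the identity $\length(X^{<i+1}) = \length(X^{<i}) + \length(X^i)$ yields the claimed recurrence. The only subtle point in the argument is making sure every $v \in X^{<i}$ truly has depth strictly less than $i$ so the $L$-decreasing factor actually kicks in for the overshoot; this is immediate from the definition of $X^{<i}$. Beyond this bookkeeping, there are no real obstacles, and the proof is uniform across all levels precisely because the overshoot always drops at least one level below $v$.
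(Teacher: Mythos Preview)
Your proof is correct and follows essentially the same route as the paper's own proof: both argue by induction on depth that $\length(X^{\le i}) \le (2+1/L)^{i-1}\length_q$, using the $L$-decreasing property to bound $\length(U(v,i,t)) \le (1+1/L)\length_v$ for each $v \in X^{<i}$ and then summing. The only cosmetic difference is that the paper observes directly that \emph{every} node in $U(v,i,t)$ has weight at most $\length_v/L$ (rather than singling out the ``overshoot'' node), which slightly streamlines the case analysis.
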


\begin{proof}
We prove by induction that $\length(X^{\leq i}) \leq
(2+1/L)^{i-1}\length_q$ for all $i\leq D$.

The base case of $i=1$ is trivial, as $X^{\le 1}=\braced{r,q}$ and
$\length_r=0$.  For $i\geq 2$, $X^i$ is the~union of the sets $U(v,i,t)$
over all nodes $v \in X^{<i}$. Since $\calT$ is $L$-decreasing, each
node in the set $U(v,i,t)$ has weight at most $\length_v/L$.  Thus
the total weight of $U(v,i,t)$ is at most
$\length(U(v,i,t)) \le \length_v + \length_v/L = (1+1/L)\length_v$.
Therefore, by the inductive assumption, we get that
\begin{align*}
\length(X^{\leq i}) &\leq (1+(1+1/L)) \cdot \length(X^{<i}) 
\\
&\leq (2+1/L)\cdot(2+1/L)^{i-2}\length_q 
= (2+1/L)^{i-1}\length_q\,,
\end{align*}
proving the induction step and completing the proof that 
$\length(X)\leq R_L\cdot\length_q$.
\end{proof}

The competitive analysis uses a charging scheme. Fix some optimal
schedule $\optschedS$. Consider a service $(X,t)$ of
Algorithm~{\OnAlgTreesDeadlines}.  We will identify in $X$ a subset of
``critically overdue'' nodes (to be defined shortly) of total weight
at least $\length_q\geq\length(X)/R_L$, and we will show that for each
such critically overdue node $v$ we can charge the portion $\length_v$
of the service cost of $X$ to an earlier service in $\optschedS$ that
contains $v$.  Further, any node in service of $\optschedS$ will be
charged at most once.  This implies that the total cost of our
algorithm is at most $R_L$ times the optimal cost, giving us an upper
bound of $R_L$ on the competitive ratio for $L$-decreasing trees.

In the proof, by $\advtrans^t_v$ we denote the time of the first
service in $\optschedS$
that includes $v$ and is strictly after time $t$; we
also let $\advtrans^t_v=+\infty$ if no such service exists  
($\advtrans$ stands for \emph{next optimal service}).
For a service $(X,t)$ of the algorithm, we say that a node $v\in X$ is
\emph{overdue} at time $t$ if $d^t(v)<\advtrans^t_v$. Servicing of
such $v$ is delayed in comparison to $\optschedS$, because
$\optschedS$ must have served $v$ before or at time $t$.  Note
also that $r$ and $q$ are overdue at time $t$, as $d^t(r)=d^t(q)=t$ by
the choice of the service time.  We define $v\in X$ to be \emph{critically
  overdue} at time $t$ if (i) $v$ is overdue at $t$, and (ii)
there is no other service of the algorithm in the time interval
$(t,\advtrans^t_v)$ in which $v$ is overdue.

We are now ready to define the charging for a service $(X,t)$.  For
each $v\in X$ that is critically overdue, we charge its weight
$\length_v$ to the last service of $v$ in $\optschedS$ before or at
time $t$.  This charging is well-defined as, for each overdue $v$,
there must exist a previous service of $v$ in $\optschedS$. The
charging is obviously one-to-one because between any two services in
$\optschedS$ that involve $v$ there may be at most one service of the
algorithm in which $v$ is critically overdue.  The following lemma
shows that the total charge from $X$ is large enough.


\begin{lemma}\label{l:d:charge}
Let $(X,t)$ be a service of Algorithm~{\OnAlgTreesDeadlines} and
suppose that $v\in X$ is overdue at time $t$. 
Then the total weight of critically overdue nodes in $X_v$ at time $t$
is at least~$\length_v$.
\end{lemma}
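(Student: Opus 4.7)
The plan is to prove the lemma by induction on $D - \depth(v)$, so the base case is when $v$ is a leaf of $\calT$ and we work upward. Throughout, I will split into two cases depending on whether $v$ itself is critically overdue. The first case is easy: if $v$ is critically overdue, then $v$ alone already contributes $\length_v$ to the total, and the base case (leaf $v$) will collapse into this case.

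The substantive work is the case where $v$ is overdue at $t$ but not critically overdue. By definition this forces the existence of a later service $(X',t')$ of the algorithm with $t' \in (t,\advtrans^t_v)$, $v \in X'$, and $v$ overdue at $t'$; hence some request $\rho$ in $\calT_v$ is pending at $t'$ with deadline $d_\rho < \advtrans^t_v$. Two short observations will be needed here: (i) $\rho$ cannot have arrived after $t$, since otherwise OPT would be forced to serve $v$ at some time in $(a_\rho,d_\rho] \subseteq (t,\advtrans^t_v)$, contradicting the definition of $\advtrans^t_v$; and (ii) $\trignode_\rho \notin X$, since otherwise the algorithm would have served $\rho$ together with $\trignode_\rho$ at time $t$. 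Together, these two facts rule out the non-critical case entirely when $v$ is a leaf (there is no deeper descendant available to house $\rho$), so the base case follows.

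For the inductive step, I will single out the ``topmost gap'' $w$: the node on the path from $v$ down to $\trignode_\rho$ closest to $v$ with $w \notin X$. Its parent $w'$ lies in $X$, so $w$ is a legitimate candidate in $Z^{\depth(w)}_v$ when the algorithm computes $U(v,\depth(w),t)$. The fact that $w$ was not picked, combined with the definition of $\urgentnodes$, forces the chosen set $P = U(v,\depth(w),t) \subseteq X$ to be a strict urgency-prefix of $Z^{\depth(w)}_v$ lying before $w$, and with total weight at least $\length_v$. Using $d^t(a) < d^t(w) \le d_\rho < \advtrans^t_v \le \advtrans^t_a$ for every $a \in P$ (the last inequality holds because any service of $\optschedS$ containing the descendant $a$ must also contain $v$), each such $a$ will be overdue at $t$.

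The inductive hypothesis then applies to every $a \in P$ (since $\depth(a) > \depth(v)$), yielding critically overdue mass of at least $\length_a$ inside each $X_a$. Because the nodes of $P$ lie at a common depth, their subtrees $\{X_a\}_{a \in P}$ are pairwise disjoint, so summing gives critically overdue weight at least $\sum_{a\in P}\length_a = \length(P) \ge \length_v$ inside $X_v$, as required. I expect the main obstacle to be exactly this ``topmost gap plus urgency-prefix'' step: one must recognize that the failure of $v$ to be critically overdue really does force the algorithm to have already selected, at some deeper level via the urgency rule, enough overdue descendants of $v$ to be harvested by the induction, and that using the urgency rule applied from $v$ (not from $w'$) is what correctly scales the prefix weight to $\length_v$ rather than to $\length_{w'}$.
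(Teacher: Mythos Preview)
Your proposal is correct and follows essentially the same approach as the paper: induction on the height of the subtree rooted at $v$, with the same case split (critically overdue versus not), the same identification of a request $\rho$ witnessing overdueness at the later service, the same ``topmost gap'' node $w$ on the path to $\trignode_\rho$, and the same use of $U(v,\depth(w),t)$ to harvest overdue nodes for the inductive hypothesis. The only cosmetic differences are notation and that the paper makes explicit the equality $\advtrans^{t'}_v = \advtrans^t_v$, which you use implicitly when asserting $d_\rho < \advtrans^t_v$.
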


\begin{proof}
The proof is by induction on the depth of $\calT_v$, the induced
subtree rooted at $v$. 

The base case is when $\calT_v$ has depth $0$, that is when $v$ is a
leaf.  We show that in this case $v$ must be critically overdue, which
implies the conclusion of the lemma.  Towards contradiction, suppose
that there is some other service at time $t'\in (t , \advtrans^t_v)$
in which $v$ is overdue.  Since $v$ is a leaf, after the service at
time $t$ there are no pending requests in $\calT_v =\braced{v}$.  This
would imply that there is a request $\rho$ with $\trignode_\rho = v$
such that $t < a_\rho \le d_\rho < \advtrans^t_v$.  But this is not
possible, because $\optschedS$ does not serve $v$ in the time
interval $(t,\advtrans^t_v)$.  Thus $v$ is critically overdue and the
base case holds.

Assume now that $v$ is not a leaf, and that the lemma holds for all
descendants of $v$.  If $v$ is critically overdue, the conclusion of
the lemma holds.

Thus we can now assume that $v$ is not critically overdue. This means
that there is a~service $(Y,t')$ of Algorithm~{\OnAlgTreesDeadlines}
with $t<t'<\advtrans^t_v$ which contains $v$ and such that $v$ is
overdue at $t'$.  Thus $\advtrans^t_v=\advtrans^{t'}_v$.

Let $\rho$ be the request with $d_\rho=d^{t'}(v)$, i.e., the most
urgent request in $\calT_v$ at time $t'$.


\begin{figure}
\begin{center}
\includegraphics[height=2.3in]{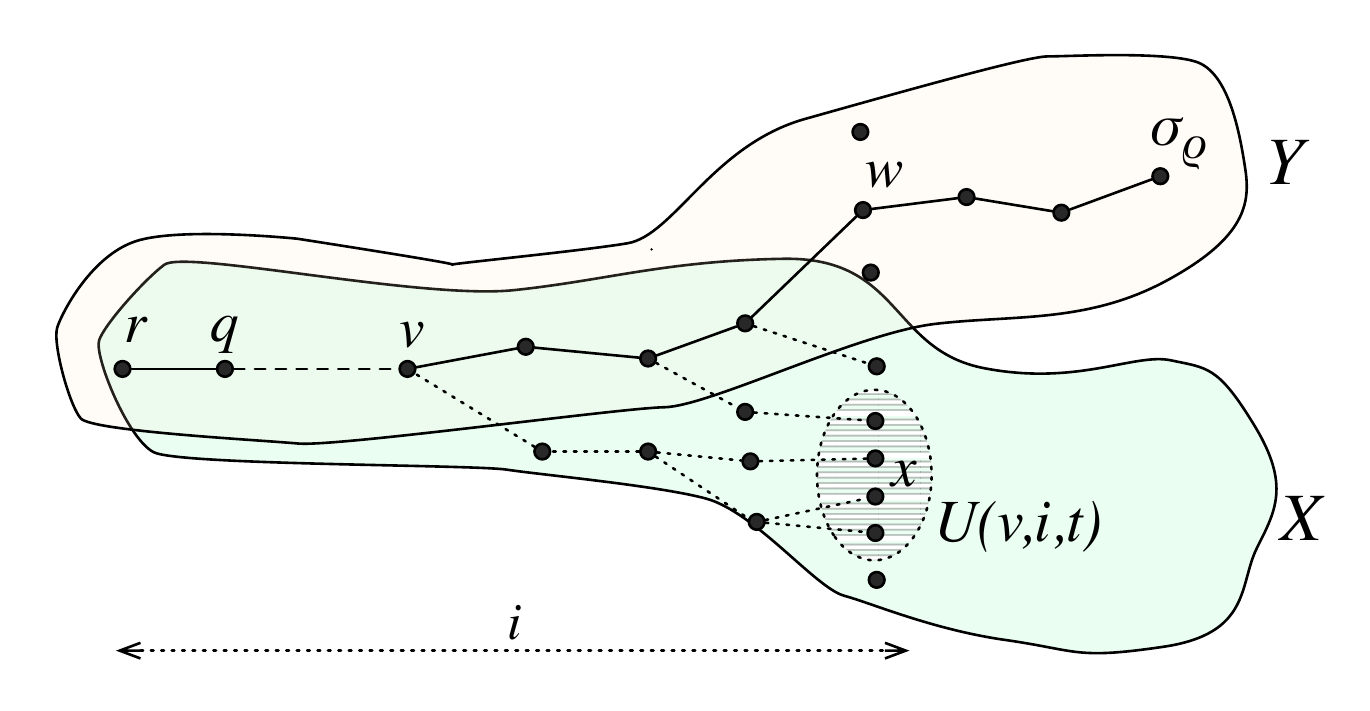}
\caption{Illustration of the proof of Lemma~\ref{l:d:charge}.}
\label{fig: online mlapd analysis}
\end{center}
\end{figure}

We claim that $a_\rho \leq t$, i.e., $\rho$ arrived no later than
at time $t$. Indeed, since $v$ is overdue at time $t'$, it follows
that $d_\rho < \advtrans^{t'}_v=\advtrans^t_v$.  The optimal schedule
$\optschedS$ cannot serve $\rho$ after time $t$, as $\optschedS$ has no
service from $v$ in the interval $(t,d_\rho]$. Thus $\optschedS$ must
have served $\rho$ before or at $t$, and hence $a_\rho \le t$, as
claimed.

Now consider the path from $\trignode_\rho$ to $v$ in $Y$.  (See
Figure~\ref{fig: online mlapd analysis}.)  As $\rho$ is pending
for the algorithm at time $t$ and $\rho$ is not served by $(X,t)$,
it follows that $\trignode_\rho\not\in X$.  Let $w$ be the last node
on this path in $Y-X$. Then $w$ is well-defined and $w\neq v$, as
$v\in X$. Let $i$ be the depth of $w$.  Note that the parent of $w$ is
in $X^{<i}_v$, so $w\in Z^i$ in the algorithm when $X$ is constructed.

The node $\trignode_\rho$ is in $\calT_w$ and $\rho$ is pending at
$t$, thus we have $d^t(w)\leq d_\rho$.  Since $w \in Z^i$ but $w$ was
not added to $X$ at time $t$, we have that $\length(U(v,i,t))\geq
\length_v$ and each $x\in U(v,i,t)$ is at least as urgent as $w$.
This implies that such $x$ satisfies
\begin{equation*}
 d^t(x)\leq d^t(w) \le d_\rho<\advtrans^{t'}_v=\advtrans^t_v\leq\advtrans^t_x,
\end{equation*}
thus $x$ is overdue at time $t$. By the inductive assumption, the
total weight of critically overdue nodes in each induced 
subtree $X_x$ is at
least $\length_x$. Adding these weights over all $x\in U(v,i,t)$,
we obtain that the total weight of critically overdue nodes in $X_v$ is at
least $\length( U(v,i,t))\geq \length_v$, completing the proof.
\end{proof}

Now consider a service $(X,t)$ of the algorithm. The quasi-root $q$ is
overdue at time $t$, so Lemmata~\ref{l:d:charge} and~\ref{l:d:complete}
imply that the charge from $(X,t)$ is at least
$\length_q\ge \length(X)/R_L$. Since
each node in any service in $\optschedS$ is charged at most once, we
conclude that Algorithm~{\OnAlgTreesDeadlines} is $R_L$-competitive for any
$L$-decreasing tree $\calT$.

From the previous paragraph, using Theorem~\ref{thm:reduction}, we now
obtain that there exists a $DLR_L = DL(2+1/L)^{D-1}$-competitive
algorithm for general trees. For $D\ge 2$, choosing $L=D/2$ yields a competitive
ratio bounded by
$
\half D^2 2^{D-1}\cdot  (1+1/D)^D
		\leq \onefourth D^22^D\cdot \e
		\leq D^22^D
$.
(For $D=1$ there is a trivial $1$-competitive algorithm for {\MLAPD}.)
Summarizing, we obtain the following result. 

\begin{theorem}
There exists a $D^22^D$-competitive online algorithm for {\MLAPD}.
\end{theorem}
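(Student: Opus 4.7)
The plan is to first design an algorithm tailored to $L$-decreasing trees achieving ratio $R_L=(2+1/L)^{D-1}$, and then lift it to arbitrary trees using Theorem~\ref{thm:reduction}. For the $L$-decreasing case, I would react only when some request reaches its deadline; at such a trigger time $t$, starting from $\{r,q\}$, the algorithm grows a service tree $X$ level by level. At each depth $i$, for every $v\in X^{<i}$ it adds to $X$ the children of nodes in $X^{i-1}$ that lie under $v$, chosen in order of the earliest pending deadline in their induced subtrees, until their total weight first crosses $\length_v$. This is exactly Algorithm~{\OnAlgTreesDeadlines}, and it clearly serves every request by its deadline because the path from $r$ to the triggering leaf qualifies as most urgent at every level.

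The service-cost bookkeeping is an easy induction on depth: since $\calT$ is $L$-decreasing, every node added under $v$ has weight at most $\length_v/L$, so each batch contributes at most $(1+1/L)\length_v$, yielding $\length(X^{\le i})\le(2+1/L)^{i-1}\length_q$ and hence $\length(X)\le R_L\length_q$. The main work is the amortization. I would charge the service $(X,t)$ by selecting \emph{critically overdue} nodes $v\in X$, i.e., nodes whose earliest pending deadline $d^t(v)$ lies strictly before $\advtrans^t_v$, the next service of $v$ in a fixed optimum $\optschedS$ after $t$, and such that the algorithm has no other service in $(t,\advtrans^t_v)$ where $v$ is still overdue. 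Each such $v$ is charged to the last service of $v$ in $\optschedS$ at or before $t$; disjointness of the intervals guarantees that every OPT service of $v$ absorbs at most one such charge.

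The crux is to show that the critically overdue weight in $X_v$ is at least $\length_v$ for every overdue $v\in X$, which I would prove by induction on the depth of $\calT_v$. The base case (leaves) is immediate. For the inductive step, if $v$ itself is not critically overdue, there must be a later service $(Y,t')$ with $t<t'<\advtrans^t_v$ at which $v$ is again overdue; the request $\rho$ realizing $d^{t'}(v)$ must have been pending already at time $t$, because $\optschedS$ does not serve $v$ in between. Tracing the path from $\trignode_\rho$ to $v$ in $Y$ to its last node $w\notin X$ at some depth $i$, the fact that $w\in Z^i$ was rejected from $X$ at time $t$ means $\length(U(v,i,t))\ge\length_v$ and every $x\in U(v,i,t)$ has $d^t(x)\le d^t(w)\le d_\rho<\advtrans^t_x$, i.e.~is overdue; the inductive hypothesis then contributes weight $\length_x$ from each such $X_x$, summing to at least $\length_v$. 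I expect this combinatorial step -- linking the greedy deadline ordering at $t$ to overdueness of an entire batch -- to be the main obstacle, and the reason the service tree must be grown strictly level by level.

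Applying the charging lemma at $v=q$ (which is overdue because $d^t(q)=t$) yields total charge at least $\length_q\ge\length(X)/R_L$, so the algorithm is $R_L$-competitive on $L$-decreasing trees. Combining with Theorem~\ref{thm:reduction} gives ratio $DL\cdot R_L$ on general trees; setting $L=D/2$ for $D\ge 2$ (and taking the trivial $1$-competitive algorithm for $D=1$) gives $\tfrac12 D^2 2^{D-1}(1+1/D)^D\le\tfrac{\e}{4}D^2 2^D\le D^2 2^D$, completing the proof.
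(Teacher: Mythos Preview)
Your proposal is correct and follows essentially the same approach as the paper: you use Algorithm~{\OnAlgTreesDeadlines} on $L$-decreasing trees, bound the service cost by $R_L\length_q$ via the same level-by-level induction, define overdue and critically overdue nodes identically, prove the key charging lemma by the same induction (locating the node $w$ on the path in the later service $Y$ and using it to certify that $U(v,i,t)$ is full and entirely overdue), and finish with Theorem~\ref{thm:reduction} at $L=D/2$. The only point worth tightening is the one-to-one property of the charging: rather than ``disjointness of intervals,'' the clean reason is that if two algorithm services at times $t_1<t_2$ both charged the same OPT service of $v$, then $t_2\in(t_1,\advtrans^{t_1}_v)$ with $v$ overdue at $t_2$, contradicting critical overdueness at $t_1$.
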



\section{A Competitive Algorithm for {\MLAP}}
\label{sec: competitive algorithm for mlap}

In this section we show that there is an online algorithm for {\MLAP} 
whose competitive ratio for trees of depth $D$ is $O(D^42^D)$. 
As in Section~\ref{sec: competitive algorithm for mlap-d}, we will
assume that the tree $\calT$ in the instance is $L$-decreasing.
Then, for $L$-decreasing trees, we will present a competitive algorithm,
which will imply the existence of a competitive algorithm for
arbitrary trees by using Theorem~\ref{thm:reduction}
and choosing an appropriate value of $L$.


\subsection{Preliminaries and Notations}
\label{subsec: mlap notations}

Recall that $\omega_\rho(t)$ denotes the waiting cost function of a request
$\rho$. As explained in Section~\ref{sec: preliminaries}, we assume that 
the waiting cost functions are continuous. 
(In Section~\ref{sec: general waiting costs} we discuss how to extend our
results to arbitrary waiting cost functions.)
We will overload this notation, so that we can talk about the waiting
cost of a set of requests or a set of
nodes. Specifically, for a set $P$ of requests and a set $Z$ of
nodes, let
\begin{equation*}
\omega_P(Z,t) = \sum_{\rho\in P: \trignode_\rho\in Z} \omega_\rho(t).
\end{equation*}
Thus $\omega_P(Z,t)$ is the total waiting cost of the requests from
$P$ that are issued in $Z$. 
We sometimes omit $P$, in which case the notation refers to the set of
all requests in the instance, that is $\omega(Z,t) = \omega_{\calR}(Z,t)$. 
Similarly, we omit $Z$ when $Z$ contains all nodes,
that is $\omega_P(t) = \omega_P(\calT,t)$.


\paragraph{Maturity time.}

In our algorithm for {\MLAPD} in Section~\ref{sec: competitive algorithm for mlap-d}, 
the times of services and the
urgency of nodes are both naturally determined by the deadlines. For
{\MLAP} with continuous waiting costs there are no hard deadlines. 
Nevertheless, we can still introduce the notion of \emph{maturity
  time} of a node, which is, roughly speaking, the time when some 
subtree rooted at this node has its waiting cost equal to its
service cost; this subtree is then called \emph{mature}.
This maturity time will be our urgency function, as discussed 
earlier in Section~\ref{sec: preliminaries}. We use the maturity
time in two ways: one, the maturity times of the 
quasi-root determine the service times, and two, maturity times
of other nodes are used to prioritize them for inclusion in
the service trees. We now proceed to define these notions.

Consider some time $t$ and any set $P\subseteq\calR$ of requests. A
subtree $Z$ of $\calT$ (not necessarily rooted at $r$) is called
\emph{$P$-mature} at time $t$ if $\omega_P(Z,t) \ge \length(Z)$.
Also, let $\treematurity_P(Z)$ denote the minimal time $\tau$ such
that $\omega_P(Z,\tau) = \length(Z)$; we let $\treematurity_P(Z) =
\infty$ if such $\tau$ does not exist. In other words,
$\treematurity_P(Z)$ is the earliest $\tau$ at which $Z$ is
$P$-mature. Since $\omega_P(Z,0)=0$ and $\omega_P(Z,t)$ is a
non-decreasing and continuous function of $t$, $\treematurity_P(Z)$ is
well-defined.

For a node $v$, let the \emph{$P$-maturity time of $v$}, denoted
$\vertmaturity_P(v)$, be the minimum of values $\treematurity_P(Z)$
over all subtrees $Z$ of $\calT$ rooted at $v$.  The tree $Z$ that
achieves this minimum will be denoted $C_P(v)$ and called the
\emph{$P$-critical subtree rooted at $v$}; if there are more such trees,
choose one arbitrarily. Therefore we have
$\omega_P(C_P(v),\vertmaturity_P(v))=\length(C_P(v))$.

The following simple lemma guarantees that the maturity time of any
node in the $P$-critical subtree $C_P(v)$ is upper bounded by the maturity time of $v$.


\begin{lemma}
\label{lem: maturity-critical}
Let $u\in C_P(v)$ and let $Y =(C_P(v))_u$ be the induced subtree of 
$C_P(v)$ rooted at $u$. Then 
$\vertmaturity_P(u)\leq \treematurity_P(Y)\leq \vertmaturity_P(v)$.
\end{lemma}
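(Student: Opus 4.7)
The plan is to prove the two inequalities separately. For the first inequality, $\vertmaturity_P(u) \le \treematurity_P(Y)$, I would simply observe that $Y = (C_P(v))_u$, being the set of descendants of $u$ within $C_P(v)$, is itself a subtree of $\calT$ rooted at $u$. Hence $Y$ is one of the candidates in the minimization defining $\vertmaturity_P(u) = \min_Z \treematurity_P(Z)$ over subtrees rooted at $u$, and the inequality follows at once from the definition.

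For the second inequality, $\treematurity_P(Y) \le \vertmaturity_P(v)$, I would set $\tau := \vertmaturity_P(v)$, so that $\omega_P(C_P(v),\tau) = \length(C_P(v))$. If $u = v$, then $Y = C_P(v)$ and the inequality holds with equality. Otherwise, let $Z' := C_P(v) \setminus Y$. The key structural observation is that $Z'$ is still a subtree of $\calT$ rooted at $v$: the only edge of $C_P(v)$ joining $Y$ with $C_P(v) \setminus Y$ is the edge from $u$ to its parent in $C_P(v)$, so removing the entire subtree $Y$ leaves a connected subgraph that still contains $v$.

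The main step is then to exploit additive decomposition. Since $Y$ and $Z'$ partition the nodes of $C_P(v)$, we have
\[
\length(C_P(v)) = \length(Z') + \length(Y), \qquad \omega_P(C_P(v),\tau) = \omega_P(Z',\tau) + \omega_P(Y,\tau).
\]
Combining these with $\omega_P(C_P(v),\tau) = \length(C_P(v))$, suppose for contradiction that $\omega_P(Y,\tau) < \length(Y)$. Then $\omega_P(Z',\tau) > \length(Z')$. Using the assumption that all waiting cost functions are continuous (so $\omega_P(Z',\cdot)$ is continuous and non-decreasing, with value $0$ at time $0$), the intermediate value theorem yields some $\tau' \in [0,\tau]$ with $\omega_P(Z',\tau') = \length(Z')$; and since $\omega_P(Z',\tau)$ strictly exceeds $\length(Z')$, necessarily $\tau' < \tau$. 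Thus $\treematurity_P(Z') \le \tau' < \tau = \vertmaturity_P(v)$, contradicting the minimality of $\vertmaturity_P(v)$ over subtrees rooted at $v$. Hence $\omega_P(Y,\tau) \ge \length(Y)$, so $Y$ is $P$-mature at $\tau$, giving $\treematurity_P(Y) \le \tau$.

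The main obstacle I anticipate is being careful with the structural claim that $Z' = C_P(v) \setminus Y$ remains a valid subtree rooted at $v$ in the case $u \neq v$, and invoking continuity rather than mere monotonicity of $\omega_P(Z',\cdot)$ to convert the strict inequality at time $\tau$ into a strictly earlier maturity time; everything else is a routine additive/IVT argument.
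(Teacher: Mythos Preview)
Your proof is correct and follows essentially the same approach as the paper: the first inequality is immediate from the definition, and for the second you argue by contradiction that removing $Y$ from $C_P(v)$ yields a subtree $Z'$ rooted at $v$ that would mature strictly before $\vertmaturity_P(v)$. The paper's proof is identical in structure (its $Y'$ is your $Z'$); you are simply more explicit about the $u=v$ case, the connectivity of $Z'$, and the use of continuity to pass from $\omega_P(Z',\tau) > \length(Z')$ to $\treematurity_P(Z') < \tau$.
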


\begin{proof}
The first inequality follows directly from the definition of $\vertmaturity_P(u)$. 
To show the second inequality, we proceed by contradiction. 
Let $t = \vertmaturity_P(v)$.
If the second inequality does not
hold, then $u\neq v$ and $\omega_P(Y,t)< \length(Y)$. 
Take $Y' = C_P(v) - Y$, which is a tree rooted at $v$.
Since $\omega_P(C_P(v),t)=\length(C_P(v))$, we have that 
$\omega_P(Y',t) = \omega_P(C_P(v),t) - \omega_P(Y,t)
	> \length(C_P(v)) - \length(Y)
	= \length(Y')$. 
This in turn implies that $\treematurity_P(Y') < t$, which is a
contradiction with the definition of $t = \vertmaturity_P(v)$.
\end{proof}

Most of the references to maturity of a node or to its critical set
will be made with respect to the set of requests pending for our
algorithm at a given time.  For any time $t$, we will use notation
$\vertmaturity^t(v)$ and $C^t(v)$ to denote the time
$\vertmaturity_P(v)$ and the $P$-critical subtree $C_P(v)$, where $P$
is the set of requests pending for the algorithm at time $t$; if the
algorithm schedules a service at some time $t$, $P$ is the set of
requests that are pending at time $t$ \emph{right before} the service
is executed. Note that in general it is possible that
$\vertmaturity^t(v)<t$. However, our algorithm will maintain the
invariant that for the quasi-root $q$ we will have
$\vertmaturity^t(q)\geq t$ at each time $t$.


\subsection{Algorithm}

We now describe our algorithm for $L$-decreasing trees.
A service will occur at each maturity
time of the quasi-root $q$ (with respect to the pending requests), 
that is at each time $t$ for which $t=\vertmaturity^t(q)$.
At such a time, the algorithm chooses a service that contains the critical subtree
$C = C^t(q)$ of $q$ and an extra set $E$, whose service cost is not much more
expensive than that of $C$. The extra set is constructed similarly as
in Algorithm~{\OnAlgTreesDeadlines}, where the urgency of nodes is now measured
by their maturity time. In other words, our urgency function
is now $f = \vertmaturity^t$ (see Section~\ref{sec: preliminaries}.)
As before, this extra set will be a union of a
system of sets $U(v,i,t)$ for $i=2,\ldots,D,$ and $v\in C^{<i}\cup
E^{<i}$, except that now, for technical reasons,
the sets $U(v,i,t)$ will be mutually disjoint and also disjoint from $C$.

\paragraph{Algorithm~{\OnAlgTreesGeneral}.}

At any time $t$ such that $t=\vertmaturity^t(q)$, serve the set 
$X = C\cup E$ constructed according to the following pseudo-code:
\begin{tabbing}
aaa \= aaa \= aaa \= aaa \= aaa \= aaa \= \kill
\>
$C\gets C^t(q) \cup \braced{r}$
\\
\> $E\gets \emptyset$
\\
\> \textbf{for each} depth $i=2,\ldots,D$
\\
\> \> $Z^i\gets$ set of all nodes in $\calT^i - C$ whose parent is in $C\cup
E$
\\
\> \> \textbf{for each} $v\in (C\cup E)^{<i}$ 
\\
\> \> \> $U(v,i,t) \gets \urgentnodes(Z^i_v,\length_v,\vertmaturity^t)$ 
\\
\> \> \> $E \gets E \cup U(v,i,t)$
\\
\> \> \> $Z^i \gets Z^i - U(v,i,t)$
\end{tabbing}
At the end of the instance (when $t=H$, the time horizon), if there
are any pending requests, issue the last service that contains
all nodes $v$ with a pending request in $\calT_v$.


\medskip

Note that $X=C\cup E$ is indeed a service tree, as it contains $r,q$ and we are
adding to it only nodes  that are children of the nodes already in $X$.
The initial choice and further changes of $Z^i$ imply that the sets
$U(v,i,t)$ are pairwise disjoint and disjoint from $C$ -- a fact that will
be useful in our analysis. 


We also need the following fact.

\begin{lemma}\label{lem: maturity times of q}
	(a) Suppose that Algorithm~{\OnAlgTreesGeneral} issues a service at
	a time $t$, that is $\vertmaturity^{t}(q) = t$.
	Denote by $\vertmaturity^{t^+}(q)$ the maturity time of $q$ right
	after the service at time $t$. Then $\vertmaturity^{t^+}(q) > t$.
(b) At any time $t$ we have	$\vertmaturity^t(q)\geq t$.
\end{lemma}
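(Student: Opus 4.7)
The plan is to establish part~(b) as an invariant maintained along the timeline, with part~(a) providing the key step at each service event.

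For part~(a), let $P$ denote the set of requests pending immediately before the service at time $t$ (so $\vertmaturity_P(q) = t$), and let $P' = P \setminus \braced{\rho \suchthat \trignode_\rho \in X}$ be the set right after. I would first establish a structural claim: the union of any two $P$-critical subtrees rooted at $q$ is again $P$-critical at time $t$. This follows by applying inclusion--exclusion to $\omega_P$ and $\length$ at time $t$, together with the observation that the intersection of two subtrees rooted at $q$ is itself a subtree rooted at $q$ and therefore satisfies $\omega_P(Z_1 \cap Z_2, t) \leq \length(Z_1 \cap Z_2)$ by the minimality of $\vertmaturity_P(q)$; the inequalities combine to yield $\omega_P(Z_1 \cup Z_2, t) \geq \length(Z_1 \cup Z_2)$. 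Consequently the family of $P$-critical subtrees rooted at $q$ admits a unique maximum by inclusion, which I would stipulate as the algorithm's choice of $C^t(q)$ (the original definition permits an arbitrary choice among ties). With $C$ chosen maximally, the claim $\vertmaturity^{t^+}(q) > t$ reduces to a dichotomy on subtrees $Z$ rooted at $q$: if $Z \subseteq C$, every request at a node of $Z$ is served, so $\omega_{P'}(Z,t) = 0 < \length(Z)$; if $Z \not\subseteq C$, then $Z$ cannot itself be critical, since $Z \cup C$ would strictly enlarge the maximum critical subtree, so $\treematurity_P(Z) > t$ strictly, giving $\omega_{P'}(Z,t) \leq \omega_P(Z,t) < \length(Z)$.

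For part~(b), I would induct along the timeline on the invariant $\vertmaturity^s(q) \geq s$. Initially no request is pending, so $\vertmaturity^0(q) = +\infty$. Between events the pending set is constant in $s$ and hence so is $\vertmaturity^s(q)$; the invariant can only be threatened when $s$ catches up to $\vertmaturity^s(q)$, but precisely at that moment a service is triggered, and part~(a) reestablishes the strict inequality. At the arrival of a request $\rho$ at time $a_\rho$, subtrees $Z$ not containing $\trignode_\rho$ are unaffected, while for $Z$ containing $\trignode_\rho$ the fact that $\omega_\rho(\tau) = 0$ for $\tau \leq a_\rho$ combined with the inductive hypothesis yields $\omega_{P \cup \braced{\rho}}(Z,\tau) = \omega_P(Z,\tau) < \length(Z)$ for $\tau < a_\rho$, whence $\treematurity_{P \cup \braced{\rho}}(Z) \geq a_\rho$ and the invariant survives the arrival.

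The main obstacle is the tie-breaking in the choice of $C^t(q)$: with an arbitrary choice, a different critical subtree could survive the service intact and leave $\vertmaturity^{t^+}(q) = t$, spoiling both parts of the lemma. The closure of critical subtrees under union is the structural observation that enables pinning down the correct (maximal) choice of $C$ and extracting the strict inequality required in part~(a), which in turn drives the invariant in part~(b).
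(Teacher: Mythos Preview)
Your argument is correct, but the route differs from the paper's and your concern about tie-breaking is misplaced. For part~(a), the paper argues directly: if after the service some subtree $Y$ rooted at $q$ satisfied $\omega(Y,t^+)\ge\length(Y)$, then, using $\omega(Y,t^+)=\omega_P(Y\setminus X,t)$ together with $\omega_P(C,t)=\length(C)$ for the critical subtree $C\subseteq X$, one obtains
\[
\omega_P(C\cup Y,t)\;\ge\;\length(C)+\length(Y)\;>\;\length(C\cup Y),
\]
the strict inequality coming from $q\in C\cap Y$ and $\length_q>0$; this contradicts $\vertmaturity^t(q)=t$. This works for \emph{any} choice of critical subtree $C$, so no stipulation about maximality is needed. (The paper's write-up actually runs the inequality with the full service tree $X$ in place of $C$ and asserts $\omega(X,t)=\length(X)$, which is not justified in general since the extra set $E$ need not be mature; the argument is clean once one reads $C$ for $X$.)

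Your approach --- closure of critical subtrees under union, hence a unique maximal $C$, then a dichotomy $Z\subseteq C$ versus $Z\not\subseteq C$ --- also works, and the closure observation is pleasant (essentially the same idea appears later in the paper as Lemma~\ref{lem: Ot unique}). But it buys nothing here: the claim that ``with an arbitrary choice, a different critical subtree could survive the service intact'' is false. For any critical $C$ and any subtree $Z\ni q$, the paper's inequality gives $\omega_P(Z\setminus C,t)\le \length(Z)-\length(C\cap Z)\le \length(Z)-\length_q<\length(Z)$, so after serving any $X\supseteq C$ every such $Z$ is strictly below its weight. Your stipulation therefore modifies the algorithm unnecessarily; the lemma holds as stated for arbitrary tie-breaking. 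Part~(b) is handled the same way in both proofs.
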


To clarify the meaning of ``right after the service''
in this lemma, $\vertmaturity^{t^+}(q)$ is defined formally as the 
limit of $\vertmaturity^{\tau}(q)$, with $\tau$ approaching $t$ from the right.

\begin{proof}
(a) 
Let $\vertmaturity^t(q) = t$ and let $(X,t)$ be the service at time
$t$. This means that we have $\omega(X,t) = \ell(X)$ and
$\omega(Y,t)\leq\ell(Y)$ for all subtrees $Y$ of $\calT$ rooted at
$r$.  Consider any subtree $Y$ of $\calT$ rooted at $r$ different from
$X$.  Denoting by $\omega(Y,t^+)$ the waiting cost of the packets that
are pending in $Y$ right after the service $(X,t)$, it is sufficient
to prove that $\omega(Y,t^+) < \length(Y)$.

Towards contradiction, suppose that $\omega(Y,t^+) \ge \length(Y)$.
Then we have
\begin{align*}
\omega(X\cup Y,t) &= \omega(X,t) + \omega(Y-X,t)
	\\
			&= \omega(X,t) + \omega(Y,t^+)
	\\
			&\ge \length(X) + \length(Y)
	\\
			&> \length(X\cup Y),
\end{align*}
where the last (strict) inequality follows from $q\in X\cap Y$ and
$\length_q > 0$.
But $X\cup Y$ is a subtree of $\calT$ rooted at $r$, so
the inequality $\omega(X\cup Y,t) > \length(X\cup Y)$
contradicts our assumption that $\vertmaturity^t(q) = t$.	

(b)
The lemma holds trivially at the beginning, at time $t=0$. In any time interval
without new requests released nor services, the inequality
$\vertmaturity^t(q)\geq t$ is preserved, by the definition of
the service times and continuity of waiting cost functions.
Releasing a request $\rho$ at a time $a_\rho = t$ 
cannot decrease $\vertmaturity^t(q)$ to below $t$, because the waiting cost function
of $\rho$ is identically 0 up to $t$ and thus
releasing $\rho$ does not change the waiting costs at time $t$ or before. 
Finally, part~(a) implies that the inequality is also preserved when
services are issued.
\end{proof}

By Lemma~\ref{lem: maturity times of q} (and the paragraph before),
the definition of the algorithm is sound, that is the sequence of
service times is non-decreasing. In fact, the lemma
shows that no two services can occur at the same time.
 

\subsection{Competitive Analysis}

We now present the proof of the existence of an
$O(D^42^D)$-competitive algorithm for $\MLAP$ for
trees of depth~$D$. The overall argument is quite intricate, so we 
will start by summarizing its main steps:

\begin{itemize}

\item First, as explained earlier, we will assume that the tree
$\calT$ in the instance is $L$-decreasing. 
For such $\calT$ we will show that Algorithm~{\OnAlgTreesGeneral}
has competitive ratio $O(D^2R_L)$, where
$R_L = (2+1/L)^{D-1}$. Our bound on the competitive ratio for
arbitrary trees will then follow, by using Theorem~\ref{thm:reduction}
and choosing an appropriate value of $L$ (see Theorem~\ref{thm: general mlap competitive}).

\item For $L$-decreasing trees,
	the bound of the competitive ratio of Algorithm~{\OnAlgTreesGeneral}
	involves four ingredients:
	
	\begin{itemize}
		
		\item We show (in Lemma~\ref{lem: mlap, cost <= 2*service})
		that the total cost of Algorithm~{\OnAlgTreesGeneral}
			is at most twice its service cost.
			
		\item Next, we show that the service cost of
			Algorithm~{\OnAlgTreesGeneral} can be bounded 
			(within a constant factor) by the total cost of all
			critical subtrees $C^t(q)$ of the service trees in its schedule.
			
		\item To facilitate the estimate of the adversary cost, 
			we introduce the concept of a \emph{pseudo-schedule}
			denoted $\pseudoschedS$. The pseudo-schedule
                        $\pseudoschedS$ is a collection of 
			\emph{pseudo-services}, which include the services
			from the original adversary schedule $\optschedS$.
			We show (in Lemma~\ref{lem: bound on pseudo sched})
			that the adversary pseudo-schedule has service
			cost not larger than $D$ times the cost of $\optschedS$.
			Using the pseudo-schedule allows us to ignore the waiting cost in the
			adversary's schedule.
			
		\item With the above bounds established, it remains to
			show that the total cost of critical subtrees in the
			schedule of Algorithm~{\OnAlgTreesGeneral}
			is within a constant factor of the service cost of the
			adversary's	pseudo-schedule. This is accomplished through a 
			charging scheme that charges nodes (or, more precisely,
			their weights) from each critical subtree of
			Algorithm~{\OnAlgTreesGeneral} to their appearances
			in some earlier adversary pseudo-services.
		
	\end{itemize}
	
\end{itemize}


\paragraph{Two auxiliary bounds.}
We now assume that $\calT$ is $L$-decreasing and proceed with our
proof, according to the outline above.

The definition of the
maturity time implies that the waiting cost of all the requests
served is at most the service cost $\length(X)$, as otherwise
$X$ would be a good candidate for a critical subtree at some
earlier time. Denoting by $\schedS$ the schedule computed by
Algorithm~{\OnAlgTreesGeneral}, we thus obtain:


\begin{lemma}\label{lem: mlap, cost <= 2*service}
$\cost(\schedS) \le 2\cdot\scost(\schedS)$.
\end{lemma}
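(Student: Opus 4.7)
The plan is to observe that $\cost(\schedS) = \scost(\schedS) + \wcost(\schedS)$, so the claim reduces to showing $\wcost(\schedS) \le \scost(\schedS)$, i.e., the waiting cost of the algorithm never exceeds its service cost. The natural way to establish this is service-by-service: I will write $\wcost(\schedS) = \sum_{(X,t)\in\schedS} \omega_{P_t}(X,t)$, where $P_t$ denotes the set of requests that are pending for the algorithm immediately before the service $(X,t)$, and then argue that each summand $\omega_{P_t}(X,t)$ is bounded above by $\length(X)$.

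To prove the per-service bound, I will exploit the algorithm's invariant that service times coincide with maturity times of the quasi-root. By Lemma~\ref{lem: maturity times of q}(b) we have $\vertmaturity^t(q) \ge t$ at every moment $t$, which by the definition of $\vertmaturity$ means that every subtree $Y$ rooted at $q$ satisfies $\treematurity_{P_t}(Y) \ge t$, and hence $\omega_{P_t}(Y,t) \le \length(Y)$ (the strict inequality $\omega_{P_t}(Y,t) > \length(Y)$ would, by continuity of waiting costs and $\omega_{P_t}(Y,0)=0$, force $\treematurity_{P_t}(Y) < t$). Applied to $Y = X \setminus \{r\}$, which is a subtree rooted at $q$ with $\length(Y) = \length(X)$ since $\length_r = 0$, this gives $\omega_{P_t}(X,t) \le \length(X)$. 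Summing over all services yields $\wcost(\schedS) \le \scost(\schedS)$, and the lemma follows.

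One step that requires care is the handling of the final service at the time horizon $t = H$, which is triggered not by maturity of $q$ but by the end of the instance. Here $\vertmaturity^{H}(q)$ may be strictly greater than $H$, but this only strengthens the bound: the inequality $\omega_{P_H}(X,H) \le \length(X)$ still holds via the same argument. A second minor point is the precise definition of ``requests served by $(X,t)$'': only requests $\rho \in P_t$ with $\trignode_\rho \in X$ are served, and for each such $\rho$ the contribution to $\wcost(\schedS)$ is $\omega_\rho(t)$, which is exactly what $\omega_{P_t}(X,t)$ accumulates, so the decomposition in the first step is indeed valid. I do not expect any real obstacle here; the entire argument is essentially a one-line consequence of the invariant in Lemma~\ref{lem: maturity times of q}(b) together with the observation $\length_r = 0$.
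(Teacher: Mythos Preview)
Your proof is correct and follows essentially the same approach as the paper's. The paper's proof is a one-sentence sketch (``otherwise $X$ would be a good candidate for a critical subtree at some earlier time''), and you have correctly expanded this into a formal argument via Lemma~\ref{lem: maturity times of q}(b), including the careful treatment of the final service at time $H$ and the observation that $X\setminus\{r\}$ is a subtree rooted at $q$ with $\length(X\setminus\{r\})=\length(X)$.
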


Using Lemma~\ref{lem: mlap, cost <= 2*service},
 we can restrict ourselves to bounding the service
cost, losing at most a factor of $2$.  We now bound the cost of 
a given service $X$; recall that $R_L = (2+1/L)^{D-1}$.


\begin{lemma}\label{l:complete}
Each service tree $X=C\cup E$ constructed by the algorithm satisfies
$\length(X)\leq R_L\cdot\length(C)$.
\end{lemma}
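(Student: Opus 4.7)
The plan is to mirror the argument of Lemma~\ref{l:d:complete}, adjusted for the fact that the service tree now decomposes as $X=C\cup E$, with $C$ prescribed (the critical subtree of $q$ together with $r$) rather than produced by doubling at every level. The main target is a level-by-level recurrence
\[
\length(X^{\le i}) \;\le\; (2+1/L)\,\length(X^{<i}) + \length(C^i)
\qquad\text{for every }i\ge 2,
\]
with the trivial base case $\length(X^{\le 1}) = \length(\{r,q\}) = \length(C^{\le 1})$; unrolling this $D-1$ times will yield the claimed bound.

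First I would verify two structural facts used implicitly throughout: that the family $\{U(v,i,t) : v\in (C\cup E)^{<i}\}$ is pairwise disjoint (immediate from the line $Z^i\gets Z^i - U(v,i,t)$ inside the inner loop) and that each of these sets is disjoint from $C$ (because the outer loop initializes $Z^i\gets \calT^i - C$). Together these give the disjoint decomposition $X^i=C^i\sqcup E^i$ and, in particular, $\length(E^i)=\sum_v \length(U(v,i,t))$.

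Next I would bound a single $\length(U(v,i,t))$ via the $L$-decreasing property. For $v\ne r$, any candidate $x\in Z^i_v$ has depth $i$ and lies in $\calT_v$, so the chain from $v$ to $x$ has at least one edge and the $L$-decreasing property forces $\length_x\le \length_v/L$; the stopping rule of $\urgentnodes$ then gives $\length(U(v,i,t))\le \length_v+\length_v/L=(1+1/L)\length_v$. For $v=r$ the set $U(r,i,t)$ is empty since the threshold $\length_r=0$ is met by the empty set, so the same bound holds trivially. Summing over $v\in (C\cup E)^{<i}=X^{<i}$ produces $\length(E^i)\le (1+1/L)\length(X^{<i})$, and adding $\length(X^{<i})+\length(C^i)$ to $\length(X^i)=\length(C^i)+\length(E^i)$ yields the recurrence.

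The last step is to unroll it with $\alpha=2+1/L\ge 1$:
\[
\length(X) \;\le\; \alpha^{D-1}\length(C^{\le 1}) + \sum_{i=2}^{D}\alpha^{D-i}\length(C^i) \;\le\; \alpha^{D-1}\length(C) = R_L\,\length(C),
\]
where the second inequality upper bounds each coefficient $\alpha^{D-i}$ by $\alpha^{D-1}$ and then collects the $C^i$ contributions into $\alpha^{D-1}\length(C)$. I do not anticipate a substantive obstacle beyond what the deadline case already contains; the one subtlety is the bookkeeping needed to carry the extra $\length(C^i)$ terms through the recurrence without losing the tight constant $R_L$.
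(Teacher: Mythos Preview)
Your proposal is correct and follows essentially the same approach as the paper: both derive the recurrence $\length(X^{\le i})\le (2+1/L)\,\length(X^{<i})+\length(C^i)$ from the bound $\length(U(v,i,t))\le(1+1/L)\length_v$ and the disjoint decomposition $X^i=C^i\cup E^i$, and then conclude $\length(X)\le R_L\,\length(C)$. The only cosmetic difference is that the paper packages the unrolling into the closed-form inductive hypothesis $\length(X^{\le i})\le(2+1/L)^{i-1}\length(C^{\le i})$, whereas you unroll explicitly and bound each coefficient $\alpha^{D-i}$ by $\alpha^{D-1}$; both arrive at the same constant.
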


\begin{proof}
Since $\calT$ is $L$-decreasing, the weight of each node that is a
descendant of $v$ is at most $\length_v/L$ and thus
$\length(U(v,i,t))\leq (1+1/L)\length_v$.

We now estimate $\length(X)$. We claim and prove by induction for
$i=1,\ldots,D$ that
\begin{equation}
\length(X^{\le i})\leq(2+1/L)^{i-1}\length(C^{\le i})\,.
	\label{eqn: complete mlap main ineq}
\end{equation}
The base case for $i=1$ is trivial, as $X^{\le 1}=C^{\le
  1}=\braced{r,q}$. For $i\geq 2$, the set
$X^i$ consists of $C^i$ and the sets $U(v,i,t)$,
for $v \in X^{<i}$. Each of these sets $U(v,i,t)$ has
weight at most $(1+1/L)\length_v$. Therefore
\begin{equation}
\length(X^i)\leq (1+1/L)\length(X^{<i})+\length(C^i)\,.
	\label{eqn: complete mlap aux ineq}
\end{equation}
Now, using (\ref{eqn: complete mlap aux ineq}) and the
inductive assumption (\ref{eqn: complete mlap main ineq}) for $i-1$,
we get
\begin{align*}
\length(X^{\le i})&= \length(X^{<i})+ \length(X^i)
			\\
			&\leq (2+1/L)\length(X^{<i})+\length(C^i)
			\\
			&\leq (2+1/L)^{i-1}\length(C^{<i})+\length(C^i)
			\,\leq\, (2+1/L)^{i-1}\length(C^{\le i}).
\end{align*}
Taking $i=D$ in~(\ref{eqn: complete mlap main ineq}), the lemma follows.
\end{proof}

 
\paragraph{Waiting costs and pseudo-schedules.}

Our plan is to charge the cost of Algorithm~{\OnAlgTreesGeneral} to
the optimal (or the adversary's) cost. Let $\optschedS$ be an optimal
schedule.  To simplify this charging, we extend $\optschedS$ by adding
to it pseudo-services, where a \emph{pseudo-service from a node $v$}
is a partial service of cost $\length_v$ that consists only of the
edge from $v$ to its parent.  We denote this modified schedule
$\pseudoschedS$ and call it a \emph{pseudo-schedule}, reflecting the
fact that its pseudo-services are not necessarily subtrees of $\calT$
rooted at $r$. Adding such pseudo-services will allow us to
ignore the waiting costs in the optimal schedule.

We now define more precisely how to obtain $\pseudoschedS$ from $\optschedS$.
For each node $v$ independently we define the times when new
pseudo-services of $v$ occur in $\pseudoschedS$. Intuitively, we introduce
these pseudo-services at intervals such that the waiting cost
of the requests that arrive in $\calT_v$ during these intervals adds
up to $\length_v$.  The formal description of this process is given in
the pseudo-code below, where we use notation $\calR(>t)$ for the set of
requests $\rho\in\calR$ with $a_\rho>t$ (i.e., requests issued after
time $t$). Recall that $H$ denotes the time horizon.

\begin{tabbing}
aaa \= aaa \= aaa \= aaa \= aaa \= aaa \= \kill
\>
$t\gets -\infty$
\\
\> \textbf{while} $\omega_{\calR(>t)}(\calT_v,H)\ge\length_v$
\\
\> \>
let $\tau$ be the earliest time such that
$\omega_{\calR(>t)}(\calT_v,\tau)=\length_v$
\\
\> \>
add to $\pseudoschedS$ a pseudo-service of $v$ at $\tau$ 
\\
\> \>
$t\gets \tau$ 
\end{tabbing}

We apply the above procedure to all the nodes $v\in\calT-\{r\}$
such that $\calR$ contains a request in $\calT_v$. The new
pseudo-schedule $\pseudoschedS$ contains all the services of
$\optschedS$ (treated as sets of pseudo-services of all served nodes)
and the new pseudo-services added as above.  The service cost of the
pseudo-schedule, $\scost(\pseudoschedS)$, is defined naturally as the
total weight of the nodes in all its pseudo-services.


\begin{lemma}\label{lem: bound on pseudo sched}
$\scost(\pseudoschedS) \leq D\cdot\cost(\optschedS)$.
\end{lemma}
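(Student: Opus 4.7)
The plan is to decompose $\scost(\pseudoschedS) = \scost(\optschedS) + \sum_v \length_v \cdot N_v$, where $N_v$ denotes the number of new pseudo-services of $v$ added by the construction (the ``decomposed'' services of $\optschedS$ contribute exactly $\scost(\optschedS)$ to $\scost(\pseudoschedS)$). It then suffices to bound the second summand by $\scost(\optschedS) + D \cdot \wcost(\optschedS)$ via a two-part charging scheme. Adding the bounds gives $\scost(\pseudoschedS) \le 2\scost(\optschedS) + D \cdot \wcost(\optschedS) \le D \cdot \cost(\optschedS)$ for $D \ge 2$ (the case $D=1$ is handled separately by the trivial $\TCPAP$-style algorithm).

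For a fixed $v$ and a fixed new pseudo-service of $v$ at time $\tau$, with $t$ the previous time from the construction loop, the defining identity is $\length_v = \sum_\rho \omega_\rho(\tau)$, where the sum ranges over requests $\rho$ with $\trignode_\rho \in \calT_v$ and $a_\rho \in (t,\tau]$. I would split this sum according to whether the time $s_\rho^\ast$ at which $\optschedS$ serves $\rho$ satisfies $s_\rho^\ast \le \tau$ or $s_\rho^\ast > \tau$, charging the first part to the service cost and the second part to the waiting cost of $\optschedS$.

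For the service-cost part, when $s_\rho^\ast \le \tau$ the inequality $s_\rho^\ast \ge a_\rho > t$ forces $s_\rho^\ast \in (t,\tau]$, so the $\optschedS$-service at $s_\rho^\ast$ contains $\trignode_\rho$ and therefore contains $v$; I would charge $\omega_\rho(\tau)$ to the ``$v$-slot'' (of weight $\length_v$) of that service. The key observation---and the point where I expect to be most careful---is that for fixed $v$ the pseudo-service intervals of $v$ partition the relevant time axis, so each $\optschedS$-service falls in at most one such interval, and within that interval the total charge sent by the pseudo-service of $v$ is at most $\length_v$ (being bounded by the single equation that defines the pseudo-service). Consequently each $v$-slot of each $\optschedS$-service receives charge at most $\length_v$, and summing over all $v \in X$ and all $\optschedS$-services $(X,s)$ bounds the total service-cost charge by $\scost(\optschedS)$.

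For the waiting-cost part, when $s_\rho^\ast > \tau$ monotonicity of $\omega_\rho$ gives $\omega_\rho(\tau) \le \omega_\rho(s_\rho^\ast) = \wcost(\rho,\optschedS)$, and I would charge $\omega_\rho(\tau)$ against this waiting cost. For fixed $\rho$, each node $v \ne r$ with $\trignode_\rho \in \calT_v$ charges $\rho$ in at most one of its pseudo-services (the unique interval of $v$ containing $a_\rho$), and the number of such nodes is at most $\depth(\trignode_\rho) \le D$. Hence the total waiting-cost charge is at most $D \cdot \wcost(\optschedS)$, yielding the claimed bound. The main obstacle---the possibility of double-charging a single $\optschedS$-service---is dispatched by the partition observation in the previous paragraph.
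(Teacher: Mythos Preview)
Your proof is correct and follows essentially the same approach as the paper: decompose $\scost(\pseudoschedS)$ into the original $\optschedS$-services plus the new pseudo-services, bound the latter by $\scost(\optschedS)+D\cdot\wcost(\optschedS)$ via charging, and finish using $D\ge 2$. The only minor difference is granularity: the paper case-splits per pseudo-service (either some $\optschedS$-service of $v$ lies in $(t,\tau]$, in which case the whole $\length_v$ is charged to that service, or none does, in which case the whole $\length_v$ is charged to waiting costs), whereas you split the identity $\length_v=\sum_\rho\omega_\rho(\tau)$ request-by-request according to whether $s_\rho^\ast\le\tau$---both variants yield the same bound by the same ``each request is charged from at most $D$ ancestors'' observation.
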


\begin{proof}
It is sufficient to show that the total service cost of the new
pseudo-services added inside the while loop is at most
$\scost(\optschedS) + D\cdot \wcost(\optschedS)$: Adding
$\scost(\optschedS)$ once more to account for the service cost of the
services of $\optschedS$ that are included in $\pseudoschedS$, and
using our assumption that $D\ge 3$, we obtain $\scost(\pseudoschedS)
\leq 2\cdot\scost(\optschedS) + D\cdot \wcost(\optschedS)\leq
D\cdot\cost(\optschedS)$, thus the lemma follows.

To prove the claim, consider some node $v$, and a pair of times
$t,\tau$ from one iteration of the while loop, when a new
pseudo-service was added to $\pseudoschedS$ at time $\tau$.  This
pseudo-service has cost $\length_v$.  In $\optschedS$, either there is
a service in $(t,\tau]$ including $v$, or the total waiting cost of the
requests within $\calT_v$ released in this interval is equal to
$\omega_{\calR(>t)}(\calT_v,\tau)=\length_v$.  In the first case, we
charge the cost of $\length_v$ of this pseudo-service to any service
of $v$ in $\optschedS$ in $(t,\tau]$.  Since we consider here only the
new pseudo-services, created by the above pseudo-code, this charging
will be one-to-one.  In the second case, we charge $\length_v$ to the
total waiting cost of the requests in $\calT_v$ released in the
interval $(t,\tau]$.  For each given $v$, the charges of the second
type from pseudo-services at $v$ go to disjoint sets of requests in
$\calT_v$, so each request in $\calT_v$ will receive at most one
charge from $v$.  Therefore, for each request $\rho$, its waiting cost
in $\optschedS$ will be charged at most $D$ times, namely at most once
from each node $v$ on the path from $\trignode_\rho$ to $q$.  From the
above argument, the total cost of the new pseudo-services is at most
$\scost(\optschedS) + D\cdot \wcost(\optschedS)$, as claimed.
\end{proof}

Using the bound in Lemma~\ref{lem: bound on pseudo sched}
will allow us to use $\scost(\pseudoschedS)$ as an estimate of
the optimal cost in our charging scheme, losing at most a factor of $D$ in the
competitive ratio.


\paragraph{Charging scheme.}

According to Lemma~\ref{lem: mlap, cost <= 2*service}, to establish constant
competitiveness it is sufficient
to bound only the service cost of Algorithm~{\OnAlgTreesGeneral}.
By Lemma~\ref{l:complete}
for any service tree $X$ of the algorithm we have $\length(X)\le R_L\cdot\length(C)$.  
Therefore, it is in fact sufficient to bound the total weight of the critical
sets in the algorithm's services. 
Further, using Lemma~\ref{lem: bound on pseudo sched}, instead of 
using the optimal cost in this bound, we can use the pseudo-service cost.
Following this idea, we will show how we can charge, at a constant rate,
the cost of
all critical sets $C$ in the algorithm's services to the adversary
pseudo-services. 

The basic idea of our charging method is similar to that for {\MLAPD}.
The argument in Section~\ref{sec: competitive algorithm for mlap-d}
can be interpreted as an iterative charging scheme, where we have a
charge of $\length_q$ that originates from $q$, and this charge is
gradually distributed and transferred down the service tree, through
overdue nodes, until it reaches critically overdue nodes that can be
charged directly to adversary services.  For {\MLAP} with general
waiting costs, the charge of $\length(C)$ will originate from the
current critical subtree $C$. Several complications arise when we attempt to
distribute the charges to nodes at deeper levels.
First,
due to gradual accumulation of waiting costs, it does not seem
possible to identify nodes in the same service tree that can be used
as either intermediate or final nodes in this process.  Instead, when
defining a charge from a node $v$, we will charge descendants of $v$ in 
\emph{earlier} services of
$v$. Specifically, the weight $\length_v$ will be charged to the set
$U(v,i,t^-)$ for some $i>\depth(v)$, where $t^-$ is the time of the previous
service of the algorithm that includes $v$.  The nodes --- or, more
precisely, services of these nodes --- that can be used as
intermediate nodes for transferring charges will be called
\emph{depth-timely}.  As before, we will argue that each charge will
eventually reach a node $u$ in some earlier service that can be
charged to some adversary pseudo-service directly.  Such service of
$u$ will be called \emph{$u$-local}, where the name reflects the property
that this service has an adversary pseudo-service of $u$ nearby (to
which its weight $\length_u$ will be charged).

\smallskip

We now formalize these notions.
Let $(X,t)$ be some service of Algorithm~{\OnAlgTreesGeneral} that includes $v$,
that is $v\in X$.
By $\prv{t}{v}$ we denote the time of the last service of $v$ before
$t$ in the schedule of the algorithm; if it does not exist, set
$\prv{t}{v}=-\infty$.
By $\nxt{t}{v}{i}$ we denote the time of the $i$th service of $v$
following $t$ in the schedule of the algorithm; if it does not exist,
set $\nxt{t}{v}{i}=+\infty$.

We say that the service of $v$ at time $t<H$ is \emph{$i$-timely}, if
$\vertmaturity^t(v)<\nxt{t}{v}{i}$; furthermore, if $v$ is
$\depth(v)$-timely, we will say simply that this service of $v$ is
\emph{depth-timely}.
We say that the service of $v$ at time $t<H$ is \emph{$v$-local}, if
this is either the first service of $v$ by the algorithm, or if
there is an adversary pseudo-service of $v$ in the interval
$(\prv{t}{v},\nxt{t}{v}{\depth(v)}]$.


Given an algorithm's service $(X,t)$, we now define the outgoing
charges from $X$.  For any $v\in X-\{r\}$, its outgoing charge is
defined as follows:
\begin{description}
	
\item{(C1)} If $t<H$ and the service of $v$ at time $t$ is both
  depth-timely and $v$-local, charge $\length_v$ to the first
  adversary pseudo-service of $v$ after time $\prv{t}{v}$.

\item{(C2)} If $t<H$ and the service of $v$ at time $t$ is 
  depth-timely but not $v$-local, charge $\length_v$ 
  to the algorithm's service at time $\prv{t}{v}$.

\item{(C3)} If $t<H$ and the service of $v$ at time $t$ is not
  depth-timely, the outgoing charge is $0$.

\item{(C4)} If $t=H$ and $v\in X$, we charge $\length_v$ to the
first adversary pseudo-service of $v$.
\end{description}

We first argue that the charging is well-defined. 
To justify (C1) suppose that this service is depth-timely and
$v$-local.  If $(X,t)$ is the first service of $v$ then $\prv{t}{v} =
-\infty$ and the charge goes to the first pseudo-service of $v$
which exists as all the requests must be served. Otherwise there
is an adversary pseudo-service of $v$ in the interval
$(\prv{t}{v},\nxt{t}{v}{\depth(v)}]$ and rule (C1) is well-defined.
For (C2), note that if the service $(X,t)$ of $v$ is not $v$-local
then there must be an earlier service including $v$.  (C3) is
trivial. For (C4), note again that an adversary transmission of $v$
must exist, as all requests must be served.

The following lemma implies that
all nodes in the critical subtree will have an outgoing charge, as needed.

\begin{lemma}\label{lem: each v in C is depth-timely}
For a transmission time $t<H$, each $v \in C^t(q)$ is 1-timely, and
thus also depth-timely.
\end{lemma}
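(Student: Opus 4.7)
The plan is to show directly that $\vertmaturity^t(v) \leq t < \nxt{t}{v}{1}$ for every $v \in C^t(q)$, which gives 1-timeliness by definition; the conclusion that $v$ is also depth-timely then follows immediately because $\nxt{t}{v}{i}$ is non-decreasing in $i$, so $\nxt{t}{v}{1} \leq \nxt{t}{v}{\depth(v)}$.

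First I would use the fact that $t$ is a service time of Algorithm~\OnAlgTreesGeneral, which by definition means $\vertmaturity^t(q) = t$. Then, applying Lemma~\ref{lem: maturity-critical} with the role of $v$ played by $q$ and the role of $u$ played by an arbitrary node $v \in C^t(q)$, I obtain $\vertmaturity^t(v) \leq \treematurity^t((C^t(q))_v) \leq \vertmaturity^t(q) = t$. This is the key monotonicity fact: every node in the critical subtree of $q$ inherits a maturity time no later than that of $q$ itself.

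Next I need to argue that $\nxt{t}{v}{1} > t$. This is exactly the content of Lemma~\ref{lem: maturity times of q}(a): right after a service at time $t$, the quasi-root has $\vertmaturity^{t^+}(q) > t$, so no subsequent service can occur at time $t$ itself, and hence any later service of $v$ must strictly follow $t$. Combining with the previous step yields $\vertmaturity^t(v) \leq t < \nxt{t}{v}{1}$, confirming 1-timeliness, and the depth-timeliness follows as noted above.

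I do not anticipate any significant obstacle here: the statement is almost a direct corollary of Lemma~\ref{lem: maturity-critical} together with the soundness of the service schedule established in Lemma~\ref{lem: maturity times of q}. The only subtlety worth spelling out is that $\vertmaturity^t(v)$ refers to the set of requests pending \emph{right before} the service at time $t$, so that $C^t(q)$ is well-defined and Lemma~\ref{lem: maturity-critical} applies verbatim with this set of pending requests.
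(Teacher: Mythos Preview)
Your proposal is correct and essentially identical to the paper's proof: both use Lemma~\ref{lem: maturity-critical} to get $\vertmaturity^t(v)\le\vertmaturity^t(q)=t$ and Lemma~\ref{lem: maturity times of q} to get the strict inequality $t<\nxt{t}{v}{1}$. The paper routes the last step through $q$ via $t<\nxt{t}{q}{1}\le\nxt{t}{v}{1}$, while you argue it directly, but this is a cosmetic difference.
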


\begin{proof}
From Lemma~\ref{lem: maturity-critical}, each $v\in C^t(q)$ satisfies 
$\vertmaturity^t(v) \leq \vertmaturity^t(q) 
					= t 
					< \nxt{t}{q}{1}
					\le \nxt{t}{v}{1}$,
where the sharp inequality follows from Lemma~\ref{lem: maturity times of q}.
\end{proof}

The following lemma captures the key property of our charging scheme.  For any
depth-timely service of $v\in X$ that is not $v$-local, it identifies
a set $U(v,i,t^-)$ in the previous service $(X^-,t^-)$ including $v$
that is suitable for receiving a charge. It is important that each such set is
used only once, has sufficient weight, and contains only depth-timely
nodes.  As we show later, these properties imply that in this charging scheme
the net charge (the difference between the outgoing and incoming charge)
from each service $X$ is at least as large as
the total weight of its critical subtree.

As in the argument for {\MLAPD}, we need to find an urgent node 
$w\in X_v$ which is not in $X^-$ and has its parent in $X^-$. There
are two important issues caused by the fact that the urgency is
given by the maturity times instead of deadlines. The first issue is
that the maturity time can decrease due to new packet arrivals --- to
handle this, we argue that if the new requests had large waiting
costs, they would guarantee the existence of a pseudo-service of node $v$ in the given
time interval and thus the algorithm's service of
$v$ would be $v$-local. The second issue is that the
maturity time is not given by a single descendant but by adding the node
contributions from the whole tree --- thus instead of searching for $w$ on
a single path, we need a more subtle, global argument to identify such $w$.


\begin{lemma}\label{l:charge}
Assume that the service of $v$ at time $t<H$ is depth-timely and
not $v$-local.  Let $i= \depth(v)$, and let $(X^-,t^-)$ be
the previous service of Algorithm~{\OnAlgTreesGeneral} including $v$, that is
$t^-=\prv{t}{v}$. Then there exists $j>i$ such that all the nodes in
the set $U(v,j,t^-)$ from the construction of $X^-$ in the
algorithm are depth-timely and $\length(U(v,j,t^-))\geq \length_v$.
\end{lemma}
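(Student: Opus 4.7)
The plan is to mirror the charging argument from the {\MLAPD} case (Lemma~\ref{l:d:charge}), but since there are no hard deadlines to pinpoint a single most urgent pending request, the deadline-based witness must be replaced by a maturity-based witness: a boundary node $b^\ast \in C^t(v) - X^-$ whose hanging subtree is already mature at $t^-$ with respect to the requests pending at that time. Once $b^\ast$ is located, setting $j = \depth(b^\ast)$ and checking that $b^\ast$ was available in $Z^j_v$ but was not selected into $U(v,j,t^-)$ will, by the definition of $\urgentnodes$, yield the desired weight bound and urgency control.

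First I would use the non-$v$-local hypothesis together with the construction of the pseudo-schedule to deduce $\omega_{\calR(>t^-)}(\calT_v,\nxt{t}{v}{i}) < \length_v$: because no pseudo-service of $v$ is generated in $(t^-,\nxt{t}{v}{i}]$, the counter inside the pseudo-schedule loop never accumulates to $\length_v$ over that interval. Combined with $\vertmaturity^t(v) < \nxt{t}{v}{i}$ (depth-timeliness of $v$ at $t$) and the identity $\omega(C^t(v),\vertmaturity^t(v)) = \length(C^t(v))$, this forces the ``old'' requests --- those with arrival time at most $t^-$ whose trigger nodes lie in $C^t(v) - X^-$, since triggers in $X^-$ would have been served by $X^-$ --- to contribute at least $\length(C^t(v)) - \length_v$ to $\omega(C^t(v),\vertmaturity^t(v))$.

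Next I would introduce the boundary $B$ of $C^t(v)$ relative to $X^-$, namely the nodes $b \in C^t(v) - X^-$ whose parent is in $C^t(v) \cap X^-$. The induced sub-subtrees $(C^t(v))_b$ for $b \in B$ partition $C^t(v) - X^-$, so $\sum_{b \in B}\length((C^t(v))_b) \le \length(C^t(v)) - \length_v$, and a short argument rules out $B = \emptyset$ (otherwise all of the maturity of $C^t(v)$ would have to come from post-$t^-$ requests, contradicting the strict bound just obtained together with $\length(C^t(v)) \ge \length_v$). A straightforward averaging step then produces some $b^\ast \in B$ for which the waiting cost on $(C^t(v))_{b^\ast}$ from the requests pending at $t^-$ reaches $\length((C^t(v))_{b^\ast})$ by time $\vertmaturity^t(v)$; by the definition of $\vertmaturity$, this gives $\vertmaturity^{t^-}(b^\ast) \le \vertmaturity^t(v)$.

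Finally, with $j = \depth(b^\ast) > i$, I would trace through the construction of $X^-$: the parent of $b^\ast$ lies in $X^-$ at depth $j-1$, and $b^\ast \notin X^-$ rules out any earlier $v'$ from having absorbed it, so $b^\ast$ is still in $Z^j_v$ when $v$ is processed at iteration $j$. The fact that $\urgentnodes$ did not include $b^\ast$ then forces $\length(U(v,j,t^-)) \ge \length_v$, and each selected $u$ satisfies $\vertmaturity^{t^-}(u) \le \vertmaturity^{t^-}(b^\ast) \le \vertmaturity^t(v)$. Depth-timeliness of each such $u$ at $t^-$ follows from the chain $\nxt{t^-}{u}{j} \ge \nxt{t^-}{v}{j} = \nxt{t}{v}{j-1} \ge \nxt{t}{v}{i} > \vertmaturity^t(v) \ge \vertmaturity^{t^-}(u)$, where the first step uses that every service of $u$ is also a service of its ancestor $v$. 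I expect the boundary-extraction step in paragraph three to be the main obstacle: unlike in {\MLAPD}, where one simply follows a path to the most urgent pending request, here the waiting cost is diffused continuously across $C^t(v)$, and it is the combination of the pseudo-schedule bound with an averaging argument over $B$ that is needed to isolate a single node $b^\ast$ which is simultaneously outside $X^-$ and mature enough to serve as the witness.
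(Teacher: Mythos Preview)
Your proposal is correct and follows essentially the same approach as the paper's proof. Both arguments locate a boundary node $w$ (your $b^\ast$) in $C^t(v)-X^-$ with parent in $X^-$ by combining the pseudo-schedule bound $\omega_P(C^t(v),\vertmaturity^t(v))<\length_v$ with the maturity identity $\omega_{P^-\cup P}(C^t(v),\vertmaturity^t(v))=\length(C^t(v))$ and then averaging over the induced subtrees $(C^t(v))_b$; the paper's chain of inequalities is exactly your averaging step, and your final depth-timeliness chain $\nxt{t^-}{u}{j}\ge\nxt{t^-}{v}{j}=\nxt{t}{v}{j-1}\ge\nxt{t}{v}{i}>\vertmaturity^t(v)\ge\vertmaturity^{t^-}(u)$ is a harmless variant of the paper's $\nxt{t^-}{z}{j}\ge\nxt{t^-}{z}{i+1}\ge\nxt{t^-}{v}{i+1}=\nxt{t}{v}{i}>\vertmaturity^t(v)$.
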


\begin{proof}
Let $\matvt =\vertmaturity^t(v)$ and let $C'=C^t(v)$ be the 
critical subtree of $v$ at time $t$.
Since the service of $v$ at time $t$ is $i$-timely, we have $\matvt <\nxt{t}{v}{i}$.
(It may be the case that $\matvt <t$, but that does not
hamper our proof in any way.)
Also, since the service of $v$ at time $t$ is not $v$-local,
 it is not the first service of $v$, thus $t^-$ and
$X^-$ are defined.  

Let $P^-$ be the set of requests pending right after
time $t^-$ (including those with arrival time $t^-$ but not those
served at time $t^-$), and let $P$ be the set of requests 
with arrival time in the interval $(t^-,t]$.
The key observation is that the total waiting cost of all the requests
in $C'$ that arrived after $t^-$ satisfies
\begin{equation}
\label{eq:new}  
\omega_P(C',\matvt)<\length_v\,.
\end{equation}
To see this, simply note that $\omega_P(C',\matvt)\geq \length_v$
would imply that $\omega_{\calR(>t^-)} (\calT_v,\matvt)\geq \length_v$.
This in turn would
imply the existence of a pseudo-service of $v$ in the interval
$(t^-,\matvt]\subseteq (\prv{t}{v},\nxt{t}{v}{i}]$, which would
contradict the assumption that the service of $v$ at time $t$
is not $v$-local. (Note that if $\matvt \leq t^-$ then $\omega_P(C',\matvt)=0$
as $\matvt$ is before the arrival time of any request in $P$ and the
inequality holds trivially.)


\begin{figure}
\begin{center}
\includegraphics[height=2in]{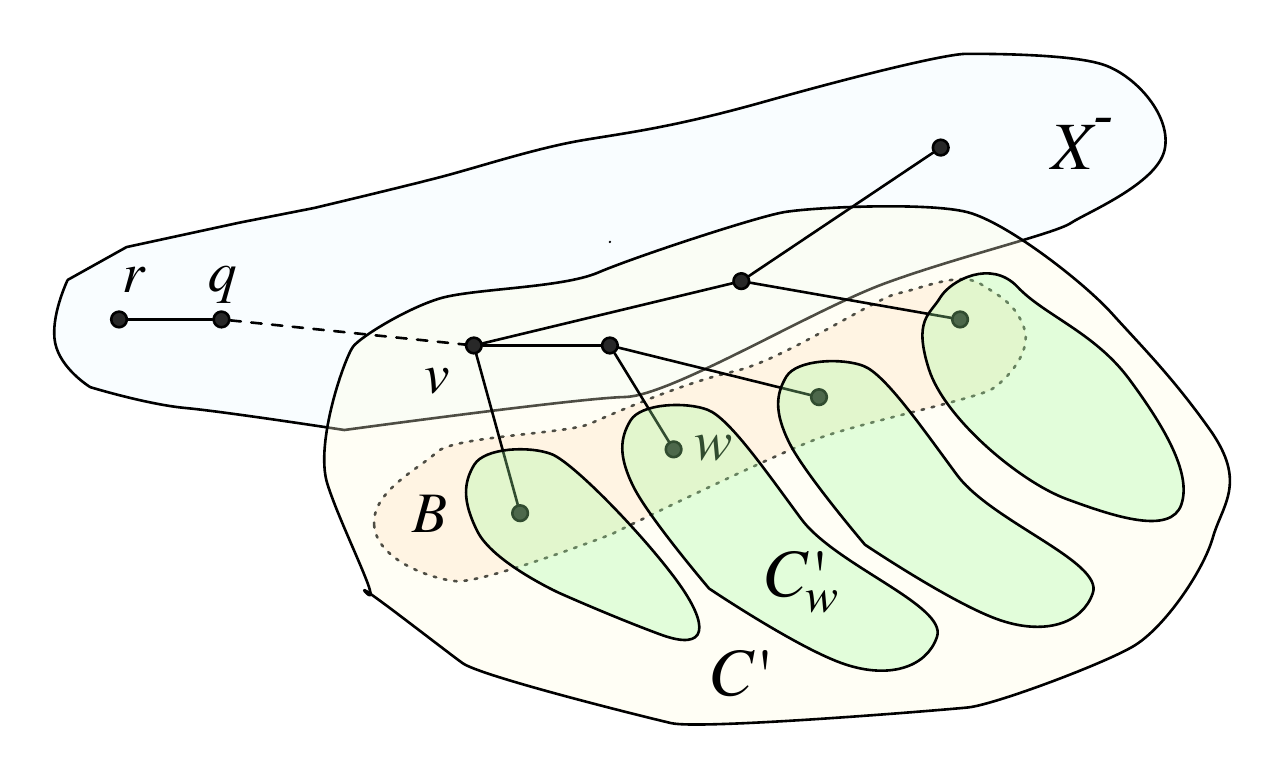}
\caption{Illustration of the proof of Lemma~\ref{l:charge}.}
\label{fig: online mlap analysis}
\end{center}
\end{figure}


Since $P^-\cup P$ contains all the requests pending at time $t$,
the choice of $\matvt$ and $C'$ implies that
\begin{equation}
\label{eq:all}  
\omega_{P^-\cup P}(C',\matvt)=\length(C')\,.
\end{equation}
$P^-$ does not contain any requests in $C'\cap X^-$, as those were served at time $t^-$;
therefore $\omega_{P^-}(C',\matvt) = \omega_{P^-}(C'-X',\matvt)$.
Letting $B$ be the set of all nodes $w\in C'-X^-$ for which $\parent(w)\in X^-$,
we have $C' - X' = \bigcup_{w\in B}C'_w$, where all sets $C'_w$, for $w\in B$,
are disjoint. (See Figure~\ref{fig: online mlap analysis}.)
Also, $v\in C'\cap X^-$. Combining these observations, and
using inequalities~(\ref{eq:new}) and~(\ref{eq:all}), we get
\begin{align*}
\textstyle \sum_{w\in B}\omega_{P^-}(C'_w, \matvt)
		&= \textstyle \omega_{P^-}(\bigcup_{w\in B}C'_w, \matvt)
\\
		&= \omega_{P^-}(C'-X',\matvt)
\\
		&= \omega_{P^-}(C', \matvt)
\\
		&= \omega_{P^-\cup P}(C',\matvt)-\omega_{P}(C',\matvt)
\\
		&> \length(C')-\length_v 
\\
		&\ge \length(C')-\length(C'\cap X^-)
\\
		&= \length(C'-X^-)
			=\textstyle \sum_{w\in B}\length(C'_w)\,.
\end{align*}
It follows that there exists $w\in B$ such that 
\begin{equation}\label{eq:old}  
\omega_{P^-}(C'_w, \matvt)>\length(C'_w)\,.
\end{equation}
Equation~(\ref{eq:old}) implies that $\vertmaturity^{t^-}(w)\leq \matvt$, 
using also the fact that $w$ was not served at $t^-$, so
$P^-$ contains exactly all the requests used to define
$\vertmaturity^{t^-}(w)$.  Let $j = \depth(w)$; note that $j>i$ as $w$
is a descendant of $v$. Since $w\not\in X^-$ but $\parent(w)\in X^-$, and
$\vertmaturity^{t^-}(w)$ is finite, the definition of the extra sets for $X^-$
implies that $U(v,j,t^-)$ has sufficient weight and all its nodes
are more urgent than $w$. More precisely, $\length(U(v,j,t^-))\geq
\length_v$ and any $z\in U(v,j,t^-)$ has $\vertmaturity^{t^-}(z)\leq
\vertmaturity^{t^-}(w)\leq \matvt$.

It remains to show that every $z\in U(v,j,t^-)$ is depth-timely at time $t^-$.
Indeed, since $\depth(z) = j \ge i+1$ and any service containing $z$ contains also $v$, we get
\begin{equation*}
\nxt{{t^-}}{z}{j}	\ge \nxt{{t^-}}{z}{i+1} 
					\ge \nxt{{t^-}}{v}{i+1}
					= \nxt{t}{v}{i} 
					> \matvt
					\geq \vertmaturity^{t^-}(z)\,,
\end{equation*}
where the last step uses the inequality $ \matvt \geq \vertmaturity^{t^-}(z)$
derived in the previous paragraph. Thus $z$ is depth-timely, as needed.
The proof of the lemma is now complete.
\end{proof}



\paragraph{Competitive analysis.} 

We are now ready to complete our competitive analysis of $\MLAP$.

\begin{theorem}\label{thm: general mlap competitive}
There exists an $O(D^42^D)$-competitive algorithm for $\MLAP$ on 
trees of depth~$D$.
\end{theorem}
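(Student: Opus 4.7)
The plan is to combine the reduction in Theorem~\ref{thm:reduction} with a competitive analysis of Algorithm~{\OnAlgTreesGeneral} on $L$-decreasing trees. On such trees I aim to prove the ratio $2D^2R_L$, where $R_L=(2+1/L)^{D-1}$; Theorem~\ref{thm:reduction} then lifts it to $2D^3LR_L$ on arbitrary trees, and the choice $L=D$ yields $O(D^4 2^D)$ after estimating $(2+1/D)^{D-1}=O(2^D)$.

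For the $L$-decreasing case, three of the four factors in $2D^2R_L$ are already delivered by the lemmata above: Lemma~\ref{lem: mlap, cost <= 2*service} converts total cost to service cost (factor~$2$), Lemma~\ref{l:complete} replaces each service tree by its critical subtree (factor~$R_L$), and Lemma~\ref{lem: bound on pseudo sched} trades adversary cost for pseudo-schedule cost (factor~$D$). What remains is to prove
\[
\sum_{(X,t)\in\schedS}\length(C^t(q))\;\le\; D\cdot\scost(\pseudoschedS),
\]
after which the four factors multiply and the theorem follows by Theorem~\ref{thm:reduction}.

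To establish the displayed inequality I would analyze the charging rules (C1)--(C4) on the schedule $\schedS$. Lemma~\ref{lem: each v in C is depth-timely} guarantees that every $v\in C^t(q)$ is depth-timely, so it emits an outgoing charge of value $\length_v$ at $(X,t)$; hence the outgoing charges at $(X,t)$ already account for $\length(C^t(q))$. Next, each (C2) charge at $(X,t)$ is absorbed one service back in time via Lemma~\ref{l:charge}: the charge $\length_v$ is directed to a set $U(v,j,t^-)$ of weight at least $\length_v$ whose members are themselves depth-timely, and their own outgoing charges propagate the absorption further into the past. Pairwise disjointness of the $U$-sets (enforced by the line $Z^i\gets Z^i-U(v,i,t)$ in Algorithm~{\OnAlgTreesGeneral}) together with the strict depth inequality $j>\depth(v)$ guarantee that the backward chain cannot reuse a node and must terminate at a (C1) or (C4) event within at most $D$ hops.

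The final, and in my view principal, obstacle is to bound the multiplicity with which a single adversary pseudo-service of $v$ at time $\tau$ can be hit by (C1) charges. The conditions of being depth-timely and $v$-local force any charging service $(X,t)$ containing $v$ to satisfy $\prv{t}{v}<\tau\le \nxt{t}{v}{\depth(v)}$, which restricts $t$ to at most $\depth(v)\le D$ consecutive services of $v$ in $\schedS$. Hence each pseudo-service of $v$ receives at most $D$ charges of weight $\length_v$; (C4) contributions at the horizon are handled identically. This yields the displayed inequality, and assembling the four factors with Theorem~\ref{thm:reduction} produces the $O(D^42^D)$ bound. I expect the delicate points to be the bookkeeping of the backward (C2) chain---verifying that critical weights traced through disjoint $U$-sets are not double-credited---and pinning down the window argument that caps the (C1) multiplicity; once these are settled the arithmetic collapses routinely.
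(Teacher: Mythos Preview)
Your proposal is correct and matches the paper's approach essentially line for line---the same four factors via the same lemmata, the same charging rules (C1)--(C4), and the same window argument for the (C1) multiplicity; the paper frames the central step as a per-service ``net charge $\ge \ell(C)$'' inequality rather than tracing backward chains, but the two viewpoints are equivalent. One minor correction: the window $\prv{t}{v}<\tau\le\nxt{t}{v}{\depth(v)}$ actually admits $\depth(v)+1$ consecutive services of $v$ (and the first service of $v$ together with the horizon service via (C4) add two more for the first pseudo-service), so the paper bounds the multiplicity by $D+3$ rather than $D$ and takes $L=D/2$; neither change affects the $O(D^4 2^D)$ asymptotics.
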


\begin{proof}
We will show that Algorithm~{\OnAlgTreesGeneral}'s competitive ratio
for $L$-decreasing trees of depth $D\geq 3$
is at most $4D^2R_L$, where $R_L = (2+1/L)^{D-1}$. 
By applying Theorem~\ref{thm:reduction}, this implies that
there is an online algorithm for arbitrary trees with ratio
at most $4D^3L(2+1/L)^{D-1}$. For $L = D/2$, this
ratio is bounded by $3D^42^D$, implying the theorem (together with the
fact that for $D=1,2$, constant-competitive algorithms are known).

So now we fix an $L$-decreasing tree $\calT$ and focus our attention
on Algorithm~{\OnAlgTreesGeneral}'s schedule $\schedS$ and on the
adversary pseudo-schedule $\pseudoschedS$. Define the \emph{net charge
  from} a service $(X,t)$ in $\schedS$ to be the difference between
the outgoing and incoming charge of $(X,t)$.  Our goal is to show that
each pseudo-service in $\pseudoschedS$ is charged only a constant
number of times and that the net charge from each service $(X,t)$ in
$\schedS$ is at least $\length(X)/R_L$.

Consider first an adversary pseudo-service of $v$ at a time $\tau$. We
argue that it is charged at most $(D+3)\length_v$: If this is the
first pseudo-service of $v$, charged once from both the first service
of $v$ by rule (C1) and from the last service of $v$ at time $t=H$ by
rule (C4). In addition, by rule (C1) it may be charged $D$ times from
the last $D$ services of $v$ before $\tau$, and once from the first
service at or after $\tau$. All the charges are equal to $\length_v$.

Now consider a service $(X,t)$ of Algorithm~{\OnAlgTreesGeneral}. For
$t=H$, all the nodes of $X$ have an outgoing charge by rule (C4)
and there is no incoming charge. Thus the net charge from $X$ is
$\length(X)\geq \length(X)/R_L$.

For $t<H$, let $X=C\cup E$, where $C$ is the critical subtree and $E$
is the extra set. From Lemma~\ref{lem: each v in C is depth-timely},
all nodes in $C$ are depth-timely, so they generate outgoing charge of
at least $\length(C)$ from $X$.  Next, we show that the net charge
from the extra set $E$ is non-negative.  Recall that $E$ is a disjoint
union of sets of the form $U(w,k,t)$ and $E$ is disjoint from $C$.  If
a future service of a node $v$ generates the charge of $\length_v$
to $X$ by rule (C2), it must be the service at time $\nxt{t}{v}{1}$,
so such a charge is unique for each $v$.  Furthermore,
Lemma~\ref{l:charge} implies that one of the extra sets $U(v,j,t)$,
for $j>i$, has $\length(U(v,j,t))\geq \length_v$ and consists of
depth-timely nodes only. Thus these nodes have outgoing charges adding
up to at least $\length_v$; these charges go either to the adversary's
pseudo-services or the algorithm's services before time $t$.  We have
shown that the net charge from each extra set $U(w,k,t)$ is
non-negative; therefore, the net charge from $E$ is non-negative as
well.  We conclude that the net charge from $X$ is at least
$\length(C)$. Applying Lemma~\ref{l:complete}, we obtain that this net
charge is at least $\length(X)/R_L$.

Summing over all the services $(X,t)$ in $\schedS$, we get a bound for
the service cost of schedule $\schedS$: $\scost(\schedS) \le
(D+3)R_L\cdot\scost(\pseudoschedS)$.  Applying Lemmata~\ref{lem: mlap,
  cost <= 2*service} and~\ref{lem: bound on pseudo sched}, we get
\begin{align*}
\cost(\schedS) &\le 2\cdot \scost(\schedS)
			\\
			&\leq 2(D+3)R_L\cdot\scost(\pseudoschedS)
			\\
			&\leq 2D(D+3)R_L\cdot\cost(\optschedS)
			\;\le\;
			4D^2 R_L\cdot\cost(\optschedS).
\end{align*}
We have thus shown that Algorithm~{\OnAlgTreesGeneral}'s competitive
ratio for $L$-decreasing trees is at most $4D^2 R_L$, which, as
explained earlier, is sufficient to complete the proof.
\end{proof}


\section{Single-Phase {\MLAP}}
\label{sec: one-phase MLAP}

We now consider a restricted variant of $\MLAP$ that we refer to as
\emph{Single-Phase} {\MLAP}, or $\SPMLAP$. In $\SPMLAP$ all requests arrive at the beginning, at time $0$. 
The instance also
includes a parameter $\expiration$ representing the  common \emph{expiration time}
for all requests. We do not require that all requests are served. Any unserved
request pays only the cost of waiting until the expiration time $\expiration$. 

In the online variant of $\SPMLAP$, all requests, including their waiting cost functions,
are known to the online algorithm at time $0$. The only unknown is the expiration time $\expiration$.

Although not explicitly named, variants of $\SPMLAP$ have
been considered in~\cite{jrp-online-buchbinder,aggregation_wads_2013}, where they 
were used to show lower bounds on competitive ratios for $\MLAP$. These proofs
consist of two steps, first showing a lower bound for online $\SPMLAP$ and then arguing
that, in the online scenario, $\SPMLAP$ can be expressed as a special case of $\MLAP$.
(A corresponding property holds in the offline case as well, but is quite trivial.)
We also use the same general approach in Section~\ref{sec: mlap on paths} to show our lower bounds.

To see that (in spite of the expiration feature) $\SPMLAP$ can be thought of as a special case
of $\MLAP$, we map an instance $\calJ$ of $\SPMLAP$ into the instance
$\calJ'$ of $\MLAP$ with the property that any $R$-competitive algorithm for $\calJ'$
can be converted into an $R$-competitive algorithm for $\calJ$.
We will explain the general idea when the cost function is linear; the construction
for arbitrary cost functions is based on the same idea, but it involves some minor
technical obstacles. Let $\expiration$ be the expiration time from $\calJ$.
Choose some large integers $K$ and $M$.  The constructed instance 
$\calJ'$ consists of $K$ ``nested'' and ``compressed'' copies of $\calJ$, that we
also refer to as \emph{phases}.
In the $i$-th phase we multiply the waiting cost function of each node by $M^i$. 
We let this phase start at time $(1-M^{-i})\expiration$
(that is, at this time the requests from this phase are released) and
end at time $\expiration$. Thus the length of phase $i$ is $M^{-i}\expiration$.
The main trick is that, in $\calJ'$, at time $\expiration$ the adversary can serve
all pending requests (from all phases) at the cost that is independent of $K$, 
so the contribution of this service cost to the cost of each phase is negligibly small. 
Following this idea, any $R$-competitive
algorithm for $\calJ'$ can be converted into an $R$-competitive algorithm
for $\calJ$, except for some vanishing additive constant.
(See \cite{jrp-online-buchbinder,aggregation_wads_2013} for more details.)


\subsection{Characterizing Optimal Solutions}

Suppose that the expiration value is $\expiration = t$. Then the optimal solution is to serve some
subtree $X$ (rooted at $r$) already at time~0 and wait until the end of the phase at time $t$ with the remaining 
requests in $\barX = \calT - X$. 
So now we consider schedules of this form, that consist of one 
service subtree $X\subseteq \calT$ at time $0$.
The cost of this schedule (that we identify with $X$ itself) is
\begin{equation*}
	\cost(X,t) = \length(X) + \omega(\barX,t),
\end{equation*}
where, for any set $U\subseteq\calT$, $\omega(U,t) = \sum_{\rho}\omega_\rho(U,t)$
denotes the waiting cost of all requests in $U$ (see Section~\ref{subsec: mlap notations}.)

Our first objective
is to characterize those subtrees $X$ that are optimal for $\expiration= t$. 
This characterization will play a critical role in our online algorithm for
$\SPMLAP$, provided later in this section
and it also leads to an offline polynomial-time algorithm for computing optimal
solutions, given in Section~\ref{sec: 1p-mlap with deadlines}.

The lemma below 
can be derived by expressing $\SPMLAP$ as a linear program and using strong duality. We
provide instead a simple combinatorial proof. For each subtree $Z$ of $\calT$, we
denote its root by $r_Z$. (Also, recall that $Z_v$ is the induced
subtree of $Z$ rooted at $v$, that is, $Z_v$ contains all descendants of $v$ in $Z$.)


\begin{lemma}\label{lem: 1-phase optimal schedules}
A service $X$ is optimal for an expiration time $\expiration =t$ if and only if 
it satisfies the following two conditions:
\begin{description}
\item{{\rm (a)}} $\omega(X_v,t)\ge \length(X_v)$ for each $v\in X$, and
\item{{\rm (b)}} $\omega(Z,t)\le \length(Z)$ for each subtree $Z$, disjoint with $X$,
			such that  $\parent(r_Z) \in X$.
\end{description}
\end{lemma}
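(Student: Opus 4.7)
The plan is to reformulate the optimization, prove necessity by two local exchange arguments, and prove sufficiency by decomposing the symmetric difference of $X$ and an arbitrary competitor $Y$ along the tree structure.

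First I would rewrite the objective. Because $\omega(\calT,t) = \omega(X,t) + \omega(\barX,t)$ is a constant independent of $X$, minimizing $\cost(X,t)$ is equivalent to minimizing
\[
g(X) \;:=\; \length(X) - \omega(X,t).
\]
Both $\length(\cdot)$ and $\omega(\cdot,t)$ are additive over disjoint node sets, so $g$ is modular: $g(S\cup T) + g(S\cap T) = g(S)+g(T)$ for any node sets, and in particular $g(S\sqcup T)=g(S)+g(T)$ for disjoint $S,T$. This will let me add or remove pieces to/from $X$ and track the cost change exactly.

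For the necessity direction (optimal $\Rightarrow$ (a),(b)), I would run two one-step exchanges. If (a) fails at some $v\in X$, then $g(X_v)>0$; removing the induced subtree $X_v$ from $X$ (which still leaves a valid subtree rooted at $r$ whenever $v\ne r$, and is handled separately via the trivial service when $v=r$) yields $X'=X\setminus X_v$ with $g(X') = g(X)-g(X_v) < g(X)$, contradicting optimality. Symmetrically, if (b) fails for some $Z$, then $g(Z)<0$; since $Z\cap X=\emptyset$ and $\parent(r_Z)\in X$, the set $X'=X\cup Z$ is still a subtree rooted at $r$ and $g(X')=g(X)+g(Z)<g(X)$, again a contradiction.

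The substantive part is sufficiency. Fix $X$ satisfying (a),(b) and any competitor $Y$; I want $g(X)\le g(Y)$. Letting $A=X\setminus Y$ and $B=Y\setminus X$, modularity gives $g(X)-g(Y)=g(A)-g(B)$, so it suffices to show $g(A)\le 0\le g(B)$. The key structural observation (and the only step that requires a short argument) is that if $v\in X\setminus Y$, then every descendant of $v$ in $X$ also lies in $X\setminus Y$: if some descendant were in $Y$, then, $Y$ being a subtree containing $r$, the node $v$ would be in $Y$ as well. Consequently $A$ decomposes as a disjoint union $A=\bigsqcup_i X_{v_i}$ of induced subtrees of $X$ whose roots $v_i$ each have $\parent(v_i)\in X\cap Y\subseteq X$. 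The exact same reasoning applied to $Y$ gives $B=\bigsqcup_j Y_{w_j}$ with $\parent(w_j)\in X\cap Y\subseteq X$, and each $Y_{w_j}$ disjoint from $X$.

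Now I simply invoke the assumed conditions block by block. Condition (a) applied at each $v_i\in X$ gives $g(X_{v_i})=\length(X_{v_i})-\omega(X_{v_i},t)\le 0$, so $g(A)=\sum_i g(X_{v_i})\le 0$. Condition (b) applies to each $Y_{w_j}$ (it is a subtree disjoint from $X$ whose root's parent lies in $X$), yielding $g(Y_{w_j})\ge 0$ and hence $g(B)\ge 0$. Combining, $g(X)\le g(Y)$, so $X$ is optimal. I expect the only place requiring care to be the decomposition of the symmetric difference into induced subtrees and the verification that the "roots" $v_i,w_j$ land in $X\cap Y$, since this is what lets (a) and (b) be applied in the exact form stated.
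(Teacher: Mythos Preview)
Your proof is correct and follows essentially the same approach as the paper: necessity via the two local exchanges (removing $X_v$ or adding $Z$), and sufficiency by decomposing $X\setminus Y$ and $Y\setminus X$ into induced subtrees whose roots have parents in $X\cap Y$, then applying (a) and (b) block by block. Your repackaging via the modular function $g(X)=\length(X)-\omega(X,t)$ is a clean notational device, but the underlying argument matches the paper's proof.
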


\begin{proof}
$(\Rightarrow)$
We begin by proving that (a) and (b) are necessary conditions for optimality
of $X$.

(a) Suppose that there is a $v\in X$ for which $\omega(X_v,t) < \length(X_v)$.
Let $Y = X-X_v$. Then $Y$ is a service tree (empty if $v=r$), and we have
\begin{align*}
	\cost(Y,t) &= \length(Y) + \omega(\barY,t)
			\\
			&= \length(X) - \length(X_v) + \omega(\barX,t) + \omega(X_v,t)
			\\
			&< \length(X) + \omega(\barX,t) = \cost(X,t),
\end{align*}
contradicting the optimality of $X$.

(b) Suppose that there is a subtree $Z$ that violates condition (b), that is $Z\cap X = \emptyset$,
$\parent(r_Z) \in X$, but $\omega(Z,t) > \length(Z)$.
Let $Y= X\cup Z$. Then $Y$ is a service tree and
\begin{align*}
\cost(Y,t) &= \length(Y) + \omega(\barY,t)
 		\\
		&= \length(X) + \length(Z) + \omega(\barX,t) - \omega(Z,t)
		\\
		&< \length(X) + \omega(\barX,t) = \cost(X,t),
\end{align*}
contradicting the optimality of $X$.

$(\Leftarrow)$
We now prove sufficiency of conditions (a) and (b). Suppose that $X$ satisfies
(a) and (b), and let $Y$ be any other service subtree of $\calT$.
From (b), for any node $z\in Y-X$ with $\parent(z) \in X\cap Y$
we have $\omega(Y_z,t)\le \length(Y_z)$. 
Since both $X$ and $Y$ are rooted at $r$, any node in $Y-X$
is in some induced subtree $Y_z$, for some $z$ such that $\parent(z) \in X\cap Y$.
This implies that $\omega(Y-X,t)\le \length(Y-X)$.
Similarly, from (a), for any node $v\in X-Y$ with $\parent(v) \in X\cap Y$
we have $\omega(X_v,t)\ge \length(X_v)$. This implies that
$\omega(X-Y,t) \geq \length(X-Y)$. These inequalities  give us that
\begin{align*}
\cost(Y,t) &= \length(Y) + \omega(\barY,t)
			\\
		&= \length(X) + \omega(\barX,t)
				+ [\,\length(Y-X) - \omega(Y-X,t)\,]
				- [\,\length(X-Y) - \omega(X-Y,t)\,]
			\\
		&\ge \cost(X,t),
\end{align*}
proving the optimality of $X$.
\end{proof}

Following the terminology from Section~\ref{subsec: mlap notations}, 
a subtree $Z$ of $\calT$ (not necessarily rooted at $r$)
is called \emph{mature at time $t$} if $\omega(Z,t)\ge \length(Z)$. (We do not need to specify
the set of requests in $\omega(Z,t)$, as all requests are released at time $0$.)
In this section we will simplify this notation and write ``$t$-mature'', instead of
``mature at time $t$''.
We say that $Z$ is \emph{$t$-covered} if each induced subtree 
$Z_x$, for $x\neq r_Z$, is $t$-mature. (Note that in this definition
$Z$ itself is not required to be $t$-mature.)
We now make two observations. First, if $Z$ is $t$-covered then
the definition implies that each induced subtree $Z_v$ of $Z$ is $t$-covered as well.
Two, if $Z = \braced{r_Z}$, that is if $Z$ consists of only one node,
then $Z$ is vacuously $t$-covered; thus any subtree $Z$
of $\calT$ has a $t$-covered subtree rooted at $r_Z$. 


\begin{lemma}\label{lem: Ot unique}
If $X$ and $Y$ are $t$-covered service subtrees of $\calT$ then
the service subtree $X\cup Y$ is also $t$-covered.
\end{lemma}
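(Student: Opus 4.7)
My plan is to verify the $t$-covered condition at each non-root node $v \in X \cup Y$ by showing that $(X \cup Y)_v$ is $t$-mature. The argument splits into two easy boundary cases and one substantive central case. If $v$ belongs to exactly one of $X, Y$, say $v \in X \setminus Y$, then I would argue that no descendant of $v$ in $\calT$ can lie in $Y$: since $Y$ is a subtree of $\calT$ rooted at $r$, any node of $Y$ whose $\calT$-path from $r$ passes through $v$ forces $v \in Y$ by connectedness. Hence $(X \cup Y)_v = X_v$, which is $t$-mature by the $t$-coveredness of $X$ (applicable because $v \neq r$). The case $v \in Y \setminus X$ is symmetric.

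The main case is $v \in X \cap Y$ with $v \neq r$. I would set $A = X_v$ and $B = Y_v$; both are subtrees of $\calT$ rooted at $v$, and both are $t$-mature by the hypothesis. Using that $\omega(\cdot, t)$ and $\length(\cdot)$ are additive over disjoint node sets, I decompose $A \cup B = A \sqcup (B \setminus A)$. Since $A$ is $t$-mature, it suffices to show $\omega(B \setminus A, t) \geq \length(B \setminus A)$.

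To handle $B \setminus A$, my plan is to decompose it as a disjoint union of induced subtrees $Y_p$ of $Y$. For each $w \in B \setminus A$, I would walk the $\calT$-path from $v$ to $w$ and pick $p$ as the first node on this path outside $A$; then $\parent(p) \in A \cap B$ and $p \in B \setminus A$. A second connectedness argument shows that every descendant of $p$ in $Y$ stays outside $A$, giving $Y_p \subseteq B \setminus A$. Taking all such entry points $p$ yields the disjoint decomposition $B \setminus A = \bigsqcup_p Y_p$. Each such $p$ is a strict descendant of $v$, hence $p \neq r$, so $Y_p$ is $t$-mature by the $t$-coveredness of $Y$. Summing these inequalities gives $\omega(B \setminus A, t) \geq \length(B \setminus A)$, which combined with $\omega(A, t) \geq \length(A)$ yields $\omega(A \cup B, t) \geq \length(A \cup B)$, as required.

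The only point requiring care is the clean disjoint decomposition $B \setminus A = \bigsqcup_p Y_p$. Both containments---that each $Y_p$ sits inside $B \setminus A$, and that every element of $B \setminus A$ belongs to some such $Y_p$---rely on the simple fact that any subtree of $\calT$ rooted at $r$ (or at $v$) is closed under taking $\calT$-ancestors back to its root. Once this is in place, the rest is additivity of $\omega$ and $\length$ together with the inherited $t$-maturity of induced subtrees of $Y$ at non-root nodes.
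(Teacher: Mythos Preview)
Your proof is correct, but it takes a different route from the paper's. The paper argues \emph{incrementally}: it picks one node $z\in X\setminus Y$ whose parent lies in $X\cap Y$, observes that the single subtree $X_z$ is $t$-mature and disjoint from $Y$, shows that $Q=Y\cup X_z$ is $t$-covered (checking three easy cases for each $v\in Q$), and then iterates, growing $Y$ one $X$-branch at a time until it absorbs all of $X$. You instead argue \emph{directly}: for each $v\in X\cup Y$ you verify $(X\cup Y)_v$ is $t$-mature in one shot, handling the interesting case $v\in X\cap Y$ by decomposing $Y_v\setminus X_v$ into the disjoint union of \emph{all} the hanging subtrees $Y_p$ simultaneously and summing their $t$-maturity inequalities.

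Both proofs rest on the same structural fact (a rooted service subtree is closed under ancestors, so once you leave it at a node $p$ you never re-enter below $p$) and on additivity of $\omega$ and $\ell$. The paper's iteration trades a global decomposition for a simpler per-step check; your direct argument avoids the bookkeeping of an iterative process at the cost of spelling out the decomposition $B\setminus A=\bigsqcup_p Y_p$ carefully, which you do. Either way the argument is short; yours is arguably the cleaner finished product.
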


\begin{proof}
If $X=Y$ the lemma is trivial, so assume $X\neq Y$.
Choose any $z\in (X-Y)\cup (Y-X)$ with $\parent(z) \in X\cap Y$.
Without loss of generality, we can assume that $z\in X-Y$.
By definition, $X_z$ is $t$-mature and disjoint with $Y$. 

Take $Q = Y \cup X_z$. $Q$ is a service subtree of $\calT$. 
We claim that $Q$ is $t$-covered. To justify this claim, choose any $v\in Q$. 
If $v\in Y$ and $z\notin Q_v$, then $Q_v$ is $t$-mature because $Q_v = Y_v$.
If $v\in Q_z = X_z$ then $Q_v$ is $t$-mature because $Q_v = X_v$.
The remaining case is when $v\in Y$ and $z\in Q_v$.
Then
$\omega(Q_v,t) = \omega(Y_v)+\omega(X_z,t) \ge \length(Y_v) + \length(X_z) = \length(Q_v,t)$,
so $Q_v$ is $t$-mature in this case as well. 
Thus indeed $Q$ is $t$-covered, as claimed.

We can now update $Y$ by setting $Y = Q$ and applying the above argument again.
By repeating this process, we will end up with $X = Y$, completing the proof.
\end{proof}

Choose $O^t$ to be the inclusion-maximal $t$-covered service subtree of $\calT$ (that is, 
a subtree rooted at $r$). By Lemma~\ref{lem: Ot unique}, $O^t$ is well defined and unique.
Also, from Lemma~\ref{lem: 1-phase optimal schedules} we obtain that $O^t$ is optimal for 
expiration time $\expiration = t$. Thus the optimal cost when $\expiration = t$ is
\begin{equation*}
	\opt(t) = \cost(O^t,t) = \length(O^t) + \omega(\barO^t,t).
\end{equation*}
Trivially, if a subtree $Z$ is $t$-mature and $t\le t'$ then $Z$ is $t'$-mature as well.
This implies the following corollary.


\begin{corollary}\label{cor:optima-grow}
For every $t \leq t'$ it holds that $O^t \subseteq O^{t'}$.
\end{corollary}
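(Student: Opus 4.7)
The plan is to chain together the monotonicity of $t$-maturity in $t$ (already observed right before the corollary) with the uniqueness/maximality of $O^{t'}$ established via Lemma~\ref{lem: Ot unique}. The overall strategy is: show first that $O^t$ is itself $t'$-covered, and then invoke the maximality of $O^{t'}$.

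First, I would note that by definition $O^t$ is a $t$-covered service subtree, i.e., for every $x \in O^t - \{r\}$ the induced subtree $O^t_x$ satisfies $\omega(O^t_x, t) \ge \length(O^t_x)$. Since $\omega(O^t_x, \cdot)$ is non-decreasing and $t \le t'$, we obtain $\omega(O^t_x, t') \ge \omega(O^t_x, t) \ge \length(O^t_x)$. Hence every $O^t_x$ is also $t'$-mature, so $O^t$ is a $t'$-covered service subtree of $\calT$.

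Next, $O^{t'}$ is by construction a $t'$-covered service subtree of $\calT$. By Lemma~\ref{lem: Ot unique}, applied at time $t'$, the union $O^t \cup O^{t'}$ is again a $t'$-covered service subtree of $\calT$. By the maximality that defines $O^{t'}$, this forces $O^t \cup O^{t'} \subseteq O^{t'}$, which gives $O^t \subseteq O^{t'}$, as desired.

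There is no real obstacle here beyond correctly applying the earlier lemma; the only mild subtlety is confirming that the monotonicity of waiting costs transfers from individual induced subtrees to the notion of $t$-coveredness, but this is immediate because $t$-coveredness is defined purely as a conjunction of $t$-maturity conditions, and each such condition is preserved under increasing $t$.
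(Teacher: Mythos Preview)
Your proof is correct and follows the approach the paper intends (the paper merely states that monotonicity of maturity in $t$ ``implies the following corollary'' without spelling out the details). One minor remark: the explicit appeal to Lemma~\ref{lem: Ot unique} in your second step is not strictly needed, since once $O^t$ is shown to be $t'$-covered, the inclusion-maximality of $O^{t'}$ (whose well-definedness already rests on that lemma) yields $O^t \subseteq O^{t'}$ directly; but your route via the union is equally valid and perhaps more transparent.
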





\subsection{An Online Competitive Algorithm}  

Without loss of generality, we can assume that
$\min_{v\in\calT-\braced{r}} \length_v > 1$; otherwise the distances
together with the waiting costs can be rescaled to satisfy this
property. To simplify the presentation we will assume that for
$\expiration\to\infty$ the optimum cost grows to $\infty$.  (Any
instance can be modified to have this property, without changing the
behavior of the algorithm on $\calT$, by adding an infinite path to
the root of $\calT$, where the nodes on this path have waiting cost
functions that are initially $0$ and then gradually increase.)


\paragraph{Algorithm~$\algDoubling$.}
For any $i\ge 0$, define $t_i$ to be the first time when $\opt(t_i) = 2^i$.
At each time $t_i$ serve $O^{t_{i+1}}$.

\smallskip

Algorithm~$\algDoubling$ is in essence a doubling algorithm~\cite{doubling-sigact}. However, although 
obtaining \emph{some} constant ratio using doubling is not difficult, the formulation that achieves 
the optimal factor of $4$ relies critically on the structure of optimal solutions that we elucidated 
earlier in this section. For example, note that the sequence of service costs of the algorithm 
does not necessarily grow exponentially.


\paragraph{Analysis.}

By our assumption that $\min_{v\in\calT-\braced{r}} \length_v > 1$, we have
$O^{t_0} = \braced{r}$; that is, until time $t_0$ the optimum solution will not make any services and will
only pay the waiting cost. This also implies that $\omega(O^{t_0},t_0)\le 1$.

We now estimate the cost of Algorithm~$\algDoubling$, for a given expiration time $\expiration$.
Suppose first that $\expiration = t_k$, by which we mean that the expiration is right after the algorithm's service at 
time $t_k$. The total service cost of the algorithm is trivially $\sum_{i=0}^{k} \length(O^{t_{i+1}})$.
To estimate the waiting cost, consider some node $v$. If $v\in O^{t_{i+1}}-O^{t_i}$, for some $i = 0,...,k$, then
the waiting cost of $v$ is $\omega(v,t_i)$. Otherwise, for $v\notin O^{t_{k+1}}$, the waiting cost of $v$
is $\omega(v,\theta) = \omega(v,t_k)$.
Thus {\algDoubling}'s total cost is
	\begin{align*}
		\alg(t_k) &= 		\sum_{i=0}^{k} \length(O^{t_{i+1}})
		+  \omega(O^{t_0},t_0) + \sum_{i=0}^{k} \omega(O^{t_{i+1}}-O^{t_i} , t_i) + \omega(\barO^{t_{k+1}},t_k) 
		\\
		&\le	\sum_{i=0}^{k} \big[\, \length(O^{t_{i+1}})	+  \omega(\barO^{t_i},t_i) \,\big] + 1
		\\
		&\le	\sum_{i=0}^{k+1} \big[\, \length(O^{t_{i}})	+  \omega(\barO^{t_i},t_i) \,\big] + 1
		\\
		&=  \sum_{i=0}^{k+1} \opt(t_i) + 1 = \sum_{i=0}^{k+1} 2^i + 1
						= 2^{k+2} = 4\cdot\opt(t_k),
	\end{align*}
as needed. 

Next, suppose that $\expiration$ is between two service times, say $t_k\le \expiration \le t_{k+1}$.
From the optimality of $O^{t_k}$ at expiration time $t_k$, we have
$\opt(t_k)= \length(O^{t_k}) + \omega(\barO^{t_k},t_k) 
			\le \length(O^\expiration) + \omega(\barO^\expiration,t_k)$.
Using this bound,
the increase of the optimum cost from time $t_k$ to time $\expiration$ can be estimated 
as follows:
\begin{align*}
	\opt(\expiration) - \opt(t_k) &\ge	
							\big[\, \length(O^\expiration) + \omega(\barO^\expiration,\expiration) \,\big]
		   					- \big[\, \length(O^\expiration) + \omega(\barO^\expiration,t_k) \,\big]
		\\
		&=  \omega(\barO^\expiration,\expiration) - \omega(\barO^\expiration,t_k)
		\ge  \omega(\barO^{t_{k+1}},\expiration) - \omega(\barO^{t_{k+1}},t_k),
\end{align*}
where the last expression is the increase in Algorithm~$\algDoubling$'s cost from time
$t_k$ to time $\expiration$. This implies that the ratio at expiration time $\expiration$ cannot be larger than the
ratio at expiration time $t_k$.

Finally, we have the case when $0 \le \expiration < t_0$. Thus $\opt(\expiration) < 1$. By our assumption that all
weights are greater than $1$, this implies that $\opt(\expiration) = \omega(\calT,\expiration)$, and thus
$\opt(\expiration)$ is the same as the cost of the algorithm.


Summarizing, we obtain our main result of this section.

\begin{theorem}\label{thm:doubling-for-phase}
	{\algDoubling} is $4$-competitive for the Single-Phase $\MLAP$.
\end{theorem}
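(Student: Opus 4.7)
The plan is to exploit the characterization of optimal single-phase solutions developed earlier in this section, in particular the fact that the optimal service sets $O^t$ are nested in $t$ (Corollary~\ref{cor:optima-grow}) and that $\opt(t)=\length(O^t)+\omega(\barO^t,t)$. The overall structure is a doubling-style analysis, but what makes the clean factor $4$ possible is that {\algDoubling} serves exactly $O^{t_{i+1}}$ at time $t_i$, so the algorithm's services fit together in a monotone family.

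First I would handle the cleanest case: an expiration time $\expiration = t_k$ equal to a service time of the algorithm. The total service cost is $\sum_{i=0}^{k}\length(O^{t_{i+1}})$. For the waiting cost I would classify each node $v$ according to the earliest time the algorithm serves it: if $v\in O^{t_{i+1}}-O^{t_i}$ then $v$ pays $\omega(v,t_i)$, and if $v\notin O^{t_{k+1}}$ then $v$ pays $\omega(v,t_k)$. Using monotonicity $O^{t_0}\subseteq O^{t_1}\subseteq\cdots$ and reindexing the sum, the per-node terms regroup as $\sum_{i} \omega(\barO^{t_i},t_i)$, plus a small boundary term from $v=r$ bounded by $\omega(O^{t_0},t_0)\le 1$ (thanks to the rescaling assumption that every edge weight exceeds $1$, which forces $O^{t_0}=\{r\}$). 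Combining service and waiting costs, each index $i$ contributes at most $\length(O^{t_i})+\omega(\barO^{t_i},t_i)=\opt(t_i)=2^i$, giving $\alg(t_k)\le \sum_{i=0}^{k+1}2^i+1 = 2^{k+2} = 4\cdot\opt(t_k)$.

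Next I would extend to an arbitrary $\expiration$. For $\expiration<t_0$ the algorithm has not issued any service yet and pays $\omega(\calT,\expiration)$, which equals $\opt(\expiration)$ since the assumption on weights makes every non-trivial service unprofitable before $t_0$. For $t_k<\expiration<t_{k+1}$, I would argue that the ratio $\alg/\opt$ is non-increasing on this interval. Between $t_k$ and $\expiration$, the algorithm's extra cost is exactly $\omega(\barO^{t_{k+1}},\expiration)-\omega(\barO^{t_{k+1}},t_k)$ (no service fires). For the optimum, I would use that $O^{t_k}$ is feasible for expiration $\expiration$, so $\opt(\expiration)-\opt(t_k)$ is at least the increment $\omega(\barO^\expiration,\expiration)-\omega(\barO^\expiration,t_k)$, and Corollary~\ref{cor:optima-grow} ($O^{t_{k+1}}\supseteq O^\expiration$) would let me lower-bound this by the algorithm's increment. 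Hence the absolute gap $\alg(\expiration)-\opt(\expiration)$ does not grow beyond $\alg(t_k)-\opt(t_k)\le 3\cdot\opt(t_k)\le 3\cdot\opt(\expiration)$.

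The main obstacle I anticipate is the bookkeeping in the first step: it is tempting but wrong to simply geometric-series-sum the service costs $\length(O^{t_{i+1}})$, since these need not double. The key observation that unlocks the clean bound is that, because of the nesting $O^{t_i}\subseteq O^{t_{i+1}}$, one can pair each algorithmic service $\length(O^{t_{i+1}})$ with the waiting cost $\omega(\barO^{t_{i+1}},t_{i+1})$ at the \emph{next} geometric milestone, so that their sum telescopes exactly into $\opt(t_{i+1})=2^{i+1}$. Once this pairing is set up correctly, the factor $4$ falls out of $2^{k+2}/2^k$, and the only remaining care is in the $v=r$ boundary term which is absorbed by the constant additive $1$.
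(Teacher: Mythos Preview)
Your proposal is correct and follows essentially the same route as the paper: bound $\alg(t_k)$ by regrouping service and waiting costs into $\sum_{i=0}^{k+1}\opt(t_i)+1=2^{k+2}$, then handle $\expiration\in(t_k,t_{k+1})$ by comparing increments of $\alg$ and $\opt$, and treat $\expiration<t_0$ separately. One small slip worth fixing: in the intermediate case you write ``$O^{t_k}$ is feasible for expiration $\expiration$'', but that yields an \emph{upper} bound on $\opt(\expiration)$, the wrong direction; what you actually need (and what the paper uses) is the optimality of $O^{t_k}$ at time $t_k$, i.e., $\opt(t_k)\le \length(O^\expiration)+\omega(\barO^\expiration,t_k)$, from which your stated lower bound $\opt(\expiration)-\opt(t_k)\ge \omega(\barO^\expiration,\expiration)-\omega(\barO^\expiration,t_k)$ follows directly.
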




\subsection{An Offline Polynomial-Time Algorithm}
\label{sec: 1p-mlap with deadlines}

The offline algorithm for computing the optimal solutions is based on
the above-established properties of optimal sets $O^t$.
It proceeds bottom up,
starting at the leaves, and pruning out subtrees that are not
$t$-covered.  The pseudo-code of our algorithm is shown below.


\begin{algorithm}[ht]
\caption{{\algCovSubT$(v,t)$}}\label{alg:paidsubtree}
\begin{algorithmic}[0]
\State $A_v \gets \braced{v}$
\State $\surplus_v \gets \omega(v,t)$
\For{each child $u$ of $v$}
    \State $\left(A_u,\surplus_u\right) \gets \algCovSubT(u,t)$
	\If{$\surplus_u\ge \length_u$}
    	\State $A_v \gets A_v \cup A_u$
    	\State $\surplus_v \gets \surplus_v + \surplus_u - \length_u$
	\EndIf
\EndFor
\State \textbf{return} $\left(A_v,\surplus_v\right)$
\end{algorithmic}
\end{algorithm}

For each node $v$ the algorithm outputs a pair
$(A_v,\surplus_v)$, where $A_v$ denotes the maximal (equivalently w.r.t. inclusion or cardinality)
$t$-covered subtree of $v$ rooted at $v$, and 
$\surplus_v = \omega(A_v,t) - \length(A_v-\braced{v})$, that
is $\surplus_v$ is the ``surplus'' waiting cost of $A_v$ at time $t$. (Note that
we do not account for $\length_v$ in this formula.)
To compute $O^t$, the algorithm returns {\algCovSubT$(r,t)$}.

By a routine argument, the running time of Algorithm~{\algCovSubT} is
$O(N)$, where $N$ is the size of the instance (that is, the number of nodes
in $\calT$ plus the number of requests). Here, we assume that the
values $\omega(v,t)$ can be computed in time proportional to the number of
requests in $v$.


\section{{\MLAP} on Paths}
\label{sec: mlap on paths}

We now consider the case when the tree is just a path. For simplicity we will
assume a~generalization to the~continuous case, that we refer to as 
\emph{the {\MLAP} problem on the line}, when the path is represented by the half-line
$[0,\infty)$; that is the requests can occur at any point $x\in [0,\infty)$. Then
the point $0$ corresponds to the root, each node is a point $x\in [0,\infty)$,
and each service is an interval of the form $[0,x]$.
We say that an algorithm {\em delivers from}~$x$ if it serves the interval $[0,x]$.

We provide several results for the {\MLAP} problem on the line.
We first prove
that the competitive ratio of {\MLAPD} (the variant with deadlines) on
the line is exactly $4$, by providing matching upper and lower
bounds. Then later we will show that the lower bound of $4$ can be
modified to work for {\MLAPL} (that is, for linear waiting costs).


\paragraph{Algorithm~{\DLINE}.}

The algorithm creates a service only when a deadline of a pending request is reached.
If a~deadline of a request at $x$ is reached, then {\DLINE} delivers from $2x$.


\begin{theorem}
Algorithm~{\DLINE} is $4$-competitive for {\MLAPD} on the line.
\end{theorem}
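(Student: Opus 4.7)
The plan is to prove $4$-competitiveness via a charging scheme that maps each service of Algorithm~{\DLINE} to a service of a fixed optimal schedule, carefully controlling the multiplicity of charges to a single optimal service. First I would enumerate the services of {\DLINE} in time order as $S_1, S_2, \ldots$, where $S_i$ is triggered at time $t_i$ by a pending request at a point $x_i$ whose deadline equals $t_i$ (breaking ties arbitrarily if several deadlines coincide). By definition of the algorithm, $S_i$ delivers from $2x_i$, so its cost is $2x_i$, and feasibility is immediate, as $x_i \in [0, 2x_i]$. Fix an optimal schedule $\optschedS$; since the request at $x_i$ must be served by $\optschedS$ no later than $t_i$, I choose any $\optschedS$-service $O_i$ serving this request, let $\tau_i \leq t_i$ be its time, and let $z_i \geq x_i$ be its furthest served point, so that $\cost(O_i) = z_i$. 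The charge is: assign a weight of $2x_i$ from $S_i$ to $O_i$.

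The heart of the argument is a geometric separation lemma: if two distinct algorithm services $S_i, S_j$ with $i < j$ are both charged to the same optimal service (i.e., $O_i = O_j$), then $x_j > 2x_i$. Indeed, $S_i$ covers $[0, 2x_i]$ at time $t_i$, while the request triggering $S_j$ must have arrived by $\tau_i \leq t_i$ (since $O_i = O_j$ serves it at time $\tau_i$) and must still be pending for the algorithm at time $t_j > t_i$ (since it triggers $S_j$ at $t_j$). Were $x_j \leq 2x_i$, then $S_i$ would have already served this request at time $t_i$, contradicting its pending status at $t_j$; hence $x_j > 2x_i$.

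Finally, if algorithm services $S_{i_1}, \ldots, S_{i_k}$ with $i_1 < \cdots < i_k$ are all charged to the same optimal service $O^*$ of cost $z^*$, iterating the lemma yields $x_{i_\ell} \leq x_{i_k}/2^{k-\ell}$ for every $\ell$, so the total charge placed on $O^*$ is
\[
\sum_{\ell=1}^{k} 2\, x_{i_\ell} \;\leq\; 2\, x_{i_k} \sum_{\ell=1}^{k} 2^{-(k-\ell)} \;<\; 4\, x_{i_k} \;\leq\; 4\, z^* \;=\; 4\cdot\cost(O^*).
\]
Summing over all optimal services then gives $\cost(\schedS) \leq 4\cdot\cost(\optschedS)$, where $\schedS$ is the schedule produced by {\DLINE}. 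The main conceptual obstacle was choosing the right charging target: once one commits to mapping each $S_i$ to whichever optimal service actually handles its \emph{specific} trigger request, the geometric separation of trigger points sharing a common optimal charger falls out of the simple observation that $S_i$ clears the interval $[0, 2x_i]$, and the factor $4$ emerges as the sum of the geometric series $2(1 + 1/2 + 1/4 + \cdots)$.
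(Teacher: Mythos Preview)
Your proof is correct and follows essentially the same approach as the paper: both charge each algorithm service $S_i$ (of cost $2x_i$) to an adversary service that handled the triggering request, and both establish the same separation property $x_j > 2x_i$ for algorithm services $S_i, S_j$ (with $i<j$) sharing a common adversary charger. The only cosmetic difference is in the final accounting: the paper charges the cost $2x_i$ to the sub-segment $[x_i/2,x_i]$ of the adversary's service interval and observes that such segments are disjoint (giving a charging rate of $4$ per unit length), whereas you sum the geometric series $2x_{i_k}(1 + 1/2 + 1/4 + \cdots) < 4x_{i_k}$ directly. These are two ways of exploiting the same doubling structure.
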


\begin{proof}
The proof uses a charging strategy. We represent each adversary service, say when the 
adversary delivers from a point $y$, by an interval $[0,y]$. The cost of each
service of {\DLINE} is then charged to a segment of one of those adversary service intervals.

Consider a service triggered by a deadline $t$ of a request $\rho$ at some point $x$.
When serving $\rho$, {\DLINE} delivered from $2x$.
The adversary must have served $\rho$ between its arrival time and its
deadline~$t$. Fix the last such service of the adversary, where at a time $t'\le t$ the
adversary delivered from a point $x'\ge x$. We charge the
cost $2x$ of the algorithm's service to the segment $[x/2,x]$ of the
adversary's service interval $[0,x']$ at time $t'$. 

We now claim that no part of the adversary's service is charged twice.
To justify this claim, suppose that there are two services of {\DLINE}, at 
times $t_1 < t_2$, triggered by requests from points $x_1$ and $x_2$, respectively, 
that both charge to an adversary's service from $x'$ at time $t' \leq t_1$. 
By the definition of charging, the request at $x_2$ was 
already present at time $t'$. As $x_2$ was not served 
by {\DLINE}'s service at $t_1$, it means that $x_2 > 2 x_1$, and thus
the charged segments $[x_1/2,x_1]$ and $[x_2/2,x_2]$ of the adversary service
interval at time $t'$ are disjoint.

Summarizing, for any adversary service interval $[0,y]$, its charged
segments are disjoint. Any charged segment receives the charge equal to
$4$ times its length. Thus this interval receives the total charge
at most $4y$. This implies that the competitive ratio is at most~$4$.
\end{proof}


\paragraph{Lower bounds.}

We now show lower bounds of $4$ for {\MLAPD} and {\MLAPL} on the line.
In both proofs we show the bound for the corresponding
variant of {\SPMLAP}, using a reduction from the online bidding
problem~\cite{doubling-sigact,online-bidding}. Roughly speaking, in online
bidding, for a given universe $\mathcal{U}$ of real numbers, the adversary chooses a secret
value $u \in \mathcal{U}$ and the goal of the algorithm is to find an upper-bound
on $u$. To this end, the algorithm outputs an increasing sequence of numbers
$x_1, x_2, x_3, \ldots$. The game is stopped after the first $x_k$ that is at
least $u$ and the bidding ratio is then defined as $\sum_{i=1}^k x_i / u$.

Chrobak et al.~\cite{online-bidding} proved that the optimal bidding ratio is exactly $4$, even if it is 
restricted to sets $\mathcal{U}$  of the form $\{1,2,\ldots,B\}$, for some integer $B$. 
More precisely, they proved the following result.

\begin{lemma}
\label{lem:online_bidding}
For any $R < 4$, there exists $B > 0$, such that any sequence of integers 
$0 = x_0 < x_1 < x_2 < \ldots < x_{m-1} < x_m = B$ has an index $k \geq 1$ with
$\sum_{i=0}^k x_i > R \cdot (x_{k-1} + 1)$.
\end{lemma}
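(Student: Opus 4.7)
I will reduce the statement to a boundedness claim and prove it via an LP-duality argument applied to the recurrence underlying the tight constraints. The equivalent form is: for every $R<4$ there is a constant $B^*(R)<\infty$ such that any strictly increasing nonnegative sequence $0=x_0<x_1<\cdots<x_m$ satisfying $\sum_{i=0}^k x_i\le R(x_{k-1}+1)$ for every $k\ge 1$ has $x_m\le B^*(R)$. Choosing any integer $B>B^*(R)$ then rules out such a sequence ending in $B$, forcing the violating index $k$ whose existence the lemma asserts.

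The algebraic core is the greedy sequence $\tilde x_k$ defined by $\tilde x_0=0$ together with the tight relation $\sum_{i=0}^k \tilde x_i=R(\tilde x_{k-1}+1)$. Subtracting two consecutive instances of this equality gives the linear recurrence $\tilde x_{k+1}=R\,\tilde x_k-R\,\tilde x_{k-1}$, whose characteristic polynomial $\lambda^2-R\lambda+R$ has discriminant $R^2-4R<0$ exactly when $R<4$. The roots therefore form a complex conjugate pair of modulus $\sqrt R$ and argument $\theta\in(0,\pi/2)$, so $\tilde x_k=A\,R^{k/2}\sin(k\theta+\varphi)$ for some $A,\varphi$. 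In particular $\tilde x_k$ becomes nonpositive after $O(1/\sqrt{4-R})$ steps, and $M(R):=\max_k\tilde x_k$ is finite.

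It remains to dominate every valid sequence by a constant depending only on $R$. The plan is to invoke LP duality: the problem of maximizing $x_m$ subject to the constraints and $x_k\ge 0$ is a linear program whose dual minimizes $R\sum_k y_k$ over $y_k\ge 0$ with $y_m\ge 1$ and $\sum_{k\ge j}y_k\ge R\, y_{j+1}$ for every $j<m$. I would construct a dual solution supported on a window of length $O(1/\sqrt{4-R})$ ending at index $m$, generated by the backward tight recurrence $y_j=R\,y_{j+1}-R\,y_{j+2}$ started from $y_m=1,\ y_{m+1}=0$, and zero-padded as soon as the values would turn negative. Since the backward recurrence shares its characteristic polynomial with the one governing $\tilde x_k$, its nonnegativity window has the same finite length; dual feasibility at the left boundary is automatic, because once $y_{j+1}=0$ the constraint $\sum_{k\ge j}y_k\ge R\,y_{j+1}$ collapses to $\sum_{k\ge j}y_k\ge 0$. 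The resulting dual objective is bounded by a constant depending only on $R$ (essentially $O(M(R))$), yielding the desired $B^*(R)$.

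The main obstacle I expect is the careful index arithmetic in the dual construction, in particular verifying feasibility exactly at the seam where the oscillatory portion meets the zero-padded portion. This seam is tight precisely when the discriminant $R^2-4R$ touches zero, which is the structural reason why no argument of this form can push the threshold beyond $R=4$.
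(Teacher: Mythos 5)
The paper does not contain a proof of this lemma: it is quoted as a known result from Chrobak~et~al.\ (the reference~\texttt{online-bidding}), so there is no ``paper proof'' against which to compare your argument. What you have produced is therefore an independent derivation, and the question is just whether it is correct.

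The LP-duality route you sketch is sound. Two small points are worth making explicit. First, the backward recurrence seeded with $y_{m+1}=0$, $y_m=1$ does not make the dual constraints tight; instead it maintains the invariant $T_j := \sum_{k\ge j} y_k = R\,y_{j+1}+1$ for every $j\le m$ (base $T_m=1$, induction $T_{j-1}=y_{j-1}+T_j=(Ry_j-Ry_{j+1})+(Ry_{j+1}+1)=Ry_j+1$). This unit slack is actually convenient. Second, the ``seam'' feasibility you flagged does go through: let $j_0$ be the largest index at which the recurrence would produce a negative value, and set $y_j=0$ for $j\le j_0$. Then for $j<j_0$ the constraint degenerates to $T_j\ge 0$; for $j=j_0$ one has $T_{j_0}=T_{j_0+1}=Ry_{j_0+2}+1$, while the negativity $y_{j_0}^{\mathrm{rec}}=R(y_{j_0+1}-y_{j_0+2})<0$ gives $y_{j_0+2}>y_{j_0+1}$, so $T_{j_0}>Ry_{j_0+1}$; for $j>j_0$ the invariant gives $T_j=Ry_{j+1}+1>Ry_{j+1}$, and $y_m\ge1$ holds by construction. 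The number of nonzero $y_j$'s and their magnitudes are controlled by $\theta=\arccos(\sqrt R/2)=\Theta(\sqrt{4-R})$, exactly as you say, so the dual objective $R\,T_{\max(1,j_0+1)}$ is a finite function $B^*(R)$ of $R$ alone, uniform in $m$. By weak duality every primal-feasible $x$ (and hence every strictly increasing integer sequence obeying all constraints, since dropping integrality and monotonicity only enlarges the primal) satisfies $x_m\le B^*(R)$; picking an integer $B>B^*(R)$ yields the lemma. The only cosmetic caveat is that the discriminant argument assumes $0<R<4$, but for $R\le 1$ the lemma is immediate (the sum $\sum_{i=0}^k x_i\ge x_k>x_{k-1}\ge R(x_{k-1}+1)$ once $x_{k-1}$ is large). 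In short: your sketch is correct and completes to a full proof, and it is a genuinely different (and self-contained) argument from the one in the cited source, which the present paper does not reproduce.
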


\begin{theorem}
\label{thm:mlapd_path_lower_bound}
There is no online algorithm for {\MLAPD} on the line with competitive ratio smaller than $4$.
\end{theorem}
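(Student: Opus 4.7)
My plan is to follow the two-step template sketched earlier in the excerpt: first establish the lower bound of $4$ for the single-phase variant {\SPMLAPD} on the line, and then lift it to {\MLAPD} via the standard reduction (the ``nested compressed copies'' construction from~\cite{jrp-online-buchbinder,aggregation_wads_2013} that the excerpt recalls in Section~\ref{sec: one-phase MLAP}). The bulk of the argument lives in the first step, where I would reduce to online bidding and invoke Lemma~\ref{lem:online_bidding}.

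Fix $R<4$ and let $B$ be the integer guaranteed by Lemma~\ref{lem:online_bidding}. The adversary's instance for {\SPMLAPD} on the line puts one request at each integer point $x\in\{1,2,\ldots,B\}$, each with deadline equal to its position. Because the online algorithm does not know $\expiration$, it cannot safely skip any deadline: whenever some deadline $t$ is reached with a pending request, the adversary could choose $\expiration$ just after $t$ and make the algorithm's cost infinite. Consequently, at each such moment the algorithm must serve, and since serving from a non-integer point is never better than rounding down, the algorithm's delivery points form a strictly increasing sequence of integers $0=y_0<y_1<y_2<\cdots<y_m=B$, with the $j$-th delivery performed at time $y_{j-1}+1$ from point $y_j$. (If the algorithm would not reach $B$ on its own, choosing $\expiration$ larger than all deadlines forces it to, so we may assume $y_m=B$.)

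Apply Lemma~\ref{lem:online_bidding} to the sequence $0=y_0<y_1<\cdots<y_m=B$ to obtain an index $k\ge 1$ with
\[
\sum_{i=0}^{k} y_i \;>\; R\cdot (y_{k-1}+1).
\]
The adversary then sets $\expiration=y_{k-1}+1+\eps$ for an arbitrarily small $\eps>0$. At this moment the algorithm has already performed its first $k$ services (the $k$-th one happening exactly at time $y_{k-1}+1$) from points $y_1,\ldots,y_k$, so its cost is $\sum_{i=1}^{k}y_i=\sum_{i=0}^{k}y_i$. The offline optimum, on the other hand, can serve all the requests with deadline strictly less than $\expiration$ by a single delivery at time $0$ from the point $y_{k-1}+1$, at cost $y_{k-1}+1$. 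The resulting competitive ratio exceeds $R$, and since $R$ was an arbitrary constant less than $4$, the competitive ratio of any online algorithm for {\SPMLAPD} on the line is at least~$4$.

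Finally, to promote this into a lower bound for {\MLAPD} on the line, I would invoke the reduction outlined in Section~\ref{sec: one-phase MLAP}: replay the above single-phase instance as a sequence of $K$ nested, ``compressed'' phases, each scaled by a factor of $M^i$ in waiting cost and starting at time $(1-M^{-i})\expiration$, so that at the actual expiration time $\expiration$ the adversary can flush all remaining requests with a single cheap service whose cost is negligible compared to a single phase. Letting $K,M\to\infty$ shows that any $R$-competitive algorithm for {\MLAPD} must be at least $R$-competitive on the single-phase instance, up to an additive error that vanishes in the ratio. I expect the main conceptual obstacle to lie here, in verifying that the waiting-cost function associated to a deadline (namely the $0/\infty$ step) is compatible with the scaling and nesting step of that reduction; this is where I would spend the most care, most likely by arguing that the deadlines in the $i$-th copy are simply translated to $(1-M^{-i})\expiration + d_\rho \cdot M^{-i}$, which preserves all the inequalities used above.
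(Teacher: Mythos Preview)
Your proposal is correct and follows essentially the same route as the paper: the identical single-phase instance (one request at each integer $x\in\{1,\ldots,B\}$ with deadline $x$), the same appeal to Lemma~\ref{lem:online_bidding}, the same choice of expiration just after the $k$-th service with $\opt\le x_{k-1}+1$, and the same lift from {\SPMLAPD} to {\MLAPD} via the nested-phases reduction sketched in Section~\ref{sec: one-phase MLAP}. The only cosmetic difference is that the paper does not pin the $k$-th service time to exactly $y_{k-1}+1$ but merely uses $t_k\le x_{k-1}+1$; your worry about scaling deadline waiting costs in the reduction is also unnecessary, since multiplying a $0/\infty$ step by $M^i$ leaves it unchanged and only the time compression (which you identified correctly) matters.
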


\begin{proof}
We show that no online algorithm for {\SPMLAPD} (the deadline variant of {\SPMLAP}) 
on the line can attain competitive ratio smaller than $4$. 
Assume the contrary, i.e., that there exists a deterministic algorithm $\ALG$ that is $R$-competitive,
where $R < 4$. 
Let $B$ be the integer whose existence is guaranteed by Lemma~\ref{lem:online_bidding}.
We create an instance of {\SPMLAPD}, where, at time $0$,
for every $x \in \{1,\ldots,B\}$  there is a~request at $x$ with deadline $x$.

Without loss of generality, {\ALG} issues services only at integer times $1,2,...,B$.
The strategy of {\ALG} can be now defined as a sequence of services at times 
$t_1 < t_2 < \ldots < t_m$, where at time $t_i$ it delivers from 
$x_i \in \braced{t_i,t_i+1,...,B}$.  Without loss of
generality, $x_1 < x_2 < \ldots < x_m$. We may assume that $x_m = B$ (otherwise
the algorithm is not competitive at all); we also add a~dummy service from $x_0 = 0$ at time $t_0 = 0$.

The adversary now chooses some $k \geq 1$ and stops the game at the expiration time that is right after
the algorithm's $k$th service, say $\expiration = t_k+\half$.
{\ALG}'s cost is then $\sum_{i=0}^k x_i$. 
The request at $x_{k-1}+1$ is not served at time $t_{k-1}$, so, to meet the deadline of this request,
the schedule of $\ALG$ must satisfy $t_k \leq x_{k-1}+1$.  
This implies that $\expiration < x_{k-1}+2$, that is,
all requests at points $x_{k-1}+2, x_{k-1}+3,...,B$ expire 
before their deadlines and do not need to be served.
Therefore, to serve this instance, 
the optimal solution may simply deliver from $x_{k-1}+1$ at time~$0$. 
Hence, the competitive ratio of {\ALG} is at least 
$\sum_{i=0}^k x_i / (x_{k-1}+1)$. By Lemma~\ref{lem:online_bidding}, it is possible to choose~$k$
such that this ratio is strictly greater than $R$, a contradiction with 
$R$-competitiveness of {\ALG}.
\end{proof}

Next, we show that the same lower bound applies to {\MLAPL}, the version of
{\MLAP} where the waiting cost function is linear. This improves the lower
bound of $3.618$ from~\cite{aggregation_wads_2013}.


\begin{theorem}
\label{thm:mlapl_path_lower_bound}
There is no online algorithm for {\MLAPL} on the line with competitive ratio smaller than $4$.
\end{theorem}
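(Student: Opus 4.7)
The plan is to follow the template of Theorem~\ref{thm:mlapd_path_lower_bound}, replacing the hard deadlines by suitably scaled linear waiting costs. I would first establish a matching lower bound of $4$ for the single-phase variant $\SPMLAPL$ on the line and then lift it to $\MLAPL$ through the nested-compressed-phases reduction sketched right after the definition of $\SPMLAP$ in Section~\ref{sec: one-phase MLAP}.

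For the $\SPMLAPL$ lower bound I would fix $R<4$, take the integer $B$ provided by Lemma~\ref{lem:online_bidding}, and construct an instance in which, at time~$0$, a large multiplicity $K$ of identical linear-cost requests is placed at each integer position $x\in\{1,\dots,B\}$. Since each individual $\MLAPL$ request has waiting cost $\omega(t)=t$, this simulates a ``heavy'' request at each point, with $K$ chosen later to make the argument tight. Exactly as in the deadline proof, an $R$-competitive online algorithm $\ALG$ is described by a sequence of services $(t_i,x_i)$, $i=1,\dots,m$, with $0=x_0<x_1<\cdots<x_m=B$, augmented by a dummy $(t_0,x_0)=(0,0)$.

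The new ingredient, relative to the deadline proof, is to replace the hard constraint $t_k\le x_{k-1}+1$ by a soft analogue driven by $K$. Running the adversary with expiration $\expiration_k$ just after $t_k$ forces $\ALG$'s cost to be at least the service cost $\sum_{i=0}^k x_i$, whereas an offline schedule that serves the interval $[0,x_{k-1}+1]$ at time~$0$ and lets the tail expire costs at most $(x_{k-1}+1)+K(B-x_{k-1}-1)\expiration_k$. Choosing $K$ large enough to penalize any $\ALG$ that keeps position $x_{k-1}+1$ pending substantially past time $x_{k-1}+1$, and then $\expiration_k$ small enough relative to $1/K$ to make the second term above negligible, one obtains
\[
\sum_{i=0}^k x_i \;\le\; R\cdot\opt(\expiration_k) \;\le\; R\cdot(x_{k-1}+1)+o(1)
\]
for every $k$. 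Lemma~\ref{lem:online_bidding} then yields an index $k$ for which this inequality fails, the desired contradiction.

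Finally, to lift the bound from $\SPMLAPL$ to $\MLAPL$, I would invoke the nested-phases construction; ``multiplying the waiting-cost function by $M^i$ in phase $i$,'' which cannot be done literally with $\MLAPL$'s fixed linear slope, is emulated by placing $M^i$ copies of every request in phase $i$. With phase $i$ running from time $(1-M^{-i})\Theta$ to $\Theta$, the per-position waiting cost accrued during that phase is $M^i\cdot M^{-i}\Theta=\Theta$, matching the required scaling, while a single service at time $\Theta$ discharges all pending requests of all phases at bounded additive cost; hence any $R$-competitive $\MLAPL$ algorithm restricts, up to a vanishing additive term, to an $R$-competitive $\SPMLAPL$ algorithm on each embedded phase. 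The main obstacle is the soft-deadline step above: unlike deadlines, linear waiting costs accumulate continuously at every position simultaneously, so both $\ALG$'s waiting on served requests and $\opt$'s waiting on the unserved tail must be jointly controlled through the choice of $K$ and $\expiration_k$ in order to keep the lower bound tight at $4$ rather than at some smaller constant.
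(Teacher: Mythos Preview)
Your overall strategy---reduce to \SPMLAPL, invoke Lemma~\ref{lem:online_bidding}, and lift via nested phases---matches the paper. The lifting step (emulating multiplication by $M^i$ with $M^i$ copies) is fine. The gap is in your \SPMLAPL\ instance: placing the \emph{same} multiplicity $K$ at every integer position cannot yield a lower bound of $4$.

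The two requirements you impose on $K$ and $\expiration_k$ are incompatible. You need $K\,(B-x_{k-1}-1)\,\expiration_k = o(1)$ so that $\opt(\expiration_k)\approx x_{k-1}+1$, but $\expiration_k>t_k$ and $t_k$ is chosen by \ALG, not by the adversary. With uniform weight $K$ at every position, the positions $x_{k-1}+1,\dots,B$ are all equally urgent, so \ALG's waiting cost on this entire tail up to time $t_k$ is $K(B-x_{k-1})\,t_k$, while \OPT's waiting on its tail $\{x_{k-1}+2,\dots,B\}$ is $K(B-x_{k-1}-1)\,\expiration_k$. These two quantities are of the same order; the waiting-cost ratio is therefore roughly $1$, not $4$. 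Once \OPT's waiting term is comparable to \ALG's, the combined ratio $(\text{\ALG\ service}+\text{\ALG\ wait})/(\text{\OPT\ service}+\text{\OPT\ wait})$ can be driven close to $1$ even when the service-cost ratio exceeds $R$. Concretely, against uniform multiplicities \ALG\ can simply serve $[0,B]$ once at a well-chosen time and be $2$-competitive, so no sequence $x_1<\dots<x_m$ for the bidding lemma is forced.

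The paper fixes exactly this by making the multiplicities \emph{geometrically decreasing}: put $6^{B-x}$ requests at position $x$. Then the single position $x_{k-1}+1$ carries weight $6^{B-x_{k-1}-1}$, while the entire tail $\{x_{k-1}+2,\dots,B\}$ carries total weight at most $\tfrac{1}{5}\cdot 6^{B-x_{k-1}-1}$. Hence \ALG's waiting (already on that one position) is at least $5$ times \OPT's total tail waiting, so the waiting-cost ratio exceeds $4$ automatically. Combining this with the service-cost ratio $\sum_{i\le k}x_i/(x_{k-1}+1)>R$ supplied by Lemma~\ref{lem:online_bidding} gives an overall ratio exceeding $R$. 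No ``soft deadline'' argument is needed; the geometric weights decouple the waiting comparison from the choice of $K$ and $\expiration_k$.
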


\begin{proof}
Similarly to the proof of Theorem~\ref{thm:mlapd_path_lower_bound}, we create an instance of 
{\SPMLAPL} (the variant of {\SPMLAP} with linear waiting cost functions)
that does not allow a better than $4$-competitive online algorithm.
Fix any online algorithm {\ALG} for $\SPMLAPL$
and, towards a contradiction, suppose that it is $R$-competitive, for some $R < 4$.
Again, let $B$ be the integer whose existence is guaranteed by Lemma~\ref{lem:online_bidding}.
In our instance of {\SPMLAPL}, there are $6^{B-x}$ requests at $x$ for any 
$x \in \braced{1, 2,\ldots, B}$. 

Without loss of generality, we make the same assumptions as in the proof of
Theorem~\ref{thm:mlapd_path_lower_bound}: algorithm {\ALG} is defined by
a sequence of services at times $0 = t_0 < t_1 < t_2 < \ldots < t_m$,
where at each time $t_i$ it delivers from some point $x_i$. Without loss of generality,
we can assume that $0 = x_0 < x_1 < \ldots < x_m = B$. 

Again, the strategy of the adversary is to stop the game at some expiration time $\expiration$
that is right after some time $t_k$, say $\expiration = t_k + \epsilon$, for some small $\epsilon > 0$.
The algorithm pays $\sum_{i=0}^k x_i$ for serving the requests. The  
requests at $x_{k-1} + 1$ waited for time~$t_k$ in {\ALG}'s schedule 
and hence {\ALG}'s waiting cost is at least $6^{B-x_{k-1}-1} \cdot t_k$. 

The adversary delivers from point $x_{k-1} + 1$ at time $0$.
The remaining, unserved requests at points $x_{k-1} + 2, x_{k-1} + 3, \ldots, B$
pay time $\expiration$ each for waiting. 
There are $\sum_{j=x_{k-1}+2}^B 6^{B-j} 
			\leq  \onefifth\cdot 6^{B-x_{k-1}-1}$ 
such requests and hence the adversary's waiting cost is at most 
$\onefifth\cdot 6^{B-x_{k-1}-1} \cdot (t_k+\epsilon)$.

Therefore, the algorithm-to-adversary ratio on the waiting costs is 
at least $5t_k/(t_k+\epsilon)$. For any $k$ we can choose a sufficiently small $\epsilon$
so that this ratio is larger than $4$.
By Lemma~\ref{lem:online_bidding}, it is possible to choose~$k$ for which
the ratio on servicing cost is strictly greater than $R$. 
This yields a contradiction to the $R$-competitiveness of {\ALG}.
\end{proof}

We point out that
the analysis in the proof above gives some insight into the behavior of any
$4$-competitive algorithm for {\SPMLAPL} (we know such an algorithm
exists, by the results in Section~\ref{sec: one-phase MLAP}), namely that,
for the type of instances used in the above proof,
its waiting cost must be negligible compared to the service cost.


\section{An Offline 2-Approximation Algorithm for {\MLAPD}}
\label{sec: mlap with deadlines}

In this section we consider the offline version of $\MLAPD$, for which Becchetti~\etal~\cite{packet-aggregation-becchetti}
gave a polynomial-time $2$-approximation algorithm based on LP-rounding. We give a
much simpler argument that does not rely on linear programming. 

We will use an alternative specification of schedules that
is easier to reason about in the context of offline approximations.
If $\schedS$ is a schedule, for each node
$x\in \calT$ we can specify the set $\schedS_x$ of times $t$ for which
$\schedS$ contains a service $(X,t)$ with $x\in X$. Then the set
$\braced{\schedS_x}_{x\in \calT}$ uniquely determines $\schedS$.  Note
that we have $\schedS_x\subseteq \schedS_y$ whenever $y$ is the parent
of $x$.  Further, we can now write the service cost as
$\scost(\schedS) = \sum_{x\in\calT} |\schedS_x|\length_x$.  It is easy to
see that (without loss of generality) in an optimal (offline) schedule
$\schedS$ each service time is equal to some deadline, and we will
make this assumption in this section; in particular,
$\schedS_{r}$ can be assumed to be the set of all deadlines.

Let $\calJ$ be the given instance. For each node $v$, define $\calJ_v$ to be the set of all intervals $[a_\rho,d_\rho]$, 
for requests $\rho$ issued in $\calT_v$. 


\paragraph{Algorithm~$\algLBL$.}

We proceed level by level, starting at the root and in order of increasing depth, computing the service 
times $\schedS_v$ for all nodes $v \in \calT$. 
For the root $r$, $\schedS_r$ is the set of the deadlines of all requests.
Consider now some node $v$ with parent $u$ for which $\schedS_u$ has already been computed. 
Using the standard earliest-deadline algorithm, compute $\schedS_v$ as the minimum cardinality subset of $S_u$
that intersects all intervals in $\calJ_v$.

\medskip


Algorithm~$\algLBL$ clearly runs in polynomial time; in fact it can be implemented in time 
$O(N\log N)$, where $N$ is the total size of $\calJ$.

We now show that the approximation ratio of Algorithm~$\algLBL$ is at most $2$. (It is easy to find an
example showing that this ratio is not better than $2$.)
Denote by $\optschedS$ an optimal schedule for $\calJ$. According to our
convention, $\optschedS_v$ is then the set of times when $v$ is served in $\optschedS$. Since
$\cost(\schedS) = \sum_v \length_v|\schedS_v|$ and the optimum cost 
is $\cost(\optschedS) = \sum_v \length_v|\optschedS_v|$, it
is sufficient to show that $|\schedS_v|\le 2|\optschedS_v|$ for each $v\neq r$.
This is quite simple: if $u$ is the father of $v$ then $\schedS_u$ intersects all intervals in $\calJ_v$. 
We construct $\schedS'_v\subseteq \schedS_u$ as follows.
For each $t\in \optschedS_v$, choose the maximal $t^-\in \schedS_u$ such that  $t^-\le t$, and
the minimal $t^+\in \schedS_u$ such that  $t^+\ge t$.
Add $t^-, t^+$ to $\schedS'_v$. (More precisely, each of them is added only if it is defined.)
Then  $\schedS'_v\subseteq \schedS_u$ and $|\schedS'_v| \le 2|\optschedS_v|$.
Further, any interval $[a_\rho,d_\rho]\in \calJ_v$ contains some $t\in \optschedS_v$ and intersects $\schedS_u$,
so it also must contain either $t^-$ or $t^+$. Therefore  $\schedS'_v$ intersects all intervals in $\calJ_v$. Since
we pick $\schedS_v$ optimally from $\schedS_u$, we have $|\schedS_v|\le |\schedS'_v| \le  2|\optschedS_v|$,
completing the proof.


\section{General Waiting Costs}
\label{sec: general waiting costs}

Our model of {\MLAP} assumes full continuity, namely that the time is
continuous and that the waiting costs are continuous functions of
time, while in some earlier literature authors use the discrete model.
Thus we still need to show that our algorithms can be applied in the
discrete model without increasing their competitive ratios. We also
consider the model where some request may remain unserved.  We
explain how our results can be extended to these models as well. 
We will also show that our results can be extended to functions that
are left-continuous, and that {\MLAPD} can be represented as a special
case of {\MLAP} with left-continuous functions. While those reductions
seem intuitive, they do involve some pesky technical challenges, and they
have not been yet formally treated in the literature.


\paragraph{Extension to the discrete model.} 

In the discrete model (see~\cite{jrp-online-buchbinder}, for example), 
requests arrive and services may happen only at integral
points $t=1,\ldots,H$, where $H$ is the time horizon. The waiting
cost functions $\omega_\rho$ are also specified only at integral
points. (The model in~\cite{jrp-online-buchbinder} also allows waiting costs to be
non-zero at the release time. However we can assume that $\omega_\rho(a_\rho)=0$, since
increasing the waiting cost function uniformly by an additive constant can only decrease the
competitive ratio.)

We now show how to simulate the discrete time model in the model where time
and waiting costs are continuous. Suppose that $\calA$ is an
$R$-competitive online algorithm for the model with continuous time and continuous 
waiting cost functions.  We construct an $R$-competitive
algorithm $\calB$ for the discrete time model.

Let $\calJ=\angled{\calT,\calR}$ be an instance given to $\calB$. We
extend each waiting cost function $\omega_\rho$ to non-integral
times as follows: for each integral $t=a_\rho,\ldots,H-1$ we define
$\omega_\rho(\tau)$ for $\tau\in(t,t+1)$ so that it continuously
increases from $\omega_\rho(t)$ to $\omega_\rho(t+1)$ (e.g., by linear interpolation);
$\omega_\rho(\tau)=0$ for all $\tau<a_\rho$; and
$\omega_\rho(\tau)=\omega_\rho(H)$ for all $\tau>H$. 

Algorithm $\calB$ presents the instance $\calJ=\angled{\calT,\calR}$
with these continuous waiting cost functions to $\calA$. At each
integral time $t=1,\ldots,H-1$, $\calB$ simulates $\calA$ on the whole
interval $[t,t+1)$.  If $\calA$ makes one or more services, $\calB$
makes a single service at time $t$ which is their union. This is
possible, since no request arrives in $(t,t+1)$. At time $H$,
algorithm $\calB$ issues the same service as $\calA$.

Overall, $\calB$ produces a feasible schedule in the discrete time model. 
The cost of $\calB$ does not exceed the cost of
$\calA$.  On the other hand, any feasible (offline) schedule $\schedS$
in the discrete time model is also a feasible schedule in the
continuous time model with the same cost. Thus $\calB$ is
$R$-competitive.


\paragraph{Unserved requests with bounded waiting costs.}

In our definition of {\MLAP} we require that all the requests are
eventually served. However, if the waiting cost of a request $\rho$ is
bounded, it is natural to allow a possibility that $\rho$ is not
served in  a schedule $\schedS$; in that case it incurs waiting cost 
$\wcost(\rho,\schedS)=\lim_{t\rightarrow+\infty}\omega_\rho(t)$. In
this variant, there is no time horizon in the instance.

Our algorithm {\OnAlgTreesGeneral} works in this model as well, with
the competitive ratio increased at most by one. The only modification of
the algorithm is that there is no final service at the time
horizon.  Instead we let the time proceed to infinity, issuing
services at the maturity times of $q$ (the quasi-root of $\calT$). 

To modify our charging scheme to this variant, the key observation is
that if a node $v$ is never serviced both in {\OnAlgTreesGeneral}
and in an optimal schedule $\optschedS$, then the requests at $v$ pay
the same waiting costs in both schedules. Thus we can ignore such
nodes and requests at them. We claim that for each remaining node
$v$, the pseudo-schedule $\pseudoschedS$ contains at least one
pseudo-service of $v$: Indeed, otherwise $v$ is not served in
$\optschedS$ and the total (limit of the) waiting cost of all the
(unserved) requests in the induced subtree $\calT_v$ is less than $\length_v$,
which implies that the maturity time of $v$ is always infinite and
thus $v$ is never serviced in {\OnAlgTreesGeneral} either,
contradicting the fact that $v$ was not ignored before. Now consider
all the remaining unserved requests and add to the schedule of
{\OnAlgTreesGeneral} one last service that serves all these requests.
As the unserved requests do not cause $q$ to mature, this increases
the cost of {\OnAlgTreesGeneral}; at the same time the service of
each node can be charged to a pseudo-service of the same node
in $\pseudoschedS$, which increases the competitive ratio by at most~1.


\paragraph{Extension to left-continuous waiting costs.}  

We now argue that we can modify our algorithms to handle
\emph{left-continuous} waiting cost functions, i.e., functions that
satisfy $\lim_{\tau \nearrow t} \omega_\rho(\tau) = \omega_\rho(t)$
for each time $t\ge 0$.  Left-continuity enables an online algorithm
to serve a request at the last time when its waiting cost is at or
below some given threshold.

Some form of left-continuity is also necessary for constant
competitiveness.  To see this, think of a simple example of a tree of
depth $1$ and with $\length_q = 1$, and a sequence of requests in $q$ with release times
approaching $1$, and waiting cost functions defined by
$\omega_\rho(1)=K\gg 1$ and $\omega_\rho(t)=0$ for $t<1$. If
an online algorithm serves one such request before time $1$, the
adversary immediately releases another. The sequence stops either
after $K$ requests or after the algorithm serves some
request at or after time $1$, whichever comes first. 
The optimal cost is at most $\length_q=1$,
while the online algorithm pays at least $K$.

The basic (but not quite correct) idea of our argument for left-continuous waiting
cost functions is this:
For any time point $h$ where some waiting cost function has a discontinuity,
we replace point $h$ by a ``gap interval'' $[h,h+\epsilon]$, for some $\epsilon > 0$. 
The release times after time $h$ and the values of all
waiting cost functions after $h$ are shifted to the right by $\epsilon$.
In the interval $[h,h+\epsilon]$, for each request $\rho$, its waiting
cost function is filled in by any non-decreasing continuous curve with value
$\nu^-$ at $h$ and $\nu^+$ at $h+\epsilon$, for $\nu^- = \omega_\rho(h)$
and $\nu^+ = \lim_{\tau\searrow h} \omega_\rho(\tau)$.
Thus the waiting cost functions that are continuous at $h$ are simply
``stretched'' in this gap interval, where their values remain constant.
This will convert the original instance $\calJ$ into an instance
$\calJ'$ with continuous waiting cost functions; then we can apply
a simulation similar to the one for the discrete model, with the
behavior of an algorithm $\calA$ on $\calJ'$ inside $[h,h+\epsilon]$ mimicked
by the algorithm $\calB$ on $\calJ$ while staying at time $h$.

The above construction, however, has a flaw: as $\calB$ is online, for each
newly arrived request $\rho$ it would need to know the future requests
in order to correctly modify $\rho$'s waiting cost function (which needs
to be fully revealed at the arrival time). Thus, inevitably, $\calB$ will
need to be able to modify waiting cost functions of earlier
requests, but the current state of $\calA$ may depend on these functions.
Such changes could make the computation of $\calA$ meaningless. 
To avoid this problem, we
will focus only on algorithms $\calA$ for continuous cost functions that
we call \emph{stretch-invariant}. Roughly, those are algorithms whose
computation is not affected by the stretching operation described above.

To formalize this, let $\bbI = \braced{[h_i,h_i+\eps_i] \mid i=1,\ldots, k}$
be a finite set of \emph{gap intervals}, where all times $h_i$ are distinct.
(For now we can allow the $\eps_i$'s to be any positive reals; their purpose will
be explained later.)
Let $\shift(t,\bbI)=t+\sum_{i:h_i<t}\eps_i$ denote the time $t$ shifted right by
inserting intervals $\bbI$ on the time axis. We extend this operation to
requests in a natural way: for any request $\rho$ with a 
continuous waiting cost function,
$\shift(\rho,\bbI)$ denotes the request modified by inserting $\bbI$
on the time axis and filling in the values of $\omega_\rho$ in the
inserted intervals by constant functions, as described earlier. For a set
of requests $P\subseteq \calR$,  the stretched set of requests
$\shift(P,\bbI)$ is the set consisting of requests $\shift(\rho,\bbI),$ for
all $\rho\in P$.

Consider an online algorithm $\calA$ for {\MLAP} with
continuous waiting cost functions.
We say that $\calA$ is {\em stretch-invariant} if for every instance $\calJ =
\angled{\calT,\calR}$ and any set of gap intervals $\bbI$, the
schedule produced by $\calA$ for the instance
$\angled{\calT,\shift(\calR,\bbI)}$ is obtained from the schedule
produced by $\calA$ for $\calJ$ by shifting it according to $\bbI$,
namely every service $(X,t)$ is
replaced by service $(X,\shift(t,\bbI))$.

Most natural algorithms for {\MLAP} are stretch-invariant. In case of
{\OnAlgTreesGeneral}, observe that its behavior depends only on the
maturity times $M_P(v)$ where $P$ is the set of pending requests and
$M_{\shift(P,\bbI)}(v)=\shift(M_P(v),\bbI)$; in particular
stretching does not change the order of the maturity times. 
Using induction on the current time $t$, we observe 
{\OnAlgTreesGeneral} creates a service $(X,t)$ in its schedule for the request set $\calR$
if and only if
{\OnAlgTreesGeneral} creates a service $(X,\shift(t,\bbI))$ in its schedule
for the request set $\shift(\calR,\bbI)$.

Suppose that $\calA$ is an $R$-competitive online
algorithm for continuous waiting cost functions that is stretch-invariant.  
We convert $\calA$ into an
$R$-competitive algorithm $\calB$ for left-continuous waiting costs.
Let $\calJ=\angled{\calT,\calR}$ be an instance given to $\calB$.
Algorithm $\calB$ maintains the set of gap intervals $\bbI$, and
a set of requests $\calP$ presented to $\calA$; both sets are
initially empty.  Algorithm $\calB$ at time $t$ simulates the
computation of $\calA$ at time $\shift(t,\bbI)$.

If a new request $\rho\in\calR$ is released at time $t=a_\rho$,
algorithm $\calB$ obtains $\rho'$ from $\shift(\rho,\bbI)$ by
replacing the discontinuities of $\omega_\rho$ by new gap intervals $\bbI_\rho$
on which $\omega_{\rho'}$ is defined so that it continually
increases. (If a gap interval already exists in $\bbI$ at the
given point, it is used instead of creating a new one, to maintain the
starting points distinct.) We set
$a_{\rho'}=\shift(t,\bbI)$, which is the current time in $\calA$. We update
$\bbI$ to $\bbI\cup\bbI_\rho$; this does not change the current
time in $\calA$ as all  new gap intervals start at or after
$t$.  We stretch the set of requests $\calP$ by $\bbI_\rho$; this
does not change the past output of $\calA$, because $\calA$ is
stretch-invariant. (Note that the state of $\calA$ at time $t$ may change, but
this does not matter for the simulation.)
Finally, we add the new request $\rho'$ to $\calP$. 

If the current time $t$ in $\calB$ is at a start point of a
gap interval, i.e., $t=h_i$, algorithm $\calB$ simulates the
computation of $\calA$ on the whole shifted gap interval
$\angled{\shift(h_i,\bbI),\eps_i}$. If $\calA$ makes one or more
services in $\angled{\shift(h_i,\bbI),\eps_i}$, $\calB$ makes a
single service at time $t$ which is their union.

The cost of $\calB$ for requests $\calR$ does not exceed the cost of
$\calA$ for requests $\calP$.  Any adversary schedule $\schedS$ for
$\calR$ induces a schedule $\schedS'$ for $\calP$ with the same
cost. Since $\calA$'s cost is at most $R\cdot \cost(\schedS')$, we
obtain that $\calB$'s cost is at most $R\cdot\cost(\schedS)$;
hence $\calB$ is $R$-competitive.

In the discussion above we assumed that the instance has a finite number
of discontinuities. Arbitrary left-continuous waiting
cost functions may have infinitely many discontinuity points, but the
set of these points must be countable. The construction described
above extends to arbitrary left-continuous cost functions, as long as
we choose the $\eps_i$ values so that their sum is finite.


\paragraph{Reduction of {\MLAPD} to {\MLAP}.}

We now argue that $\MLAPD$ can be expressed as a
variant of $\MLAP$ with left-continuous waiting cost functions. The idea
is simple: a
request $\rho$ with deadline $d_\rho$ can be assigned a waiting cost
function $\omega_\rho(t)$ that is $0$ for times $t\in [0,d_\rho]$ and $\infty$
for $t > d_\rho$ -- 
except that we cannot really use $\infty$, so we need to replace it by
some sufficiently large number.
If $\trignode_\rho = v$, we let 
$\omega_\rho(t) = \length^\ast_v$, where $\length^\ast_v$ is the sum of all
weights on the path from $v$ to $r$ (the ``distance'' from $v$ to $r$).
This will convert an instance $\calJ$ of $\MLAPD$ into an
instance $\calJ'$ of $\MLAP$ with left-continuous waiting cost functions.

We claim that, without loss of generality, any online algorithm
$\calA$ for $\calJ'$ serves any request $\rho$ before or at time $d_\rho$.
Otherwise, $\calA$ would have to pay waiting cost of $\length^\ast_v$ for $\rho$
(where $v = \trignode_\rho$),
so we can modify $\calA$ to serve $\rho$ at time $d_\rho$ instead,
without increasing its cost.
We can then treat $\calA$ as an algorithm for $\calJ$.
$\calA$ will meet all deadlines in $\calJ$ and its cost on $\calJ$ will be
the same as its cost on $\calJ'$, which means that
its competitive ratio will also remain the same.

Note that algorithm {\OnAlgTreesGeneral} (or rather its extension to
the left-continuous waiting costs, as described above) does
not need this modification, as it already guarantees that when the
waiting cost of a request at $v$ reaches $\length^\ast_v$, all the nodes on
the path from $v$ to $r$ are mature and thus the whole path is served.

\bibliographystyle{abbrv}
\bibliography{references}

\end{document}